\def\ps@pprintTitle{%
 \let\@oddhead\@empty
 \let\@evenhead\@empty
 \def\@oddfoot{}%
 \let\@evenfoot\@oddfoot}
  \providecommand\BibTeX{{%
    \normalfont B\kern-0.5em{\scshape i\kern-0.25em b}\kern-0.8em\TeX}}}
\newtheorem{theorem}{Theorem}[section]
\newtheorem{proposition}{Proposition}[section]
\newtheorem{lemma}{Lemma}[section]
\theoremstyle{definition}
\newtheorem{definition}{Definition}[section]
\newtheorem{remark}{Remark}[section]
\newtheorem{problem}{Problem}
\newcommand{\sparsity}{\textnormal{sparsity}}
\newcommand{\lightness}{\textnormal{lightness}}
\newcommand{\cost}{\textnormal{cost}}
\newcommand{\val}{\textnormal{value}}
\newcommand{\eps}{\varepsilon}
\newcommand{\R}{\mathbb{R}}
\newcommand{\N}{\mathbb{N}}
\newcommand{\PP}{\mathcal{P}}
\newcommand{\dist}{d}
\newcommand{\In}{\textnormal{In}}
\newcommand{\Out}{\textnormal{Out}}
\newcommand{\OPT}{\textnormal{OPT}}
\newcommand{\ILP}{\textnormal{ILP}}
\newcommand{\diam}{\textnormal{diam}}
\newcommand{\MST}{\textnormal{MST}}
\newcommand{\frag}{\textnormal{frag}}
\newcommand{\ceiling}[1]{\lceil #1 \rceil}
\newcommand{\Prob}{\mathbb{P}}
\newcommand{\E}{\mathbb{E}}
\def\blfootnote{\xdef\@thefnmark{}\@footnotetext}
\let\olditemize\itemize
\let\endolditemize\enditemize
\renewenvironment{itemize}{%
    \smaller
    \olditemize
}{%
    \endolditemize
}
\begin{document}

\title{Graph Spanners: A Tutorial Review}

\author[add1]{Reyan~Ahmed}
\author[add3]{Greg~Bodwin}
\author[add1]{Faryad~Darabi~Sahneh}
\author[add2]{Keaton Hamm}
\author[add2]{Mohammad Javad Latifi Jebelli}
\author[add1]{Stephen~Kobourov}
\author[add1]{Richard~Spence}

\address[add1]{Department of Computer Science, University of Arizona}
\address[add2]{Department of Mathematics, University of Arizona}
\address[add3]{Department of Computer Science, Georgia Institute of Technology}

\begin{abstract}
This survey provides a guiding reference to researchers seeking an
overview of the large body of literature about graph spanners. It surveys the current literature covering
various research streams about graph spanners, such as different formulations, sparsity and lightness results, computational complexity, dynamic algorithms, and applications. As an additional contribution, we offer a list of open problems on graph spanners.
\end{abstract}

\maketitle

\blfootnote{This work is supported in part by NSF grants CCF-1740858, CCF-0721503 and DMS-1839274.}

\tableofcontents

\section{Introduction}
Given a graph $G$, a \emph{graph spanner} (or simply \emph{spanner}) is a subgraph which preserves lengths of shortest paths in $G$ up to some amount of distortion or error, typically multiplicative and/or additive. Spanners were introduced by Peleg and Sch\"{a}ffer~\cite{peleg1989graph}, and computing sparse or low-weight spanners has theoretical and practical applications in various network design problems, including graph compression and those in distributed computing and communication networks~\cite{awerbuch1985complexity,Bhatt1986Optimal}. 
This document provides a survey on the current literature of graph spanners, including different problem variants, hardness results, common techniques, and related open problems. {\color{black}In particular, geometric spanners are de-emphasized in this survey; we instead refer the reader to~\cite{Gudmundsson2008, Narasimhan:2007:GSN:1208237}.} The focus is on a method-based tutorial style which will allow those unfamiliar with the area to survey the problems considered and typical techniques used in the literature to compute spanners.

 \subsection{Organization and Layout}
Throughout this survey, we use an annotated bibliography style in which important references are included within sections or subsections throughout the main text; this is done to enhance the readability of the survey so that the reader can find references quickly without going back and forth to the end of the survey.  Section~\ref{SEC:notation} introduces notation and basic preliminaries which will be used throughout the survey. Section~\ref{SEC:Spanners} gives a precise definition of various spanners considered in the literature as well as classical results and relationships to well-known concepts such as minimum spanning trees and Erd\H{o}s' Girth Conjecture.  Section~\ref{SEC:Complexity} details the complexity and hardness of approximation for various spanner problems. Sections~\ref{SEC:Greedy}--\ref{SEC:ILP}  comprise the bulk of the survey and describe the main techniques brought to bear on spanner problems to date including greedy algorithms, clustering and path buying algorithms, probabilistic constructions, ILP formulations, and LP-based algorithms.  The emphasis of these sections is on illustrating the methods and inclusion of proofs to give the main ideas, as well as on posing open problems related to the works discussed. Section~\ref{SEC:Distributed} concerns distributed algorithms. Section~\ref{SEC:TypesOfSpanners} briefly shows other kinds of spanners considered in the literature, while Section~\ref{SEC:ChangingGraphs} discusses spanners when the underlying graph is allowed to change.  Section~\ref{SEC:RestrictedGraphs} discusses spanners for restricted classes of graphs, and Section~\ref{SEC:Applications} ends the paper with some of the numerous applications of graph spanners.  At the end of the paper, there are several summary tables for guarantees on the size and weight of spanners based on the type of problem.

\subsection{{\color{black}Notation and Preliminaries}} \label{SEC:notation}
Graphs are denoted $G = (V,E)$ containing $|V|=n$ vertices (or nodes) and $|E|=m$ edges, and are assumed undirected and connected unless stated otherwise. {\color{black}We use $uv$ to denote the undirected edge $\{u,v\}$, and $(u,v)$ to denote the directed edge from $u$ to $v$.} We denote by $w_e=w_{uv}$ the weight of edge $e=uv$ (for weighted graphs), and denote by $d_G(u,v)$ the weight of a minimum-weight $u$-$v$ path in $G$ (or, the number of edges in a shortest path if $G$ is unweighted). Given a subset ${\color{black}\bar{E}\subseteq E}$ of edges, its weight is denoted by $W(\bar{E}):=\sum_{e\in \bar{E}}w_e$. In particular, we denote by $W(\MST(G))$ the weight of a minimum spanning tree (\MST) of $G$. 
Given two sets $A,B\subseteq V$, we define the distance from $A$ to $B$ in $G$ as $\textnormal{dist}_{G}(A,B) = \min\{d_G(u,v):u\in A, v\in B\}$; if $A=\{v\}$ for some vertex $v$ (resp. $B=\{v\}$), we will use $\textnormal{dist}_{G}(v,B)$ (resp. $\textnormal{dist}_G(A,v)$).  The diameter of $G$ is denoted by $\diam(G)=\max_{u,v\in V}d_G(u,v)$. {\color{black}For degree-bounded graphs, we typically denote by $\Delta$ the maximum degree of any vertex in the graph.}

Recall that for functions $f,g: \mathbb{R} \to \mathbb{R}_{\ge 0}$, we say that $f(n)=O(g(n))$ if there exist constants $C, n_0 > 0$ such that $f(n) \le Cg(n)$ for $n \ge n_0$. Further, $f=\widetilde{O}(g)$ if {\color{black}$f=O(g\,\textnormal{polylog}\,n)$}\footnote{$\text{polylog\,}n$ denotes a polynomial in terms of $\log n$, i.e., $a_k (\log n)^k + \ldots + a_1 \log n + a_0$.}.  Additionally, $f=O_\eps(g)$ if $f=O(\text{poly}(\eps)g)$, where poly$(\eps)$ is a polynomial in $\eps$; this indicates that $f=O(g)$ for fixed $\eps$, which is not necessarily the case if $\eps$ is allowed to vary.

{\color{black}We assume familiarity with the complexity classes P and NP. The complexity class DTIME$(f(n))$ (resp. NTIME$(f(n))$) represents the set of decision problems decidable by a deterministic (resp. non-deterministic) Turing machine in $O(f(n))$ time. The complexity class BPTIME($f(n)$) represents the set of decision problems $L$ solvable by a probabilistic algorithm in $O(f(n))$ time, such that for all $x \in L$, the algorithm accepts $x$ with probability at least $\frac{2}{3}$, and for all $x \not\in L$, the algorithm accepts $x$ with probability at most $\frac{1}{3}$.

Given an NP--hard optimization problem $P$, let $\mathcal{I}$ be the set of instances of $P$. If $P$ is a minimization problem and $\alpha \ge 1$, we say that a polynomial-time algorithm $A$ is an $\alpha$--approximation if, for all instances $I \in \mathcal{I}$, $A(I)$ returns a feasible solution with cost $c$ such that $\OPT_I \le c \le \alpha \OPT_I$ where $\OPT_I$ denotes the cost of the optimum solution for instance $I$. If $P$ is a maximization problem, then $A$ is an $\alpha$--approximation ($\alpha \le 1$) if $\alpha \OPT_I \le c \le \OPT_I$ for all instances $I$. Note that $\alpha$ may be a function of the size of the instance, or some other parameter related to the instance (such as the maximum vertex degree $\Delta$ of the input graph). In rarer cases, the approximation ratio given may be additive instead of multiplicative.}


\section{Graph Spanners}\label{SEC:Spanners}
Given a graph $G$, possibly edge-weighted, a \emph{graph spanner} (or \emph{spanner} for short) is a subgraph $G'$ which preserves lengths of shortest paths in $G$ up to some error or distortion, e.g., additive and/or multiplicative error.  There are several definitions and formulations of graph spanners.  Most spanners can be specified using two parameters: a \textit{distortion function} $f$ which characterizes how much distances from the original graph are allowed to be distorted, and a subset $P\subseteq V\times V$ which prescribes pairs of vertices of the initial graph for which the distances need to be approximately preserved.  That is, a subgraph $G'=(V',E')$ is a spanner with distortion $f$ for $P\subseteq V\times V$ if
\[ d_{G'}(u,v)\leq f\left(d_G(u,v)\right)\]
for all $(u,v) \in P$. It is worth noting that $d_G(u,v)\leq d_{G'}(u,v)$ for any subgraph $G'$ of $G$; as such, we stipulate that the distortion function satisfies $f(x) \ge x$.

The following are typical choices for the distortion function $f$ {\color{black}\st{(from most to least general)}}.
\begin{itemize}\normalsize
\item \textbf{Multiplicative} (or \textbf{$t$-spanner}): distortion given by $f(x)=tx$ for some constant $t\geq1$, where $t$ is called the stretch factor.  That is, distances are \textit{stretched} by no more than a factor of $t$. Some authors use $k$ instead of $t$ to denote the stretch factor.
\item \textbf{Additive}: distortion given by $f(x) = x+\beta$.  In other words, distances in additive graph spanners are not \textit{elongated} more than $\beta$ units (or edges). These subgraphs preserve long distances with ratio close to 1. Additive spanners are sometimes called $+\beta$--spanners.
\item \textbf{Linear} (or \textbf{$(\alpha,\beta)$-spanner}): distortion given by $f(x) = \alpha x+\beta$ for $\alpha,\beta\geq0$. {\color{black}In particular, a $t$-spanner is a $(t,0)$-spanner, and an additive $+\beta$-spanner is a $(1,\beta)$-spanner.}
\item \textbf{Sublinear:} distortion given by $f(x) = x + o(x)$.  In particular, distortions of the form $f(x)=x+O(x^{1-\frac1k})$ for positive integers $k$ are of particular interest.
Roughly, this is because this distortion function arises as an \emph{information-theoretic} barrier to compressing graph distances; that is, the most space-efficient data structures that approximate graph distances have essentially these error functions (but it is still open whether they are right for \emph{spanners}; see \cite{abboud2018hierarchy} or Section \ref{SEC:emulators} for further discussion).
Like linear error, sublinear error can be viewed as a compromise between multiplicative and additive distortion, as it stretches small distances up to a constant multiplicative factor while the stretch factor for long distances approaches 1.
\item \textbf{Distance Preservers:}  no distortion, i.e., $f(x)=x$.  These are $(1,0)$--spanners.
\end{itemize}
Next, we illustrate different terminologies based on the subset $P\subseteq V\times V$ chosen (from most to least general).
\begin{itemize}\normalsize
\item \textbf{Pairwise Spanners:} the given spanner condition {\color{black}must hold for specific pairs of vertices $P\subseteq V\times V$} (note that $P$ is not necessarily symmetric).  
\item \textbf{Sourcewise Spanners:} the given spanner condition must hold for $P=S\times V$ for some subset $S\subseteq V$.
\item \textbf{S--T (Source--Target) Spanners:}  the given spanner condition must hold for $P=S\times T$ for subsets $S,T\subseteq V$ (not necessarily disjoint).
\item \textbf{Subsetwise Spanners:} the given spanner condition must hold between all vertices in a fixed subset $S\subseteq V$.  That is, subsetwise spanners are pairwise spanners where $P=S\times S\subseteq V\times V$.
\item \textbf{Spanners:} with no additional terminology, it is typically implied that $P=V\times V$, i.e., distances are approximately preserved for all pairs of vertices.
\end{itemize}

Most spanner problems can be specified via the distortion parameters ($\alpha,\beta$) and the pairs $P$ of vertices for which distances must be preserved. To make notation concise, we typically write $(\alpha,\beta,P)$--spanner, or $(f,P)$--spanner for general distortions.

Note also that while all variants of spanners are well-defined for weighted and unweighted graphs, it is somewhat more natural to consider additive spanners when the graph is unweighted.  Indeed, if a graph $G$ is weighted with arbitrarily large edge weights, then there exists $\beta$ such that there is no additive $\beta$--spanner of $G$ {\color{black} except for $G$ itself.}

\subsection{Classical Results} \label{SEC:classical}

\begin{itemize}
\setlength{\itemindent}{.2in}
\item[\cite{peleg1989graph}] \bibentry{peleg1989graph}
\item[\cite{althofer1993sparse}] \bibentry{althofer1993sparse}
\item[\cite{Kortsarz94Algorithms}] \bibentry{Kortsarz94Algorithms}
\end{itemize}

The classical and most common spanner problem is {\color{black} that of computing a sparse multiplicative $t$--spanner of an input graph, often called ``basic $t$-- (or $k$--) spanner'' extensively throughout the literature.} It is usually defined as follows:
\begin{problem}[Basic $t$--Spanner Problem]
Given a connected graph $G=(V,E)$ and a fixed $t\geq1$, find a subset $E'\subseteq E$ such that the subgraph $G'=(V,E')$ of $G$ satisfies
\begin{equation}\label{EQ:tspanner} d_{G'}(u,v)\leq t\cdot d_{G}(u,v),\quad \text{for all } u,v\in V.
\end{equation}
\end{problem}
The notion of a $t$--spanner was introduced by Peleg and Sch\"{a}ffer in \cite{peleg1989graph} (see also Peleg and Ullman \cite{peleg1989optimal}), though the idea also appeared implicitly in earlier work of Awerbuch \cite{awerbuch1985complexity} and Chew \cite{Chew89}.
Peleg and Sch\"{a}ffer \cite{peleg1989graph} show that for unweighted graphs, determining if a $t$--spanner of $G$ containing at most $m$ edges exists is NP--complete.  They also discuss at length a reduction of the $t$--spanner problem to particular classes of graphs -- chordal graphs -- and show that the generic lower bounds for the number of edges required to form a $t$--spanner for an arbitrary graph may be significantly improved for restricted graph classes. 

Note that in order to check whether $G'$ is a $t$--spanner of $G$, one only needs to check that inequality \eqref{EQ:tspanner} holds for all edge pairs $(u,v)$ where $uv \in E$, rather than for all $\binom{|V|}{2}$ vertex pairs, which is stated in the following simple proposition:

\begin{proposition}[{\cite[Lemma 2.1]{peleg1989graph}}]\label{PROP:tspanner verification}
$G'$ is a $t$--spanner of $G$ if and only if $d_{G'}(u,v)\leq t\cdot d_G(u,v)$ for all $uv\in E$.
\end{proposition}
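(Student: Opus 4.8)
The forward implication is immediate from the definition of a $t$--spanner: if $d_{G'}(u,v)\le t\,d_G(u,v)$ holds for every pair $u,v\in V$, then in particular it holds for the sub-collection of pairs with $uv\in E$. So the only real content is the converse, and the plan is to \emph{upgrade} the ``edge version'' of the stretch inequality to the ``all-pairs version'' by concatenating edges along a shortest path and invoking the triangle inequality in $G'$.

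Concretely, I would assume $d_{G'}(x,y)\le t\,d_G(x,y)$ for all $xy\in E$ and fix an arbitrary pair $u,v\in V$. Let $u=x_0,x_1,\dots,x_k=v$ be the vertex sequence of a shortest $u$--$v$ path in $G$, so that $d_G(u,v)=\sum_{i=1}^{k} w_{x_{i-1}x_i}$ (in the unweighted case this is just $d_G(u,v)=k$). Each $x_{i-1}x_i$ is an edge of $G$, so the hypothesis gives $d_{G'}(x_{i-1},x_i)\le t\,d_G(x_{i-1},x_i)\le t\,w_{x_{i-1}x_i}$, where the second inequality uses that the single edge $x_{i-1}x_i$ is itself an $x_{i-1}$--$x_i$ path in $G$. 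Summing over $i$ and using that $d_{G'}$ satisfies the triangle inequality (being the shortest-path metric of the subgraph $G'$) yields
\[
 d_{G'}(u,v)\;\le\;\sum_{i=1}^{k} d_{G'}(x_{i-1},x_i)\;\le\; t\sum_{i=1}^{k} w_{x_{i-1}x_i}\;=\;t\,d_G(u,v),
\]
which is exactly the required inequality; since $u,v$ were arbitrary, $G'$ is a $t$--spanner of $G$.

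I do not expect a genuine obstacle here. The single point deserving a moment of care is the weighted case: one must remember that the total weight of a shortest path equals $d_G(u,v)$, and that an individual path edge may have weight strictly larger than the distance between its endpoints. The displayed chain of inequalities absorbs this automatically, so no case analysis is needed; restricting to unweighted graphs (as in the original statement of Peleg and Sch\"affer) only simplifies matters, since then $d_G(x_{i-1},x_i)=1$ for every path edge.
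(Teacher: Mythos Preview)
Your proof is correct and follows essentially the same approach as the paper: both observe that the forward direction is immediate, and for the converse both take a shortest $u$--$v$ path in $G$, apply the edge hypothesis to each edge along it, and sum using the triangle inequality in $G'$. Your version is slightly more explicit about the weighted case (distinguishing $d_G(x_{i-1},x_i)$ from $w_{x_{i-1}x_i}$), but the argument is identical in substance.
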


\begin{proof}
The forward direction is obvious. For the reverse direction, let $u$ and $v$ be distinct vertices in $V$, and let $u_0 u_1 \ldots u_m$ be a shortest $u$-$v$ path in $G$, where $u_0 = u$ and $u_m = v$.  Then note that by assumption,
\[d_{G'}(u,v) \leq \sum_{i=0}^{m-1} d_{G'}(u_i,u_{i+1}) \leq t\cdot\sum_{i=0}^{m-1} d_G(u_i,u_{i+1}) = t\cdot d_G(u,v).\]
\end{proof}

Thus, one can check the spanner inequality \eqref{EQ:tspanner} for  $|E|$ pairs of vertices rather than for all $\binom{|V|}{2}$ vertex pairs to verify that a given subgraph is a $t$--spanner.

\subsection{Sparsity and Lightness}

Alth\"{o}fer et al. \cite{althofer1993sparse} discuss a specific aspect of the $t$--spanner problem, namely \textit{sparsity}.  The idea is that a good $t$--spanner of a graph should contain very few edges while still approximately preserving distances in $G$. {\color{black}The sparsity of a spanner $G'=(V,E')$ is often defined as} \[ \text{sparsity}(G') = \frac{|E'|}{|E|}. \] 
{\color{black}Thus $0\leq\textnormal{sparsity}(G')\leq 1$ for any subgraph $G'$ of $G$.} Alth\"{o}fer et al. \cite{althofer1993sparse} prove many lower and upper bounds for general graphs on the sparsity of a $t$--spanner, and also give a simple polynomial time algorithm for producing such a sparse spanner.  One advantage of their method is that the algorithm presented provides a $t$--spanner not only with few edges, but also whose weight is comparable to the weight of the minimum spanning tree (in the case that $G$ is a weighted graph).  We will further discuss their results in Section \ref{SEC:Greedy}. 

For weighted graphs, a more natural consideration is the \textit{lightness} of a spanner, which is related to the total weight of the spanner. {\color{black} It is usually compared with the weight of the minimum spanning tree.} We may thus define {\color{black}\[ \textnormal{lightness}(G') = \frac{W(E')}{W(\MST(G))}.\]
Note that $\textnormal{lightness}(G')\geq1$ since the MST is the sparsest subgraph connecting all the nodes.}

Often we are interested in finding the sparsest or lightest 
spanner of a given graph:

\begin{problem}[Sparsest/Lightest Spanner problem]
Given a graph $G=(V,E)$, distortion $f$, and $P\subseteq V\times V$, find $G'=(V',E')$ such that 
\begin{enumerate}
    \item $G'$ is a $(f,P)$--spanner for $G$, and
    \item $\sparsity(G')\leq\sparsity(H)$ for all other $(f,P)$--spanners $H$ if $G$ is unweighted, or
    \item $\lightness(G')\leq\lightness(H)$ for all other $(f,P)$--spanners $H$ if $G$ is weighted.
\end{enumerate}
\end{problem}

Kortsarz and Peleg~\cite{Kortsarz94Algorithms} show that the problem of finding the sparsest 2--spanner of an unweighted graph $G=(V,E)$ admits a polynomial time $\log\left(\frac{|E|}{|V|}\right)$--approximation. Note that this problem is equivalent to finding an edge set $E' \subseteq E$ such that for every edge $e \in E \setminus E'$, there is a triangle (3--cycle) in $G$ containing $e$ whose remaining two edges belong to $E'$.

To describe their approximation algorithm, note that given $U \subseteq V$, the \emph{density} of $U$, denoted $\rho_G(U)$, is defined by $\rho_G(U) = \frac{|E(U)|}{|U|}$, where $(U,E(U))$ is the subgraph of $G$ induced by $U$. The \emph{maximum density} problem is to find vertex subset $U \subseteq V$ such that $\rho_G(U)$ is maximized, which can be solved in polynomial time.

The algorithm given in~\cite{Kortsarz94Algorithms} maintains three sets of edges: $H^s$, the set of edges in the spanner; $H^c$, the set of ``covered edges'' (edges either in the spanner, or edges on a triangle containing two edges in $H^s$); $H^u$, the remaining unspanned edges. Given a set $H^u$ of unspanned edges and vertex $v$, the authors denote by $N(H^u, v)$ the subgraph of $G$ whose vertex set is $N(v)$\footnote{the set of neighbors of $v$ in $G$}, and whose edge set is the set of edges in $G$ induced by $N(v)$, which are also unspanned (i.e., $E(N(v)) \cap H^u$). The authors denote by $\rho(H^u, v)$ the maximum density of this neighborhood graph over all vertices $v \in G$.

The idea of the algorithm is: while $\rho(H^u, v) > 1$, find a vertex $v \in G$ such that $\rho(H^u, v)$ is maximum. Then within this restricted neighborhood graph, solve the maximum density problem to find a corresponding dense subset $U_v$ of neighbors of $v$. Add the edges from $v$ to each $u \in U(v)$ to the set of spanner edges $H^s$. The edges ``covered'' by the newly-added edges in $H^s$ are added to $H^c$, and $H^u$ also updates accordingly.

Since at least one edge is added to $H^c$ at each iteration, the number of iterations is at most $m$. The authors show that, using the maximum density problem as a subroutine, a sparse 2-spanner with approximation ratio $O\left(\log\frac{|E|}{|V|}\right) = O(\log n)$ can be found in $O(m^2n^2 \log(\frac{n^2}{m}))$ time.

\subsection{Relation to Minimum Spanning Trees}
Recall that $G'$ is a \emph{spanning tree} of $G$ if $G'$ is connected, acyclic, and spans all vertices of $G$. The well-known \emph{minimum spanning tree} (MST) problem is to compute a spanning tree of minimum total weight. This can easily be done in polynomial time (e.g., using Kruskal's or Prim's algorithm). We denote by $W(\MST(G))$ the weight of an MST of $G$.


Note that the MST of a graph can be an arbitrarily poor spanner; for example, {\color{black} let $G$ be an unweighted cycle with $|V|$ edges.  The MST is obtained by deleting any edge; however, the MST is only a $(|V|-1)$--spanner of the original graph.} 
The MST problem can be interpreted as a special case of the $t$--spanner problem where $t$ is arbitrarily large. 

\subsection{Erd\H{o}s' Girth Conjecture}
\label{SEC:girth} 
\begin{itemize}
\setlength{\itemindent}{.2in}
\item[\cite{ErdosGirth}]{\bibentry{ErdosGirth}}
\item[\cite{ErdosGirth2}]{\bibentry{ErdosGirth2}}
\item[\cite{Parter14}]{\bibentry{Parter14}}
\end{itemize}

Recall that the {\em girth} of a graph $G$ is defined as the length of the smallest cycle in $G$, and is infinity if $G$ is acyclic.
Letting $\gamma(n, k)$ denote the maximum possible number of edges in an $n$--vertex graph with girth $>k$, the authors in \cite{althofer1993sparse} give a simple algorithm that, for any $n$--vertex (undirected, possibly weighted) input graph and positive integer $t$, produces a $t$--spanner on $|E'| \le \gamma(n, t+1)$ edges.
This is best possible in the following sense:
\begin{proposition}\label{PROP:Girth}
An undirected unweighted graph $G = (V, E)$ of girth $>t+1$ has no proper subgraph that is a $t$--spanner.
\end{proposition}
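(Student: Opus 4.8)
The plan is to prove the statement directly by exhibiting, for an arbitrary proper subgraph $G'$ of $G$, a pair of vertices whose $G'$-distance exceeds $t$ times their $G$-distance, thereby violating the $t$--spanner inequality \eqref{EQ:tspanner}.

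First I would reduce to the case where only edges are deleted. By the definition of a spanner in Section~\ref{SEC:Spanners} (with $P = V \times V$), a $t$--spanner $G' = (V', E')$ must satisfy $d_{G'}(u,v) \le t\, d_G(u,v) < \infty$ for every pair $u,v \in V$; since $G$ is connected, this forces $V' = V$. Hence any proper subgraph that could conceivably be a $t$--spanner has the form $G' = (V, E')$ with $E' \subsetneq E$, so there exists an edge $e = uv \in E \setminus E'$.

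Next I focus on the pair $(u,v)$, noting $d_G(u,v) = 1$. Let $P$ be a shortest $u$--$v$ path in $G'$ (if none exists, then $d_{G'}(u,v) = \infty > t = t\,d_G(u,v)$ and we are done). Since $P$ is a simple path from $u$ to $v$ that does not use the edge $uv$ (as $uv \notin E'$), appending $e$ to $P$ produces a \emph{simple} cycle $C$ in $G$ of length $|P| + 1 = d_{G'}(u,v) + 1$. Because $G$ has girth $> t+1$, every cycle of $G$ — in particular $C$ — has length strictly greater than $t+1$, so $d_{G'}(u,v) + 1 > t+1$, i.e. $d_{G'}(u,v) > t = t\, d_G(u,v)$. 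Either way the $t$--spanner condition fails for $(u,v)$, so $G'$ is not a $t$--spanner of $G$.

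There is no genuinely hard step here; the one point requiring care is the claim that $P$ together with $e$ forms an honest (simple) cycle, so that the girth hypothesis truly applies — this rests on $P$ being a simple path and not already containing the edge $uv$, both of which hold because $P$ is a shortest path in $G'$ and $uv \notin E'$. I would also remark that this proposition shows the edge bound $\gamma(n, t+1)$ achieved by the greedy construction of Alth\"{o}fer et al.~\cite{althofer1993sparse} is tight in the strongest possible sense: the extremal girth-$(t{+}1{+}\epsilon)$ graphs are not merely sparse $t$--spanners of themselves but admit \emph{no} sparser $t$--spanner at all.
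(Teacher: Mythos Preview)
Your proof is correct and follows essentially the same argument as the paper: pick a missing edge $uv$, observe that any $u$--$v$ path in the subgraph together with $uv$ forms a cycle in $G$, and invoke the girth bound to conclude $d_{G'}(u,v) > t$. You are simply more explicit than the paper about reducing to $V' = V$ and about verifying that the resulting cycle is genuinely simple.
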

\begin{proof}
Consider any edge $uv \in E$.
If $uv$ is removed from $G$, then $\dist_G(u, v)$ changes from $1$ to one less than the length of the shortest cycle containing $uv$ in the original graph.
Hence, if $G$ has girth greater than $t+1$, then $\dist_{G \setminus \{uv\}}(u, v) > t$, so the spanner property is not satisfied for the pair {\color{black}$(u,v)$} after the edge is removed.
\end{proof}

Thus, if we consider an $n$--vertex graph $G = (V, E)$ with girth greater than $t+1$ and $|E| = \gamma(n, t+1)$ edges, the only $t$--spanner of $G$ is $G$ itself.
It follows that no algorithm can improve in general on the density bound of \cite{althofer1993sparse} with, say, a bound of $|E'| \le \gamma(n, t+1) -1$.

It still remains a major open problem to determine $\gamma(n, k)$, even asymptotically, though upper bounds called the \emph{Moore Bounds} are given by a folklore counting argument:
\begin{proposition}\label{PROP:Moore}
$\gamma(n,k) = O\left(n^{1 + \frac{1}{\lfloor k/2 \rfloor}}\right).$
\end{proposition}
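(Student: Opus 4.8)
The plan is the classical Moore-bound counting argument, carried out in three stages. Write $r = \lfloor k/2 \rfloor$; we may assume $k \ge 2$, since for $k \le 1$ the bound is vacuous. Let $G = (V,E)$ be an $n$-vertex graph of girth $> k$ and set $m = |E|$; the goal is $m = O(n^{1+1/r})$. First I would pass to a subgraph of large minimum degree: repeatedly delete a vertex of current degree less than $m/n$. Each deletion destroys fewer than $m/n$ edges, so deleting all $n$ vertices would destroy fewer than $m$ edges in total, which is impossible; hence the process halts at a nonempty subgraph $H$ on at most $n$ vertices with minimum degree $\delta \ge m/n$, and $H$ inherits girth $> k$. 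It then suffices to prove $\delta \le 1 + n^{1/r}$, for this gives $m \le \delta n \le n + n^{1+1/r} = O(n^{1+1/r})$. If $\delta \le 1$ the bound is immediate, so assume $\delta \ge 2$.

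Next I would exploit the absence of short cycles to show that small balls in $H$ branch exponentially. Fix a vertex $v$ of $H$, run breadth-first search from $v$, and let $L_i$ ($0 \le i \le r$) be the set of vertices at distance exactly $i$ from $v$, with $n_i = |L_i|$. The core claim is $n_i \ge \delta(\delta-1)^{\,i-1}$ for $1 \le i \le r$. We have $n_1 = \deg_H(v) \ge \delta$, and the claim follows inductively from two structural facts about the first $r$ BFS levels: (A) every $x \in L_i$ with $i \le r-1$ has exactly one neighbor in $L_{i-1}$ (its BFS parent), hence at least $\delta - 1$ neighbors in $L_{i+1}$; and (B) distinct $x, x' \in L_i$ with $i \le r-1$ have no common neighbor in $L_{i+1}$. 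Both facts are proved by the same device: from a putative offending adjacency, take shortest $v$-paths to the vertices involved, splice them at their last common vertex, and check that the result is a simple cycle of length at most $2r \le k < k+1 \le \mathrm{girth}(H)$, a contradiction. Facts (A) and (B) together yield $n_{i+1} \ge (\delta-1)n_i$ for $1 \le i \le r-1$, and the claim follows.

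To conclude, the claim gives $n \ge |V(H)| \ge n_r \ge \delta(\delta-1)^{r-1} \ge (\delta-1)^r$, using $\delta \ge 2$ in the last step, so $\delta - 1 \le n^{1/r}$, i.e. $\delta \le 1 + n^{1/r}$ — precisely what the first stage needed. Since $G$ was an arbitrary $n$-vertex graph of girth $> k$, this proves $\gamma(n,k) = O(n^{1+1/\lfloor k/2\rfloor})$ (indeed with an absolute constant, e.g. $\gamma(n,k) \le n + n^{1+1/\lfloor k/2\rfloor}$).

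The one delicate point is the cycle-extraction step used for facts (A) and (B). One must verify in each of the few cases ($x \in L_i$ with a neighbor in $L_i$, with two neighbors in $L_{i-1}$, or $x,x'$ with a shared neighbor in $L_{i+1}$) that splicing the two shortest $v$-paths at their last common vertex $z$ genuinely produces a simple cycle — the two arcs from $z$ are internally vertex-disjoint, the endpoints are distinct, the connecting edge is not already on either arc, and $z$ is not one of the endpoints (which forces the relevant distances to be strictly smaller and gives a nondegenerate cycle) — and that its length is at most $2r$, hence strictly below the girth $k+1 \ge 2r+1$. Everything else is routine bookkeeping.
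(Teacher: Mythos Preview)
Your proof is correct and follows essentially the same approach as the paper's sketch: pass to a subgraph of large minimum degree by iteratively deleting low-degree vertices, then use the girth hypothesis to show that a BFS ball of radius $r = \lfloor k/2 \rfloor$ branches at rate $\Omega(\delta)$ at each level, forcing $\delta = O(n^{1/r})$. Your version is more careful with constants (yielding $\gamma(n,k) \le n + n^{1+1/r}$) and more explicit about the case analysis for extracting short cycles, but the two arguments are the same at every essential step.
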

\begin{proof}[Sketch of Proof]
See \cite{althofer1993sparse} for full detail.
Let $G = (V, E)$ be a graph with $|V|=n$ and girth $>k$, and let $d$ be its average vertex degree.
There are two steps in the proof.
First, we find a nonempty subgraph $H \subseteq G$ of minimum degree $\Omega(d)$.
We generate $H$ by iteratively deleting any vertex in $G$ of degree $\le \frac{d}{4}$; one counts that at most $\frac{|E|}{2}$ edges are deleted in total, so $H$ remains nonempty.
Second, pick an arbitrary vertex $v$ in $H$ and consider a BFS tree $T$ rooted at $v$ to depth $\lfloor \frac{k}{2} \rfloor$.
Since $H$ has girth $>k$, each vertex in $T$ can have only one edge to a vertex in the same or previous layer of the tree (else we would have a cycle of length $\le k$).
Since every vertex in $T$ has degree $\Omega(d)$, one counts that there are $\Omega(d)^{\lfloor \frac{k}{2} \rfloor}$ total vertex in $T$.
Thus $d = O\left(n^{\frac{1}{\lfloor k/2 \rfloor}}\right)$, so $|E| = O\left(n^{1 + \frac{1}{\lfloor k/2 \rfloor}}\right)$.
\end{proof}

Erd\H{o}s' Girth Conjecture is the statement that the Moore Bounds are tight; that is, $\gamma(n, k) = \Omega(n^{1 + \frac{1}{\lfloor k/2 \rfloor}})$.\footnote{The conjecture comes from \cite{ErdosGirth}, page 5, where Erd\H{o}s wrote ``It seems likely that'' a certain equation holds (7) which is equivalent to the statement $\gamma(n, k) = \Omega_k(n^{1 + \frac{1}{\lfloor k/2 \rfloor}})$.  Most modern papers quote a strengthened version of this statement where the implicit constant may not even depend on $k$.}
The conjecture generally plays two important roles in the literature on spanners.
First, in some applications of spanners it is useful to know the number of edges the spanner could possibly have, and this currently requires the Girth Conjecture.
Second, there are some known constructions of $t$--spanners on $O(n^{1 + \frac{1}{\lfloor (t+1)/2 \rfloor}})$ edges with certain desirable properties relative to \cite{althofer1993sparse} (for example, Baswana and Sen \cite{Surender03ALP} give a linear time algorithm to produce spanners of this quality).
Whereas \cite{althofer1993sparse} produces spanners with optimal size/distortion tradeoff \emph{regardless} of the truth of the Girth Conjecture, these other constructions have optimal tradeoff only if the Girth Conjecture is assumed.
Thus, many natural algorithmic questions about spanners (such as linear time computation of spanners with optimal size/distortion tradeoff) are currently closed only if the Girth Conjecture is assumed.


So far, we have discussed the Girth Conjecture with respect to multiplicative spanners, but it more generally constrains spanners with additive or mixed error in the same way.
Specifically, assuming the Girth Conjecture, any construction of $(\alpha,\beta)$--spanners of size $O(n^{1+\frac1k})$ must have $\alpha+\beta\geq 2k-1$. The proof of this is essentially the same as that of Proposition \ref{PROP:Girth}.
Independent of the veracity of the girth conjecture, Woodruff~\cite{woodruff2006lower} proved this fact in the setting $\alpha=1$: for any $k\in\N$, there exists a graph on $n$ nodes for which any $(1,2k-2)$--spanner has $\Omega(k^{-1}n^{1+\frac1k})$ edges.

\subsection{Open Problems}
\begin{enumerate}

\item Considering discontinuous distortion functions would also be of interest, e.g., one could allow small distances to be distorted more than large ones, or require distance preservation of close vertices, but small distortion for far away ones.

\item Erd\H{o}s' Girth Conjecture is known to be true for $k=1, 2, 3$, or $5$ \cite{Wenger91, Tits59}.  Is it true for other values of $k$?
\end{enumerate}


\section{NP-hardness and Hardness of Approximation}\label{SEC:Complexity}
{\color{black}Most interesting spanner problems are NP-complete. In this section, we focus primarily on hardness results for the basic $t$--spanner and additive spanner problems. For hardness results on other types of spanner problems (e.g., tree $t$--spanner or lowest-diameter $t$--spanner), see Section~\ref{SEC:TypesOfSpanners}.}

\subsection{NP-hardness of the basic $t$--spanner problem}

\begin{itemize}
\setlength{\itemindent}{.2in}
\item[\cite{peleg1989graph}] \bibentry{peleg1989graph}
\item[\cite{cai1994np}] \bibentry{cai1994np}
\end{itemize}

 Peleg and Sch\"{a}ffer~\cite{peleg1989graph} show that, given an unweighted graph $G$, and integers $t,m' \ge 1$, determining if $G$ has a $t$--spanner containing $m'$ or fewer edges is NP--complete, even when $t$ is fixed to be 2. The reduction is from the edge dominating set\footnote{An \emph{edge dominating set} of a graph $H=(V,E)$ is a subset of edges $E' \subseteq E$ such that every edge $e \in E\setminus E'$ is adjacent to at least one edge in $E'$.} (EDS) problem on bipartite graphs, defined as follows: given a bipartite graph $H=(V,E)$ with bipartition $(X,Y)$, and an integer $K \ge 1$, determine if $H$ has an edge dominating set consisting of at most $K$ edges. The reduction can be summarized as follows: given an instance $\langle H=(V,E), K \rangle$ to EDS where $H$ has bipartition $(X,Y)$, $X = \{x_1, \ldots, x_{n_x}\}$, and $Y = \{y_1, \ldots, y_{n_y}\}$, let $G(H)$ be constructed by adding a vertex $x_{ij}$ for each $1 \le i < j \le n_x$, and similarly for $Y$. Let $V(G(H)) = X \cup Y \cup \{x_{ij} \mid 1 \le i < j \le n_x \} \cup \{y_{ij} \mid 1 \le i < j \le n_y\}$.

For each $1 \le i < j \le n_x$, add edges $x_ix_j$, $x_ix_{ij}$, and $x_jx_{ij}$, and do so similarly for $Y$. The authors define $E_{XY}$ as the union of all of these added edges. Let $E(G(H)) = E \cup E_{XY}$, $t=2$, and $m' = K + n_x(n_x - 1) + n_y(n_y - 1)$. One can show that $\langle H, K \rangle$ is accepted iff $\langle G(H), t, m' \rangle$ is accepted.

Cai~\cite{cai1994np} also shows that for any fixed $t \ge 2$, the minimum $t$--spanner problem is NP--hard, and for $t\ge 3$, the problem is NP--hard even when the input is restricted to bipartite graphs.  The reduction is from the 3--SAT problem.

\subsection{Label Cover and hardness of approximation of the basic $t$--spanner problem}
\begin{itemize}
\setlength{\itemindent}{.2in}
\item[\cite{arora96hardness}] \bibentry{arora96hardness}
\item[\cite{elkin2005approximating}] \bibentry{elkin2005approximating}
\item[\cite{Elkin2007}] \bibentry{Elkin2007}
\item[\cite{dinitz2016label}] \bibentry{dinitz2016label}
\end{itemize}
For $k=2$, the basic $k$--spanner problem admits an $O(\log n)$ approximation~(\cite{Kortsarz94Algorithms}, Section~\ref{SEC:classical}). {\color{black}Dinitz et al.~\cite{dinitz2016label} show that for $k \ge 3$, and for all $\varepsilon >0$, the basic $k$--spanner problem cannot be approximated with ratio better than $2^{(\log^{1-\varepsilon}n)/k}$ unless $\text{NP} \subseteq \text{BPTIME}(2^{\text{polylog}(n)})$ (This result was initially claimed by Elkin and Peleg~\cite{elkin2000strong}, but retracted due to error).} Elkin and Peleg~\cite{elkin2005approximating} propose approximation algorithms with sublinear approximation ratio, and study certain classes of graphs for which logarithmic approximation is feasible.  Elkin and Peleg~\cite{Elkin2007} extend the hardness results to other spanner problems, and also show strong inapproximability for the \emph{directed} unweighted $k$--spanner problem.

{\color{black}Many strong hardness results rely on the hardness of \emph{Label Cover}, initially defined by Arora and Lund~\cite{arora96hardness}, which may be stated as follows.

\begin{definition}[Label Cover]\label{def:labelcover}
Let $G=(V,E)$ be a biregular bipartite\footnote{A \emph{biregular bipartite} graph is a graph with bipartition $V=V_1 \cup V_2$ such that every vertex in $V_1$ has the same degree, and every vertex in $V_2$ has the same degree.} graph with bipartition $V=V_1 \cup V_2$. Additionally, let $\Sigma_1$ and $\Sigma_2$ be two alphabets representing the sets of possible labels associated with $V_1$ and $V_2$. For every edge $e \in G$, there is a nonempty relation $\pi_e \subseteq \Sigma_1 \times \Sigma_2$. A \emph{labeling} is an assignment of one or more labels to every vertex in $V$. We say that edge $e = uv$ ($u \in V_1$, $v \in V_2$) is \emph{covered} by the labeling if there is a label $a_1$ assigned to $u$ and a label $a_2$ assigned to $v$, such that $(a_1, a_2) \in \pi_e$.
\end{definition}
Often, Label Cover is stated as a maximization problem: every vertex in $G$ is assigned exactly one label, and the goal is to maximize the number of covered edges. It is known that both the maximization and minimization versions are quasi-NP--hard to approximate with ratio $2^{\log^{1-\varepsilon} n}$ for all $\varepsilon > 0$~\cite{arora96hardness}.}

Dinitz et al.~\cite{dinitz2016label} show that if there is an additional requirement that the girth of $G$ is larger than $k$, where $3 \le k \le \log^{1-2\varepsilon}n$, then the maximization version of Label Cover cannot be approximated with ratio better than $2^{(\log^{1-\varepsilon}n)/k}$ unless $\text{NP} \subseteq \text{BPTIME}(2^{\text{polylog}(n)})$, for sufficiently large $n$. By considering a variant of the minimization problem of Label Cover (Min-Rep), they show the hardness of the basic $k$--spanner problem; specifically, it is also hard to approximate the basic $k$--spanner problem within a factor better than $2^{(\log^{1-\epsilon}n)/k}$. 

\subsection{NP--Hardness for Planar Graphs}
\begin{itemize}
\setlength{\itemindent}{.2in}
    \item[\cite{Brandes1997NpcompletenessRF}] \bibentry{Brandes1997NpcompletenessRF}
    \item[\cite{kobayashi2018np}] \bibentry{kobayashi2018np}
\end{itemize}
Brandes and Handke~\cite{Brandes1997NpcompletenessRF} show that the basic $t$--spanner problem is NP--complete for fixed $t \ge 5$ when the input graph is planar and unweighted, and is also NP--complete for fixed $t \ge 3$ when the input graph is planar and weighted. For unweighted graphs, the reduction is from \emph{planar 3--SAT}, a variant of 3--SAT where the underlying bipartite graph induced by the variables and clauses is planar. Kobayashi~\cite{kobayashi2018np} recently showed that the minimum $t$--spanner problem is NP--hard on planar graphs for $t \in \{2,3,4\}$. 

\subsection{NP--Hardness of Additive Spanners}
\begin{itemize}
\setlength{\itemindent}{.2in}
    \item[\cite{Brandes1997NpcompletenessRF}] \bibentry{liestmannp}
    \item[\cite{kobayashi2019fpt}] \bibentry{kobayashi2019fpt}
\end{itemize}
Liestman and Shermer~\cite{liestmannp} show that for all integers $\beta \ge 1$, determining if a graph $G$ contains an additive $+\beta$-spanner containing $m'$ or fewer edges is NP--hard via a reduction from the edge dominating set problem.

Kobayashi~\cite{kobayashi2019fpt} considers a parametrized version of the additive $(1,\beta)$--spanner problem where the number of removed edges is regarded as a parameter $k$, and a fixed-parameter algorithm is given for it. The main result is that there exists a fixed-parameter tractable algorithm for the Parameterized Minimum Additive $(1,\beta)$--spanner problem that runs in $(\beta+ 1)^{O(k^2+\beta k)}mn$, or $2^{O(k^2)}mn$ if $\beta$ is fixed.
These results are generalized for $(\alpha, \beta)$--spanners.
{\color{black} However, we remark that these algorithms run in polynomial time only if one wishes to remove only a constant number of edges.
In many settings of interest one hopes for a spanner that is much sparser than the input graph, and for these a different algorithmic paradigm is needed.}

\subsection{Open Problems}
\begin{enumerate}
    \item Kobayashi~\cite{kobayashi2018np} leaves as an open question whether the minimum $t$--spanner problem on bounded-degree graphs of degree at most $\Delta$ is NP--hard for certain fixed $t$ and $\Delta$, namely $(t,\Delta) = (2,5)$, $(2,6)$, $(2,7)$, $(3,4)$, $(3,5)$, $(4,3)$, $(4,4)$, $(4,5)$.
\end{enumerate}

\section{The Greedy Algorithm for Multiplicative Spanners}\label{SEC:Greedy}

\begin{itemize}
\setlength{\itemindent}{.2in}
    \item[\cite{althofer1993sparse}] \bibentry{althofer1993sparse}
    \item[\cite{filtser2016greedy}] \bibentry{filtser2016greedy}
\end{itemize}

One of the original methods for constructing spanners is to use a greedy algorithm.  The essential idea is to first sort the edges in nondecreasing order by weight, choose the edge with the smallest weight first, and then each subsequent edge is chosen or not according to some criteria which guarantees that the end result is a spanner of the desired type.  Despite being the oldest construction of spanners, greedy algorithms remain one of the most utilized methods for achieving this task.  

\subsection{Kruskal's Algorithm for Computing Minimum Spanning Trees}

There are many algorithms for computing MSTs, including Bor\r{u}vka's, Prim's, and Kruskal's algorithm.  All of these are examples of greedy algorithms.  Since it is the basis for the greedy algorithm to construct $t$--spanners, we review Kruskal's algorithm in Algorithm~\ref{ALG:Kruskal}. 

\begin{algorithm}[h!]
\caption{\textsc{MST}($G = (V, E)$); Kruskal's MST Algorithm}\label{ALG:Kruskal}
\begin{algorithmic}
\State Sort edges in nondecreasing order of weight
\State $G' = (V, E' \gets \emptyset)$
\For{$uv \in E$}
\If{$\dist_{G'}(u,v) = \infty$ (i.e. $u, v$ are previously disconnected in $G'$)}

\State $E'\gets E'\cup\{uv\}$
\EndIf   
\EndFor
\State \textbf{return} $G'$
\end{algorithmic}
\end{algorithm}

\subsection{\label{SEC:greedymult}The Greedy Algorithm for Multiplicative Spanners}

Alth\"{o}fer et al. \cite{althofer1993sparse} proposed and analyzed the first greedy algorithm for computing a sparse $t$--spanner of a weighted graph; see Algorithm~\ref{ALG:Greedy t spanner}.

\begin{algorithm}[h!]
\caption{\textsc{GreedySpanner}($G = (V, E)$, $t$); Greedy $t$--spanner Algorithm }\label{ALG:Greedy t spanner}
\begin{algorithmic}
\State Sort edges in nondecreasing order of weight
\State $G' = (V, E' \gets\emptyset)$
\For{$uv\in E$}
\If{$\dist_{G'}(u,v) > t\cdot w_{uv}$}
\State $E'\gets E'\cup\{uv\}$
\EndIf
\EndFor
\State \textbf{return} $G'$
\end{algorithmic}
\end{algorithm}

Note first, that Algorithm \ref{ALG:Greedy t spanner} guarantees that the lowest weighted edge is chosen.  Likewise, if at any iteration in the for loop there is no path in $G'$ from $u$ to $v$, then the edge $uv$ is chosen.  This algorithm is based on Kruskal's algorithm, but the key difference is that Algorithm \ref{ALG:Greedy t spanner} does not enforce the condition that $G'$ is acyclic.  Indeed, suppose edges $u_1u_2$ and $u_1u_3$ have been chosen already to be in $E'$, and that $u_2u_3$ is an edge in the original graph $G$, and the shortest path in $G$ from $u_2$ to $u_3$ is the two-edge path going from $u_2$ to $u_1$ to $u_3$.  Kruskal's algorithm would reject the new edge because its endpoints are already connected.  However, provided $t\cdot w_{u_2u_3}<w_{u_2u_1}+w_{u_1u_3}$, this edge would be added to $E'$.

That said, an analogous cycle-free property holds for the greedy algorithm: a graph returned by Algorithm \ref{ALG:Greedy t spanner} with parameter $t$ will not have any cycles on $\le t+1$ edges.
To see this, let $C$ be a cycle on $\le t+1$ edges in the input graph $G$, and let $uv \in C$ be the last edge considered by the greedy algorithm.
When edge $uv$ is considered, either another edge in $C$ has been discarded, or else (due to the edge ordering) there is a $u$-$v$ path through $C$ of length $\le t \cdot w_{uv}$, and thus we will discard $uv$.
In either case, $C$ does not survive in the output graph $G'$.
Thus, one can reasonably view Kruskal's algorithm as the special case of Algorithm \ref{ALG:Greedy t spanner} with $t = \infty$.

Alth\"{o}fer et al.~prove that any $G'$ constructed from Algorithm \ref{ALG:Greedy t spanner} is a $t$--spanner for $G$.  However, we note here that their proof is actually not dependent upon the greedy reordering of the edges, which is an interesting fact in its own right.

\begin{proposition}[\cite{althofer1993sparse}]\label{PROP:Greedy Ordering}
Algorithm \ref{ALG:Greedy t spanner} yields a $t$--spanner for $G$ regardless of the ordering of the edges in the first step.  
\end{proposition}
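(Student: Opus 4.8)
The plan is to invoke Proposition~\ref{PROP:tspanner verification}: it suffices to check that the final output graph $G'$ satisfies $d_{G'}(u,v) \le t\cdot w_{uv}$ for every edge $uv \in E$, and I would establish this one edge at a time, never referring to the order in which the for loop processes edges.

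The one ingredient I would isolate first is the monotonicity of shortest-path distances under edge insertions: if $E_1 \subseteq E_2 \subseteq E$, then $d_{(V,E_2)}(u,v) \le d_{(V,E_1)}(u,v)$ for all $u,v \in V$, since any path available in the smaller graph is still available in the larger one. Because the edge set $E'$ maintained by Algorithm~\ref{ALG:Greedy t spanner} only grows over the course of the execution, this means that for each fixed pair $u,v$ the quantity $d_{G'}(u,v)$ is non-increasing as the algorithm runs.

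Now fix $uv \in E$ and look at the iteration in which $uv$ is considered; let $E'_0$ be the value of $E'$ at the start of that iteration. If $d_{(V,E'_0)}(u,v) > t\cdot w_{uv}$, then the algorithm inserts $uv$, so $uv$ is in the final edge set and $d_{G'}(u,v) \le w_{uv} \le t\cdot w_{uv}$ since $t \ge 1$. Otherwise $d_{(V,E'_0)}(u,v) \le t\cdot w_{uv}$ already, and by the monotonicity observation the final distance is no larger, so $d_{G'}(u,v) \le t\cdot w_{uv}$ again. In both cases the required inequality holds for $uv$, and since the argument used nothing about the sorted order, Proposition~\ref{PROP:tspanner verification} yields that $G'$ is a $t$--spanner of $G$.

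There is no real obstacle here; the only point demanding care is keeping the direction of the inequality straight when passing from ``the distance at the moment $uv$ is processed'' to ``the distance in the final graph'', and it is precisely the monotonicity observation that licenses this passage — it is this fact, rather than the greedy reordering, that carries the proof. It is worth contrasting this with the fuller analysis of Alth\"{o}fer et al., where the sorted order genuinely is needed for the sparsity and lightness guarantees; for the bare stretch property, as the proposition asserts, the order is irrelevant.
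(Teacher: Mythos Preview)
Your proof is correct and follows essentially the same line as the paper's: both argue edge-by-edge at the moment each $uv$ is processed, split on whether the test passes, and then appeal to Proposition~\ref{PROP:tspanner verification}. The only cosmetic difference is that the paper first reduces (without loss of generality) to the case $w_{uv}=d_G(u,v)$ so that Proposition~\ref{PROP:tspanner verification} applies verbatim, whereas you prove the edge inequality $d_{G'}(u,v)\le t\cdot w_{uv}$ directly and lean on monotonicity of distances under edge insertion; the chaining argument behind Proposition~\ref{PROP:tspanner verification} works equally well with $w_{uv}$ in place of $d_G(u,v)$, so this is harmless.
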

\begin{proof}
First, we may assume without loss of generality that for each edge $uv$ in the input graph $G$, we have $\dist_G(u, v) = w_{uv}$.
Otherwise, we may remove $uv$ from $G$ without changing its shortest path metric at all, and thus any spanner of the remaining graph is also a spanner of $G$ itself.

For each $uv \in E$, when we consider $uv$ in the greedy algorithm, we either have
$$\dist_{G'}(u, v) \le t \cdot w_{uv} = t \cdot \dist_G(u, v),$$
or else we add $uv$ to $G'$ and thus have $\dist_{G'}(u, v) = \dist_G(u, v)$.
In either case, the pair $(u,v)$ satisfies the spanner property.
The proposition then follows from Proposition \ref{PROP:tspanner verification}.
\end{proof}

Despite the fact that Algorithm \ref{ALG:Greedy t spanner} yields a $t$--spanner for any ordering of the edges, the rest of the analysis of Alth\"{o}fer et al. crucially hinges upon the greedy edge ordering.  In \cite[Lemma 2]{althofer1993sparse}, it is shown that the output $G'$ of Algorithm \ref{ALG:Greedy t spanner} is such that its girth is at least $t+1$.  However, this need not hold if we allow the algorithm to run with a different edge ordering as the example in Figure \ref{FIG:AltGirth} demonstrates. 

\begin{figure}[!h]
\centering
\includegraphics[width=\textwidth]{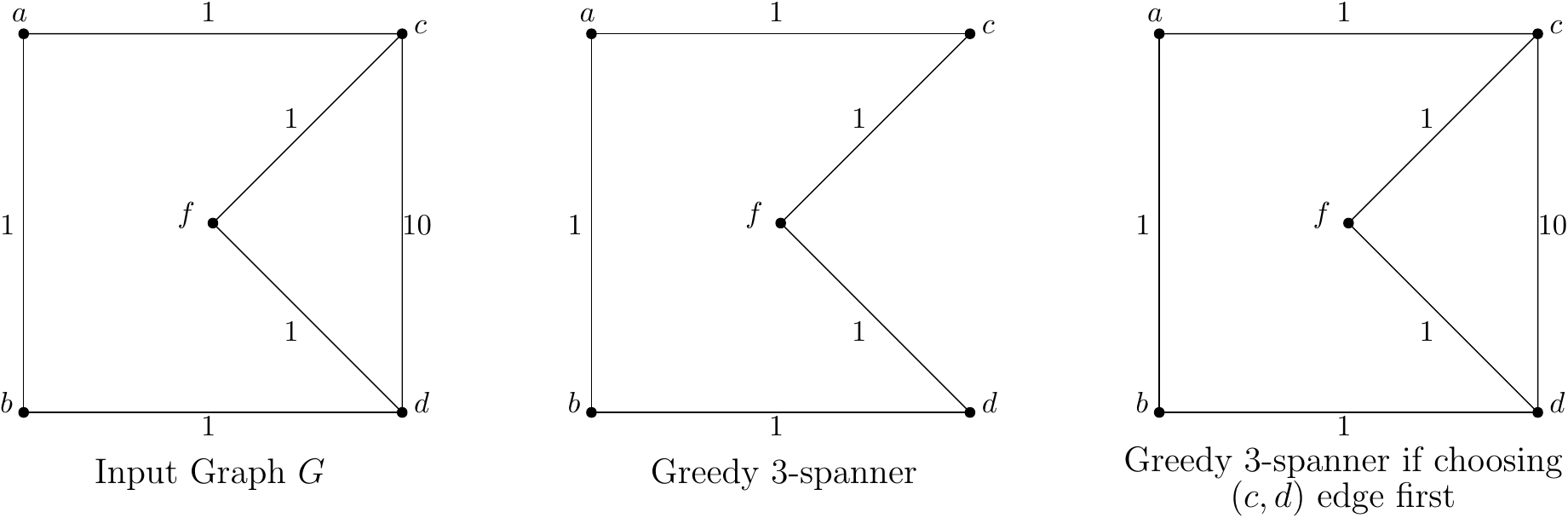}
\caption{Ordering the edges by their weights is pivotal for the girth property of the greedy $t-$spanner. The rightmost graph would be output of the algorithm for $t=3$ if we allowed non-ordered edges by picking the edge $cd$ with weight 10. This graph has girth 3, which is smaller than $(t+1)=4$.\label{FIG:AltGirth}}
\end{figure}

\begin{remark}\label{REM:Greedy}
It should be noted that the greedy algorithm presented here essentially relies on Proposition \ref{PROP:tspanner verification}, which allows one to only enforce the $t$--spanner condition on edges in the initial graph $G$.  This proposition does not hold for additive spanners (for example), and so to find a greedy algorithm to produce an additive (or mixed) spanner, something else would need to be done.
Indeed, \cite{Knudsen14} might be viewed as a greedy algorithm for additive spanners, but for this reason it requires an extra preprocessing step before the greedy part, and thus is not a direct analog of the multiplicative greedy algorithm discussed here.
\end{remark}

\subsection{Existential Optimality of Greedy $t$--spanners}

Garay et al.~\cite{garay1998sublinear} (see also \cite[Chapter 24]{PelegBook}) distinguish between different notions of optimality of a given algorithmic output.  The essential difference is one of a single quantifier: namely one is a ``for all'' statement and the other is a ``there exists'' statement.  An algorithmic solution is {\em universally optimal} for a given class of graphs $\mathcal{G}$ if for every graph $G\in\mathcal{G}$, the algorithm gives the optimal solution.  On the other hand, an algorithm is {\em existentially optimal} for a class of graphs $\mathcal{G}$ provided there exists a graph $G\in\mathcal{G}$ for which the algorithm constructs the optimal solution.  

Filtser and Solomon \cite{filtser2016greedy}, following \cite{althofer1993sparse}, prove existential optimality of the greedy $t$--spanner; although they prove something somewhat stronger than the notion of Garay et al. described above.  They prove that the greedy $t$--spanner for a graph $G$ in a given (fixed) class of graphs $\mathcal{G}$ is never worse than the worst-case optimal solution for the whole class $\mathcal{G}$.  For example, given a class of graphs $\mathcal{G}$ on $n$ vertices, the worst-case number of edges in an optimal $t$--spanner is given by
\[ m(t):=\underset{G\in\mathcal{G}}\sup\;\underset{G'\text{ is a }t\text{-spanner for }G}{\underset{G'=(V,E'),}{\inf}}\;|E'|.\]  Then the greedy $t$--spanner algorithm is such that for every $G\in\mathcal{G}$, the algorithm's output is a $t$--spanner with at most $m(t)$ edges.  In particular, this implies that the greedy $t$--spanner is existentially optimal because there exists a graph $G$ whose optimal $t$--spanner has $m(t)$ edges, and its greedy $t$--spanner has at most $m(t)$ edges as well, and hence is an optimal solution.

In our setting, optimality of a $t$--spanner construction will be considered via two parameters, the number of edges in the spanner, and its total weight.  Filtser and Solomon prove that the greedy $t$--spanner algorithm is existentially optimal.  We state their result in a stronger way here, but in such a way that the proof becomes much simpler than the original.  To begin, we state a crucial lemma which is interesting in its own right, which states that the only $t$--spanner of a greedy $t$--spanner is itself.

\begin{lemma}[{\cite[Lemma 3]{filtser2016greedy}}]\label{LEM:Greedy}
Let $t\geq1$ be fixed, and let $G$ be any weighted connected graph with $n$ vertices.  Let $G'$ be the greedy $t$--spanner of $G$.  If $G''$ is a $t$--spanner of $G'$, then $G''=G'$.
\end{lemma}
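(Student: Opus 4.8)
The plan is to isolate a single structural property of greedy $t$--spanners and then read the lemma off from it. Call the greedy output $G'$ \emph{minimal} if for every edge $uv \in E(G')$ one has $d_{G'\setminus\{uv\}}(u,v) > t\cdot w_{uv}$; that is, no edge of $G'$ can be deleted without breaking the spanner condition on its own two endpoints. First I would note that minimality of $G'$ gives the lemma at once: if $G''$ were a $t$--spanner of $G'$ missing some edge $uv \in E(G')$, then $G''$ is a subgraph of $G'\setminus\{uv\}$, so $d_{G''}(u,v) \ge d_{G'\setminus\{uv\}}(u,v) > t\cdot w_{uv} \ge t\cdot d_{G'}(u,v)$ (using $d_{G'}(u,v)\le w_{uv}$ because $uv\in E(G')$), contradicting the $t$--spanner property of $G''$ at the pair $(u,v)$. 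Hence $E(G')\subseteq E(G'')$, and since $G''$ is a subgraph of $G'$ on the same vertex set, $G''=G'$.

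So the real content is proving that the greedy output is minimal. I would argue by contradiction using an extremal choice: among all edges of $G'$ for which minimality fails, let $uv$ be the one considered \emph{last} in the greedy processing order, and fix a shortest $u$--$v$ path $P=(u=p_0,p_1,\ldots,p_k=v)$ in $G'\setminus\{uv\}$; we may take $P$ simple, and $k\ge 2$ since $P$ avoids the edge $uv$, and $W(P)\le t\cdot w_{uv}$. The goal is to show that every edge of $P$ is processed strictly before $uv$. Granting this, all edges of $P$ already lie in the partial spanner at the moment greedy examines $uv$, so that partial spanner already contains a $u$--$v$ path of weight $\le t\cdot w_{uv}$, contradicting the fact that greedy decided to \emph{add} $uv$ (which requires the current $u$--$v$ distance to exceed $t\cdot w_{uv}$).

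To establish the claim, suppose some edge $f=p_ip_{i+1}$ of $P$ is instead processed at or after $uv$; then $w_f\ge w_{uv}$ (nondecreasing weight order), and by the extremal choice of $uv$, minimality holds for $f$, i.e. $d_{G'\setminus\{f\}}(p_i,p_{i+1})>t\cdot w_f$. But I can reroute around $f$: walk along $P$ from $p_i$ back to $p_0=u$, cross the edge $uv$ (which lies in $G'\setminus\{f\}$ since $f\ne uv$), and continue along $P$ from $v=p_k$ to $p_{i+1}$; simplicity of $P$ makes this a genuine $p_i$--$p_{i+1}$ path, of weight $W(P)-w_f+w_{uv}\le t\cdot w_{uv}-w_f+w_{uv}=(t+1)w_{uv}-w_f\le (t+1)w_f-w_f=t\cdot w_f$. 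This contradicts $d_{G'\setminus\{f\}}(p_i,p_{i+1})>t\cdot w_f$, so no such $f$ exists. (If $G'\setminus\{uv\}$ has no $u$--$v$ path at all, minimality holds at $uv$ trivially, so the extremal $uv$ really does admit a finite-weight $P$.)

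The main obstacle is precisely the rerouting step above: the one genuine idea is recognizing that the deleted edge $uv$ is exactly the ``chord'' that closes the path $P$ into a cheap detour around any over-heavy edge $f$ on it, and then checking that the weight bookkeeping $W(P)-w_f+w_{uv}\le t\,w_f$ survives using only $W(P)\le t\,w_{uv}$ and $w_f\ge w_{uv}$. The remaining ingredients --- taking $P$ simple, the disconnection case, the bound $d_{G'}(u,v)\le w_{uv}$ for $uv\in E(G')$, and the first-paragraph reduction --- are all routine.
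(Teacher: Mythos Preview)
Your proof is correct and uses essentially the same cycle-rerouting argument as the paper: both produce a short cycle $P\cup\{uv\}$ in $G'$, locate its last-processed edge, and invoke the inequality $W(P)-w_{\tilde e}+w_{uv}\le t\,w_{\tilde e}$ to contradict the greedy decision to include that edge. The paper's version is slightly more direct---it starts from an arbitrary edge $e\in E'\setminus E''$, takes $P$ to be a shortest $u$--$v$ path in $G''$, and lets $\tilde e$ be the last-processed edge of $P\cup\{e\}$ (possibly $e$ itself), thereby avoiding your preliminary reduction to ``minimality'' and the second extremal choice over all failing edges---but the substance is identical.
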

\begin{proof}
Suppose $G'=(V,E')$ and $G''=(V,E'')$.  By way of contradiction, suppose that there exists an edge $e=uv\in E'\setminus E''.$  Let $P$ be a shortest path in $G''$ connecting $u$ and $v$ (and note that $e\notin P$).  Consider the last edge in $P\cup\{e\}$, say $\tilde{e}$, examined by the greedy algorithm when forming $G'$.  Since the edges are sorted, we have $w_e\leq w_{\tilde{e}}$.  Since $P\cup \{e\}$ lies in $G'$, and each of these edges has weight at most that of $\tilde{e}$, it follows that all edges in $P\cup \{e\}\setminus\{\tilde{e}\}$ have already been added to $E'$ by the time the greedy algorithm examines $\tilde{e}$. Thus this set forms a path connecting the endpoints of $\tilde{e}$, and we have
\[ W(P)-w_{\tilde{e}}+w_e\leq W(P)\leq tw_e\leq tw_{\tilde{e}}, \]
which implies that the greedy algorithm will not add edge $\tilde{e}$ to $E'$, which is a contradiction.  Hence no such $e$ exists, and $E''=E'$.
\end{proof}

\begin{theorem}[{\cite[Theorem 4]{filtser2016greedy}}]\label{THM:GreedyExistentialOptimal}
Suppose that $\mathcal{G}$ is a class of graphs on $n$ vertices which is closed under edge deletion.  Let $t>1$ be fixed.  Let
\[ m(t,n):=\sup_{G\in\mathcal{G}}\;\underset{G'\text{ is a }t\text{-spanner for }G}{\underset{G'=(V,E'),}{\inf}}\;|E'|,\] and
\[ \ell(t,n):= \sup_{G\in\mathcal{G}}\;\underset{G'\text{ is a }t\text{-spanner for }G}{\underset{G'=(V,E'),}{\inf}}\;W(E').\]
Then for every $G\in\mathcal{G}$, the greedy $t$--spanner, $G'$ of $G$ has at most $m(t,n)$ edges and weight at most $\ell(t,n)$.
\end{theorem}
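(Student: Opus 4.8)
The plan is to derive Theorem~\ref{THM:GreedyExistentialOptimal} almost immediately from Lemma~\ref{LEM:Greedy}, exploiting the hypothesis that $\mathcal{G}$ is closed under edge deletion. Fix an arbitrary $G=(V,E)\in\mathcal{G}$ and let $G'=(V,E')$ be its greedy $t$--spanner produced by Algorithm~\ref{ALG:Greedy t spanner}. The first step is the observation that $G'$ is obtained from $G$ purely by deleting edges (the vertex set is unchanged and $E'\subseteq E$), so closure under edge deletion gives $G'\in\mathcal{G}$.

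The second and crucial step is to apply Lemma~\ref{LEM:Greedy} to $G$: it tells us that the \emph{only} $t$--spanner of $G'$ is $G'$ itself. Hence, when we instantiate the definitions of $m(t,n)$ and $\ell(t,n)$ at the particular graph $G'\in\mathcal{G}$, the inner infimum is taken over the singleton family $\{G'\}$, is therefore attained, and equals $|E'|$ (respectively $W(E')$). Since $m(t,n)$ and $\ell(t,n)$ are suprema over all graphs in $\mathcal{G}$, and $G'$ is one admissible such graph, we conclude $|E'|\le m(t,n)$ and $W(E')\le\ell(t,n)$, which is precisely the assertion of the theorem.

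It is worth stressing in the write-up that essentially all the content lives in Lemma~\ref{LEM:Greedy}; no further combinatorial argument about the greedy ordering or the structure of $G$ is needed for Theorem~\ref{THM:GreedyExistentialOptimal}. The only point deserving a moment of care is that the inner infima in the definitions of $m(t,n)$ and $\ell(t,n)$ need not be attained in general — but here they trivially are, because Lemma~\ref{LEM:Greedy} has pinned the relevant family of $t$--spanners of $G'$ down to a single element. The hypothesis $t>1$ is used only to invoke Lemma~\ref{LEM:Greedy} (which needs $t\ge1$), and the edge-deletion closure is used only to make $G'$ a legitimate competitor inside the defining suprema; the paper's standing convention that graphs are connected ensures Lemma~\ref{LEM:Greedy} applies as stated. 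The ``hard part,'' to the extent there is one, is simply recognizing that existential optimality — indeed the stronger worst-case-over-the-whole-class statement — reduces to the rigidity property that a greedy $t$--spanner is its own unique $t$--spanner.
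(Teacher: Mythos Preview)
Your proposal is correct and follows essentially the same approach as the paper: both arguments use closure under edge deletion to place $G'$ in $\mathcal{G}$, then invoke Lemma~\ref{LEM:Greedy} to identify the inner infimum at $G'$ with $|E'|$ (respectively $W(E')$), and finish by comparing to the defining supremum. The paper's write-up is slightly terser but the logic is identical.
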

\begin{proof}
The key ingredient is that since a $t$--spanner of $G\in\mathcal{G}$ can be obtained by edge deletion, it must be in $\mathcal{G}$ as well.  Combining this observation with Lemma \ref{LEM:Greedy} yields the desired bounds immediately.  Indeed, let $G\in\mathcal{G}$ be arbitrary, and let $G'$ be its greedy $t$--spanner.  Since $G'$ can be obtained from $G$ by edge deletion, it is in $\mathcal{G}$; hence by assumption $G'$ has a $t$--spanner, say $G''$, which has at most $m(t,n)$ edges and weight at most $\ell(t,n)$.  But since $G''=G'$, the proof is complete. 
\end{proof}

 For a proof of optimality of the greedy algorithm for geometric graphs, see~\cite{borradaile2019greedy}.


\subsection{Arbitrarily Bad Greedy $t$--spanners}

Universal optimality of course implies existential optimality, but the reverse is patently untrue.  Here, we note that not only is the greedy $t$--spanner not universally optimal, but moreover there is a whole family of graphs for which the greedy $t$--spanner is as far away as possible from the optimal $t$--spanner.  This family of examples is inspired by the one given by Filtser and Solomon \cite{filtser2016greedy} based on the Petersen graph.  Consider a complete bipartite graph on $n$ edges (seen in Figure \ref{FIG:BipartiteGreedy}) where each edge has weight 1.  Subsequently, we add an extra vertex, which is connected to every vertex in the original bipartite graph by an edge with weight $1+\varepsilon$.  Suppose that $2<t<3$ is fixed; then if $\varepsilon$ is suitably small, the greedy $t$--spanner of $G$ is the the bipartite graph and half of the edges connecting the additional vertex.  On the other hand, the optimal $t$--spanner is the {\color{black}star} graph that sits atop of the original bipartite graph.

\begin{figure}[!h]
\centering
\includegraphics[trim={0 0.2in 0 0},clip,width=.3\textwidth]{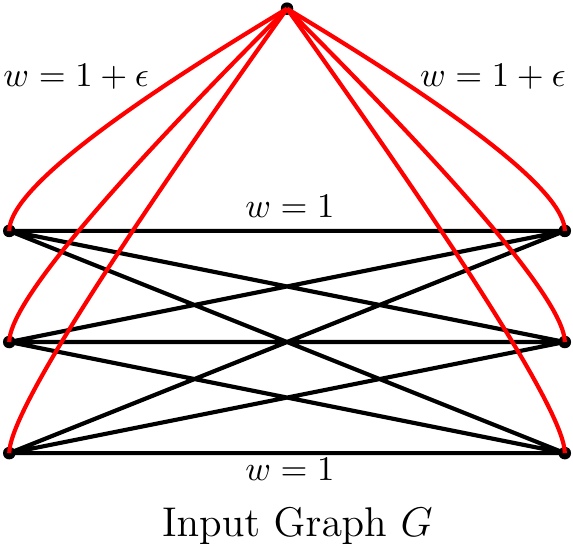}\qquad
\includegraphics[trim={0 0.2in 0 0},clip,width=.3\textwidth]{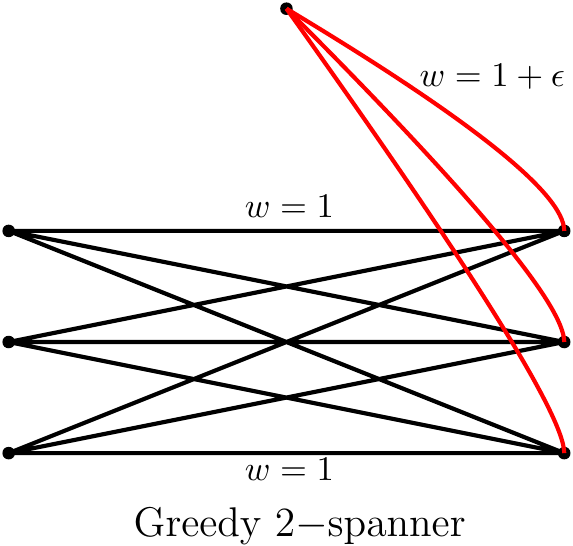}\qquad
\includegraphics[trim={0 0.2in 0 0},clip,width=.3\textwidth]{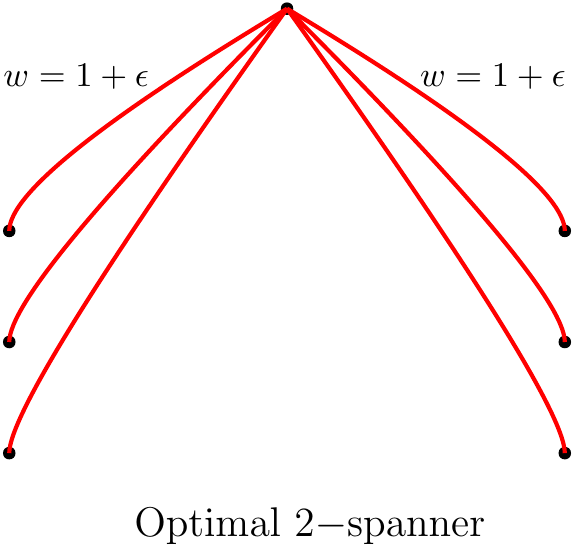}

\caption{(\textbf{Top Left}) Input Graph $G$, (\textbf{Top Right}) The Greedy $t$--spanner for any $2<t<3$, (\textbf{Bottom}) the optimal $t$--spanner for any $2<t<3$.}\label{FIG:BipartiteGreedy}
\end{figure}

In this case, if $G'=(V,E')$ is the greedy $t$--spanner, and $G_{\text{OPT}}=(V,E_{\text{OPT}})$ is the optimal $t$--spanner, we find that
\[ |E_{\text{OPT}}| = n,\quad |E'| = n^2+n, \]
and
\[ W(E_{\text{OPT}}) = n(1+\varepsilon), \quad W(E') = n^2+n(1+\varepsilon).\]
In other words,
\[\dfrac{|E'|}{|E_{\text{OPT}}|} = \Theta(n),\quad  \dfrac{W(E')}{W(E_{\text{OPT}})} = \Theta(n).\]  Thus, greedy $t$--spanners can be arbitrarily worse in terms of both sparsity and lightness than optimal $t$--spanners.

\subsection{Sparsity and Lightness Guarantees for Greedy $t$--spanners}

\begin{itemize}
\setlength{\itemindent}{.2in}
\item[\cite{chandra1992new}] \bibentry{chandra1992new}
\item[\cite{elkin2014light}]\bibentry{elkin2014light}
{\color{black}\item[\cite{elkin2016fast}]\bibentry{elkin2016fast}}
\item[\cite{chechik2018near}] \bibentry{chechik2018near}
\end{itemize}

Implicit in the formulation of the $t$--spanner problem is the idea that a good $t$--spanner for a graph should contain very few edges while still preserving distances in the manner prescribed.  While Algorithm \ref{ALG:Greedy t spanner} does not require the greedy ordering of the edges to produce a $t$--spanner, this ordering is crucial to constructing a sparse $t$--spanner, i.e. one that has few edges and/or small total edge weight.  The original bounds given by Alth\"{o}fer et al.~are the following.
\begin{theorem}[{\cite[Theorem 1]{althofer1993sparse}}]\label{THM:Althoferweak}
Let $G' =(V,E') = \textsc{GREEDYSPANNER}(G,2t+1)$ where $t>0$. Then $G'$ is a $(2t+1)$--spanner of $G$ and
\begin{enumerate}
    \item $|E'| < n \ceiling{n^\frac1t}$, 
    \item $W(E') < (1+\frac{n}{2t})W(\MST(G))$.
\end{enumerate}
\end{theorem}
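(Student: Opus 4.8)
\textbf{Proof proposal for Theorem \ref{THM:Althoferweak}.}

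The plan is to combine the cycle-free / girth property of the greedy $t$--spanner with the Moore bound (Proposition \ref{PROP:Moore}) for the sparsity claim, and then use a weighted analogue of the same girth argument for the lightness claim. First I would record the key structural fact, already noted in Section \ref{SEC:greedymult}: the output $G'$ of \textsc{GreedySpanner}$(G, 2t+1)$ contains no cycle on $\le 2t+2$ edges, i.e.\ $G'$ has girth $> 2t+1$, and hence girth $\ge 2t+2$. (This uses the greedy ordering: when the last edge $uv$ of a short cycle $C$ is considered, all other edges of $C$ are already present, forming a $u$--$v$ path of weight $\le W(C) - w_{uv} \le (2t+1)w_{uv}$ by the ordering and the fact that a cycle on $\le 2t+2$ edges has $\le 2t+1$ other edges each of weight $\le w_{uv}$, so $uv$ is rejected.)

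For part (1): apply Proposition \ref{PROP:Moore} with $k = 2t+1$. Since $\lfloor (2t+1)/2 \rfloor = t$, any $n$--vertex graph of girth $> 2t+1$ has $O(n^{1+1/t})$ edges, and the more careful folklore counting (the version in \cite{althofer1993sparse}) gives the explicit bound $|E'| < n\lceil n^{1/t}\rceil$. Concretely I would run the BFS argument directly rather than quote the asymptotic version: fix any vertex $v$, do BFS in $G'$ to depth $t$; because girth $> 2t+1$, the balls of radius $t$ around distinct vertices... actually the cleanest route is: in a graph of girth $\ge 2t+2$, for each vertex the $t$-neighborhoods are "tree-like," and a counting of vertices within distance $t$ of a fixed vertex forces the average degree to be $< \lceil n^{1/t}\rceil$, whence $|E'| = \frac{1}{2}\sum \deg < n\lceil n^{1/t}\rceil$. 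I would just cite \cite{althofer1993sparse} for the precise constant and ceiling, as the excerpt's proof sketch of Proposition \ref{PROP:Moore} already does.

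For part (2): this is the part requiring genuine work and I expect it to be the main obstacle. The idea is to partition $E'$ by weight scale. Let $E_i$ be the set of edges $e \in E'$ with $w_e \in (W(\MST)\cdot 2^{i-1}/?, \ldots]$ — more precisely, following Alth\"ofer et al., group the greedy edges into weight classes where within each class the weights are within a factor related to $t$. Within a single weight class, the subgraph is still girth-$> 2t+1$ (subgraph of $G'$), but now I want a bound not on the number of edges but on their total weight. The crucial observation is: an edge $e$ added by the greedy algorithm satisfies $\dist_{G'_{\text{before}}}(u,v) > (2t+1)w_e$, so adding $e$ creates, together with the existing shortest $u$--$v$ path, either a long cycle or (if $u,v$ were disconnected) contributes like an MST edge. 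Chopping edges into $\log$-scale buckets and applying the sparsity bound of part (1) within each bucket, the total weight telescopes; summing the geometric series over buckets yields the $(1 + \frac{n}{2t})W(\MST(G))$ bound. The MST edges (those joining previously-disconnected components) contribute exactly $W(\MST(G))$, and the remaining "cycle-creating" edges, bucketed by scale and bounded via part (1), contribute the extra $\frac{n}{2t}W(\MST(G))$ term. I would carry out the bucketing carefully — the factor $\frac{1}{2t}$ comes from the girth being $2t+2$ so that each bucket's contribution is suppressed by roughly $1/(2t)$ relative to a naive bound — and reference \cite[Theorem 1]{althofer1993sparse} for the detailed arithmetic rather than reproduce every inequality here. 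The main delicacy is making the weight-bucketing precise enough that the per-bucket application of the girth/sparsity bound is valid and the geometric sum actually closes to the stated constant.
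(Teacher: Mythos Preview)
Your approach to part (1) is essentially the same as the paper's: the survey proves only the asymptotic version (Theorem \ref{THM:Althofer}(\ref{ITEM:greedyedges})), and does so exactly as you describe --- the greedy output has girth $> 2t+1$, so the Moore bound (Proposition \ref{PROP:Moore}) gives $O(n^{1+1/t})$ edges. The survey does not reproduce the careful BFS counting needed for the explicit constant $n\lceil n^{1/t}\rceil$ in Theorem \ref{THM:Althoferweak}(1); like you, it defers that to \cite{althofer1993sparse}.

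For part (2), the survey gives \emph{no proof at all} --- it simply states the bound and cites \cite{althofer1993sparse}, then moves on to discussing later improvements. So there is nothing in the paper to compare your weight-bucketing sketch against. Your outline is in the right direction (separating MST-like edges from cycle-creating edges, grouping the latter by weight scale, and summing), which is indeed the shape of the original argument in \cite{althofer1993sparse}, but be aware that your description is vague in the places that matter: the precise bucketing scheme, why the per-bucket edge count combines with the bucket's weight scale to give a geometric-type sum, and how exactly the $\frac{n}{2t}$ constant emerges. If you intend to actually write this out rather than cite it, those are the steps that need to be made precise.
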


However, in modern papers a slightly strengthened version of this theorem is usually quoted:

\begin{theorem}[{\cite[Theorem 1, Strengthened]{althofer1993sparse}}]\label{THM:Althofer}
Let $k$ be a positive integer and let $G' = (V,E') = \textsc{GREEDYSPANNER}(G,2k-1)$.  Then $G'$ is a $(2k-1)$--spanner of $G$ and
\begin{enumerate}
    \item \label{ITEM:greedyedges} $|E'| = O\left(n^{1+\frac1k}\right)$, 
    \item $W(E') = O(1 + \frac{n}{k})W(\MST(G))$.
\end{enumerate}
\end{theorem}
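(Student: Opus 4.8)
The statement has two parts, and I would prove them by entirely different (and both short) routes: the edge bound comes from short-cycle-freeness of the greedy spanner together with the Moore bound, while the weight bound is essentially a reindexing of Theorem~\ref{THM:Althoferweak}.

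For the first bound (item~\ref{ITEM:greedyedges}), I would invoke the fact recorded in Section~\ref{SEC:greedymult} that the output of \textsc{GreedySpanner}$(G,t)$ contains no cycle on $\le t+1$ edges. Taking $t=2k-1$, this says $G'$ has no cycle on $\le 2k$ edges, i.e.\ $G'$ has girth $>2k$. Hence $|E'|$ is at most $\gamma(n,2k)$, and the Moore bound (Proposition~\ref{PROP:Moore}) with parameter $2k$ gives $\gamma(n,2k)=O\!\left(n^{1+1/\lfloor 2k/2\rfloor}\right)=O\!\left(n^{1+1/k}\right)$, as claimed. It is worth stressing that this cannot simply be read off Theorem~\ref{THM:Althoferweak}: setting $t=k-1$ there yields only $|E'|<n\lceil n^{1/(k-1)}\rceil=O(n^{1+1/(k-1)})$, which is strictly weaker; the improvement genuinely uses both that the girth is as large as $2k+1$ (not merely $\ge 2k$) and the exact floor in the Moore exponent.

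For the second bound, assume $k\ge 2$ (the case $k=1$ is degenerate) and set $t:=k-1\ge 1$, so $2t+1=2k-1$ and $G'=\textsc{GreedySpanner}(G,2k-1)$ is exactly the output of $\textsc{GreedySpanner}(G,2t+1)$. Theorem~\ref{THM:Althoferweak} then gives $W(E')<\left(1+\tfrac{n}{2(k-1)}\right)W(\MST(G))$, and since $2(k-1)\ge k$ for $k\ge 2$ we have $\tfrac{n}{2(k-1)}\le\tfrac{n}{k}$, so $W(E')<\left(1+\tfrac{n}{k}\right)W(\MST(G))=O\!\left(1+\tfrac{n}{k}\right)W(\MST(G))$. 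If instead one wanted a self-contained argument, one re-runs Alth\"{o}fer's bucketing: $G'$ contains an MST $T$ of $G$ because, like Kruskal's algorithm, the greedy algorithm adds every edge joining two components of the current subgraph; and for each $e=uv\in E'\setminus T$ the tree path $\pi_T(u,v)$ consists of edges no heavier than $e$ and is already present when $e$ is processed, so $(2k-1)\,w_e<d_{G'}(u,v)\le W(\pi_T(u,v))$, which one aggregates over weight scales against $W(T)$.

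There is no deep obstacle here — the statement is a repackaging of results already in hand — and the only real care needed is bookkeeping: matching the stretch conventions ($t$ versus $k$), the $\lfloor\cdot\rfloor$ in the Moore exponent, and the fact that the greedy spanner's girth is $2k+1$ rather than $2k$. The one point that must not be glossed over is that the edge bound has to come from Proposition~\ref{PROP:Moore} applied to the precise girth, not inherited from Theorem~\ref{THM:Althoferweak}. In a version of the proof that avoids Theorem~\ref{THM:Althoferweak} altogether, the genuine work would be the lightness charging argument: controlling how many non-tree greedy edges route through a given MST edge, which is exactly where the high-girth structure and the weight-scale decomposition are both needed.
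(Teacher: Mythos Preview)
Your proof of item~(\ref{ITEM:greedyedges}) is essentially identical to the paper's: the paper also argues that the greedy spanner with parameter $2k-1$ has girth $>2k$ and then applies the Moore bound (Proposition~\ref{PROP:Moore}) to conclude $|E'|=O(n^{1+1/k})$. Your added remark that this does \emph{not} follow from Theorem~\ref{THM:Althoferweak} (which would only yield $O(n^{1+1/(k-1)})$) is correct and a useful clarification the paper leaves implicit.

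For item~(2), the paper gives no proof at all --- it proves only item~(\ref{ITEM:greedyedges}) and leaves the weight bound unargued. Your reduction to Theorem~\ref{THM:Althoferweak} via $t=k-1$ is a clean way to obtain it, and the inequality $2(k-1)\ge k$ for $k\ge 2$ is exactly what is needed. The sketch of the self-contained charging argument (greedy contains an MST, non-tree edges are charged against their tree path) is the standard Alth\"{o}fer et al.\ argument and is also fine. One small imprecision: in your closing paragraph you write ``the greedy spanner's girth is $2k+1$'' where you mean ``at least $2k+1$'' (equivalently, $>2k$); this does not affect the argument.
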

\begin{proof} [Proof of Theorem \ref{THM:Althofer}, (\ref{ITEM:greedyedges})]
As discussed in Section \ref{SEC:greedymult} above, the greedy spanner with parameter $2k-1$ has no cycles on $\le 2k$ edges; that is, it has girth $>2k$.
By the Moore Bounds (Proposition \ref{PROP:Moore}), it thus has $O\left(n^{1+\frac{1}{k}}\right)$ edges.
\end{proof}

An important observation implicit in the stronger phrasing is that it is without loss of generality to consider only \emph{odd integer} stretch parameters for multiplicative spanners, at least with respect to extremal spanner size.
This essentially follows from two graph-theoretic facts.
First is that (as discussed in Section \ref{SEC:greedymult}) the extremal sparsity of a $t$--spanner is the same as the extremal sparsity $\gamma(n, t)$ of a graph with girth $>t$, and girth is an integer parameter.
Thus we have (say) $\gamma(n, 5) = \gamma(n, 5.5)$, and so the extremal sparsity of a $5$--spanner is the same as the extremal sparsity of a $5.5$--spanner, even though the latter is strictly more accurate.
Thus the size bound for $5$--spanners is a strictly stronger result than the size bound for $5.5$--spanners.

The second fact is that $\gamma(n, 2k) = {\color{black}\Theta(\gamma(n, 2k+1))}$.
This holds for the following reason: given a graph $G$ with girth $>2k$, one can find a bipartite subgraph $H$ by placing each node on the left or right side of the bipartition with probability $\frac{1}{2}$, and then keeping only edges crossing the divide.
One computes that each edge survives in $H$ with probability $\frac{1}{2}$, and thus (in expectation) the density of $H$ is within a constant fraction of the density of $G$.
On the other hand, all cycles in $H$ are even, so $H$ has girth $>2k+1$.
Together, these facts that the extremal size of (say) a $5$--spanner is the same as the extremal size of a $6.99$--spanner, up to constant factors.
Thus if we hide constant factors, as in Theorem \ref{THM:Althofer}, the size bounds for odd integer stretch subsumes all other possible stretch values.

 The size bound in Theorem \ref{THM:Althofer} is tight assuming the Girth Conjecture; however, the weight bound is not.  Indeed, Chandra et al.~\cite{chandra1992new} improved the bound to $O_\epsilon(k\cdot n^\frac{1}{k})$ for the greedy $(2k-1)(1+\epsilon)$--spanner. Elkin et al.~\cite{elkin2014light} further improved the bound to $O_\epsilon(\frac{k}{\log k}\cdot n^\frac{1}{k})${\color{black}; the construction time of this result was then improved in~\cite{elkin2016fast}}. Recently, Chechik and Wulff-Nilsen \cite{chechik2018near} proposed a new spanner construction with lightness $O_\epsilon(n^\frac{1}{k})$, completely removing dependency of the factor on $k$. According to the existential optimality of the greedy spanner as shown by Filtser and Solomon~\cite{filtser2016greedy}, the same sparseness bound holds for the original greedy spanner algorithm. This result is optimal up to a $(1+\epsilon)$ factor in the stretch provided the girth conjecture is true.

\subsection{Summary of Greedy Algorithm Guarantees}

\begin{table}[H]
     \centering
     \begin{tabular}{|c|c|c|c|c|}
     \hline
         Stretch ($t$)  & Size: $O(|E'|)$ & Weight: $O(W(E'))$ & Time & Ref.  \\
         \hline\hline
          $2k-1$  & $n^{1+\frac1k}$ & $(1+\frac{n}{2k})W(\textnormal{MST}(G))$ & $m(n^{1+\frac1k}+n\log n)$ & \cite{althofer1993sparse}\\
         \hline
         $(2k-1)(1+\eps)$  & $n^{1+\frac1k}$ & $kn^\frac1k \left(\frac1\eps\right)^{1+\frac1k}$ & $m(n^{1+\frac1k}+n\log n)$ & \cite{chandra1992new}\\
         \hline
          $(2k-1)(1+\eps)$  & $n^{1+\frac1k}$ & $kn^\frac1k \left(\frac1\eps\right)^{1+\frac1k}$ & $n^\frac1k\left(1+\frac{k}{\eps^{1+\frac1k}\log k}\right)$ & \cite{elkin2014light}\\
         \hline
     \end{tabular}
     \caption{Spanners given by the Greedy Algorithm. Here $n=|V|$ and $m=|E|$. }
     \label{TAB:Greedy}
 \end{table}

\subsection{Open Questions}

\begin{enumerate}
\item Is there a natural reverse greedy algorithm for constructing a $t$--spanner?  That is, one which begins with the full edge set $E$ ordered in nonincreasing order, and deletes edges according to some criteria to arrive at a $t$--spanner.

\item Is there a greedy algorithm to produce additive spanners, $(\alpha,\beta)$--spanners, subsetwise spanners, or more generally pairwise spanners?
(The additive spanner construction of \cite{Knudsen14} has a greedy step; can it be brought completely in line with the multiplicative greedy algorithm, or understood by a similar analysis?)
\end{enumerate}

\section{Clustering and Path Buying Methods}\label{SEC:Cluster}

Another prominent set of techniques for computing graph spanners is what we call \emph{clustering and path buying} techniques.  In the clustering phase, one starts from a set of vertices of a given graph as initial clusters and then expand each to produce a clustering of the graph.  Typically at the same time, we add edges within the clusters to the candidate spanner.  In the path buying phase, we then add (buy) cheap edges to achieve the desired spanner.  Let us stress here that the clustering we are discussing is much different than common graph clustering algorithms such as $k$--means or spectral clustering whose aim is to partition the graph into disjoint clusters.  

\subsection{General Approach}
Here we provide a general definition for clustering in the context of spanners. 
\begin{definition}

A clustering $\mathscr{C}=\{ C_1, \dots , C_k \}$ of the graph $G=(V,E)$ is a collection of clusters $C_i \subseteq V$.
\end{definition} 

While this definition is quite general, typically one imposes extra conditions on the clustering based on the specific problem at hand.  For instance, one may require
$ \bigcup_i C_i = V$, or one may require that clusters are pairwise disjoint. 

Intuitively, the clustering step aims to attack the spanner problem locally, and the final step makes global adjustments to find the appropriate spanner.  As an example of this global adjustment, one can iterate through a given set of vertex  pairs $(u,v)$  and make sure there exists a $u$-$v$ path with the desired stretch factor.

To better illustrate common variants of this set of techniques, we first list some useful general concepts to keep in mind. Variants of each of these will be used in the different algorithms discussed in the sequel.
\begin{itemize}\normalsize
\item \textbf{Value of a Path}: The value of the path is the overall improvement of the stretch factors that we gain by adding the path to the spanner under construction. For example, for a given path $\rho$ we can measure the number of clusters that will be closer together after adding $\rho$ to the spanner. 

\item \textbf{Cost of a Path}: In the path buying phase of the algorithm, we may decide to add some edges to the spanner to maintain a given path in $G$. The cost of a path typically is the weight of the new edge set that we are adding to the spanner.



\end{itemize}

Most clustering and path buying algorithms run in two stages, which we summarize in the following proto-algorithms.  Typical clustering algorithms begin with singleton vertices as clusters, which are then grown according to some rule which we simply call Rule 1, and subsequently edges are added to the spanner based on different criteria (called Rule 2) depending on if they connect vertices within a cluster or not. 

\begin{algorithm}[H]
\caption{\textsc{Proto-Clustering}($G$, Rule 1, Rule 2)}
\label{ALG:ProtoClustering}
\begin{algorithmic}
\State Initialize $E'\gets \emptyset$
\State Choose initial clusters $C_1,\dots C_k$ (deterministically or randomly)
\For{$v\in V$}
\State Add nearby vertices to a cluster, and add nearby edges to $E'$ according to Rule 1
\EndFor
\State If an edge is not nearby a cluster, add it to $E'$ according to Rule 2\\
\Return $E'$
\end{algorithmic}
\end{algorithm}

To specify a proto-algorithm for the path buying stage, one first chooses a notion of the cost and value of a path which will be denoted cost$(\rho)$ and value($\rho$) for a given path $\rho$, respectively.  With these as parameters, the path-buying proto-algorithm may be stated as follows (the set of all paths in $G$ will be denoted by $\PP$).

\begin{algorithm}[H]
\caption{\textsc{Proto-Path Buying}$(G,\textnormal{cost}(\rho),\textnormal{value}(\rho),\alpha$, Rule 1, Rule 2)}
\label{ALG:ProtoPathBuying}
\begin{algorithmic}
\State Initialize $E'\gets $\textsc{Proto-Clustering}($G$, Rule 1, Rule 2)
\For{$\rho\in\PP$}
\If{$\textnormal{cost}(\rho)\leq\alpha\cdot\textnormal{value}(\rho)$}
\State $E'\gets E'\cup \{e\in\rho\}$
\EndIf
\EndFor\\
\Return $E'$
\end{algorithmic}
\end{algorithm}

The proto-algorithms given here have many degrees of flexibility: namely the initialization of clusters, the rules for adding clustered and unclustered edges to $E'$, the definitions of cost and value of a path, and the relation of the final two quantities.  As a general rule, the clustering phase is the cheap one in terms of run-time complexity, whereas the path buying phase is more expensive due to the fact that one typically runs through all possible paths in $G$.

Let us also note that several of the clustering algorithms utilize a random edge selection step; however, the guarantees for the resulting spanners are deterministic.  For algorithms which produce spanners only with high probability, see Section \ref{SEC:Probabilistic}.

\subsection{Illustrating Example}\label{SEC:ClusterExample}

\begin{itemize}
\setlength{\itemindent}{.2in}
    \item[\cite{baswana2010additive}] \bibentry{baswana2010additive}
\end{itemize}

Let us begin with one of the simpler algorithms in the vein described above given by Baswana et al.~\cite{baswana2010additive} which computes a $(1,6,V\times V)$--spanner (i.e. an additive 6--spanner) of an unweighted graph.  We describe only the terminology and rules required to state their algorithm in terms of the proto-algorithms of the previous section.

\textbf{Initialization of Clusters:}  first, $|V|^\frac23$ cluster centers are chosen uniformly randomly from $V$.  That is, we have $C_1 = \{v_1\}, \dots, C_{k}=\{v_k\}$ for $k=|V|^\frac23$.

\textbf{Rule 1:} if $v$ is adjacent to a cluster center, it joins an arbitrary cluster that it is adjacent to, and the corresponding edge is added to $E'$.

\textbf{Rule 2:} if $v$ has no adjacent cluster center, then it is left unclustered and all edges incident to $v$ are added to $E'$.

For the path-buying phase of this algorithm, the parameters involved are defined as follows:

\begin{equation}\label{EQ:Cost} \textnormal{cost}(\rho):= |\rho \setminus E'|,\end{equation}
that is, the number of edges in $\rho$ that are not already in the spanner, and
\begin{align*}
\textnormal{value}(\rho)&:=|\{(C_1,C_2)\in\mathscr{C}\times\mathscr{C}: \rho \textnormal{ intersects } C_1,C_2 \textnormal{ and } \\ &\textnormal{dist}_{G'}(C_1,C_2) \textnormal{ decreases after adding } \rho \textnormal{ to } G'\} \\
\alpha &:= 2
\end{align*}
where $G'$ is the current spanner and $\textnormal{dist}_{G'}(C_1,C_2)$ denotes the length of the shortest path in $G'$ between terminal vertices that lie in $C_1, C_2$ respectively.  

\begin{theorem}[{\cite[Theorem 2.7]{baswana2010additive}}]\label{THM:16Spanner}
Given the cluster initialization, Rule 1, 2, and definitions of $\alpha$, cost and value of paths above, the algorithm \textnormal{PROTO-PATH BUYING} returns a $(1,6,V\times V)$--spanner of $G$.
\end{theorem}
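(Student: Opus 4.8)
The plan is to verify the additive spanner condition $d_{G'}(u,v)\le d_G(u,v)+6$ directly for every pair $u,v\in V$; unlike the multiplicative case (Proposition~\ref{PROP:tspanner verification}) there is no reduction to checking only edges of $G$ (see Remark~\ref{REM:Greedy}), so one must reason about an arbitrary shortest path. Fix $u,v$ and a shortest $u$-$v$ path $\pi$ in $G$, and let $E'$ be the output of Algorithm~\ref{ALG:ProtoPathBuying} with the stated rules. First I would exploit Rule~2: every edge incident to an \emph{unclustered} vertex lies in $E'$. Hence the maximal unclustered prefix of $\pi$, together with the edge joining it to the first clustered vertex, lies entirely in $E'$, and likewise for the suffix and for every maximal unclustered stretch in the interior of $\pi$. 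So it suffices to bound $d_{G'}(u',v')$ where $u',v'$ are the first and last clustered vertices of $\pi$, and the only edges of this sub-path that can be missing from $E'$ are those joining two \emph{consecutive} clustered vertices. I would also record two structural facts: (i) in any cluster $C$ with center $c$, every vertex is joined to $c$ by a spoke edge of $E'$, so $d_{G'}(c,x)\le 1$ for all $x\in C$ and $\diam_{G'}(C)\le 2$; and (ii) since $\pi$ is a shortest path and any two vertices of a common cluster are at $G$-distance $\le 2$, the occurrences of any single cluster along $\pi$ lie in a window of at most three consecutive vertices, so the clusters met by $\pi$ can be listed as $C_0\ni u',C_1,\dots,C_s\ni v'$ in an order essentially compatible with traversal.

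Next I would establish a clean building block from the path-buying rule applied to \emph{single edges}: for any two $G$-adjacent distinct clusters $C,C'$, the set $E'$ contains an edge between $C$ and $C'$, so $d_{G'}(C,C')=1$. Indeed, for a single-edge candidate $\rho=ab$ with $a\in C$, $b\in C'$ one has $\textnormal{cost}(\rho)\le 1$ by \eqref{EQ:Cost}, while $\textnormal{value}(\rho)\in\{0,1\}$ is $1$ exactly when no $C$-$C'$ edge is yet present; so the first such edge considered is either already in $E'$ or is bought (since $1\le 2\cdot 1$), after which $d_{G'}(C,C')=1$ permanently (distances never increase). Now treat $\pi$ restricted to $[u',v']$ as a path-buying candidate. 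If it was bought, then $d_{G'}(u',v')=|\pi|$ and we are done. Otherwise, at the moment $\pi$ was considered we had $\textnormal{cost}(\pi)>2\,\textnormal{value}(\pi)$; writing $M$ for the number of missing edges of $\pi$, this reads $M>2\,\textnormal{value}(\pi)$. The cleanest sub-case is when the endpoint pair $(C_0,C_s)$ is \emph{not} counted in $\textnormal{value}(\pi)$: then adding $\pi$ does not shorten $d_{G'}(C_0,C_s)$, so $d_{G'}(C_0,C_s)\le|\pi|$ already, and routing within $C_0$ and within $C_s$ (cost $\le 2$ each, by (i)) gives $d_{G'}(u',v')\le 2+|\pi|+2=|\pi|+4$. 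A similarly short bypass disposes of the case $M\le 1$: the single missing edge, lying between clusters $C$ and $C'$, is detoured through the two centers and the cross edge guaranteed above, for length $\le 2+1+2=5$.

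The remaining case — and the main obstacle — is when $\pi$ has $M\ge 2$ missing edges, is not bought, and $(C_0,C_s)$ \emph{is} counted in $\textnormal{value}(\pi)$. Here bypassing each missing edge individually costs up to $+4$ apiece and is hopeless; one must genuinely use the threshold $\textnormal{cost}\le 2\,\textnormal{value}$ to argue that a path carrying many missing edges is simply bought outright, leaving only a bounded number of ``stubborn'' cluster transitions to pay for by hand. I would induct on $|\pi|$ (equivalently on $M$): in the surviving case, cut $\pi$ at a suitable cluster $C_i$ so that the prefix up to $C_i$ and the suffix from $C_i$ are \emph{strictly shorter} shortest paths to which the inductive bound applies, and show that stitching at $C_i$ costs only $\le 2$ (again by (i)) and happens $O(1)$ times with total overhead at most $6$. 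Pinning the constant down to exactly $6$ is the delicate part: it requires tracking together the $+1$'s lost at spoke edges, the $+1$'s lost at cross edges, and the factor-$2$ slack of the buying rule, and it is sensitive to the window-of-three revisit phenomenon from (ii) and to which vertices of a cluster actually lie on $\pi$. This is the portion for which I would reproduce, or carefully cite, the dedicated case analysis of Baswana et al.~\cite{baswana2010additive}; the skeleton above is robust, but the final constant emerges only from that bookkeeping.
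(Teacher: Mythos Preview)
Your reduction to clustered endpoints via Rule~2, the cluster-diameter bound, and your handling of the cases where $\pi$ is bought or where $(C_0,C_s)$ is not counted in $\val(\pi)$ are all correct. Your single-edge lemma is also true but is a side remark, not the engine of the argument. The genuine gap is your main case. The induction you sketch cannot close: applying a $+6$ inductive bound to each half and paying $+2$ to stitch at $C_i$ yields $+14$, and nothing you wrote forces the recursion depth to be $O(1)$. The phrase ``happens $O(1)$ times'' is exactly where the missing idea hides; this is a structural step you have not supplied, not merely bookkeeping for the constant.

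The paper does not induct. It proves an \emph{Intermediate Cluster Property} in one step by pigeonhole. List the clusters $C_1,\dots,C_\ell$ along $\pi$ (so $u'\in C_1$, $v'\in C_\ell$), and consider the $2\ell-3$ pairs $A=\{\{C_1,C_j\}:j>1\}\cup\{\{C_i,C_\ell\}:i<\ell\}$. Let $A_0\subseteq A$ be those pairs whose $G'$-distance would drop upon adding $\pi$, so $|A_0|\le\val(\pi)$. A direct count of missing edges on $\pi$ (at most $\ell-1$ intercluster plus $\ell-2$ intracluster, with suitable tiebreaking for shortest paths) gives $\cost(\pi)\le 2\ell-3$; since $\pi$ was not bought, $2\,\val(\pi)<\cost(\pi)$, which makes $|A_1|=|A\setminus A_0|$ large enough that pigeonhole yields some $1<i<\ell$ with both $\{C_1,C_i\}\in A_1$ and $\{C_i,C_\ell\}\in A_1$. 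This means $d_{G'}(C_1,C_i)\le d_\pi(C_1,C_i)$ and $d_{G'}(C_i,C_\ell)\le d_\pi(C_i,C_\ell)$ already hold, without any recursion. Routing through $C_i$ and paying the diameter $+2$ at each of $C_1,C_i,C_\ell$ gives the additive $+6$ directly. Your case where $(C_0,C_s)$ is not counted in $\val(\pi)$ is the degenerate instance of this property; the pigeonhole step is what you are missing to extend it to the general case.
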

\begin{proof}[Sketch of Proof]
For the sake of illustration, we sketch the proof that the algorithm described in Theorem \ref{THM:16Spanner} produces a $(1,6)$--spanner. Consider the special case of two clustered nodes $u$ and $v$. For a shortest path $\rho$ from $u$ to $v$, we look at a sequence of clusters $C_1,\dots C_\ell$ intersecting $\rho$ with $u\in C_1$ and $v\in C_\ell$.
We will prove that after the path-buying phase, the spanner $G'$ satisfies the following property.

\textbf{ Intermediate Cluster Property (ICP)}: A shortest path $\rho$ satisfies the ICP if there exists a cluster  $C_i\,$ ($1< i<\ell$) along $\rho$ such that
\[
\textnormal{dist}_{G'}(C_1,C_i) \leq \textnormal{dist}_{\rho}(C_1,C_i),\quad \textnormal{dist}_{G'}(C_i,C_\ell) \leq \textnormal{dist}_{\rho}(C_i,C_\ell).
\]
Here, $\textnormal{dist}_{\rho}(C_i,C_j)$ is the distance from $C_i$ to $C_j$ along the prescribed path $\rho$. Note that in the description of Rule 2, $v$ joins only one of the neighbor clusters and hence there are two types of missing edges in $\rho$: those with both endpoints in the same cluster (intracluster) and those with endpoints belonging to different clusters (intercluster).  Now, assuming that the ICP is true for some shortest path $\rho$ and counting intracluster and intercluster missing edges, we can apply  the triangle inequality and use the fact that the diameter of each cluster is two, to get the desired $(1,6)$--spanner condition.


Now, our goal is to prove the ICP; to do so, we first define the following sets:
\[
A = \{ \{C_i,C_j\}  :  i=1 \, or \, j=\ell \}
\]
\[
A_0 = \{ \{C_i,C_j\} \in A  :  \textnormal{dist}_{G'}(C_i,C_j) > \textnormal{dist}_\rho(C_i,C_j) \},
\]
\[
A_1 = A \setminus A_0.
\]
Recall the definition for the value of a path:
\[
\textnormal{value}(\rho) = | \{ (C_i,C_j) :  \textnormal{dist}_{G'}(C_i,C_j) > \textnormal{dist}_{\rho}(C_i,C_j) \} |,
\]
hence we have $|A_0| \leq \textnormal{value}(\rho)$. 

\textbf{Claim:} $\val(\rho)<\ell -2$. 

If the claim is true then $|A_1| > \ell-1$, and using the pigeonhole principle there exists some $C_i$ satisfying the ICP. It remains to prove the claim. With a simple counting argument, and if shortest path ties are broken properly (roughly, shortest paths should stay in their current cluster as long as possible), given that $\ell$ is the number of clusters intersecting $\rho$, the number of missing intercluster edges in $G'$ is at most $\ell-1$ and the number of missing intracluster edges is at most $\ell -2 $. Therefore, $\cost(\rho)<2\ell -3$ and using the fact that $2 \val(\rho) < \cost(\rho)$,
we get $\val(\rho)<\ell-2$ which finishes the proof.
\end{proof}

\subsection{Theme and Variations}
\begin{itemize}
\setlength{\itemindent}{.2in}
\item[\cite{Surender03ALP}] \bibentry{Surender03ALP}
\item[\cite{pettie2009low}] \bibentry{pettie2009low}
\item[\cite{cygan13}] \bibentry{cygan13}
\item[\cite{kavitha2013small}] \bibentry{kavitha2013small}
\item[\cite{kavitha2017new}] \bibentry{kavitha2017new}
\end{itemize}


Cygan et al.~\cite{cygan13} give a polynomial time clustering and path buying algorithm for computing $(\alpha,\beta)$ pairwise, subsetwise, and sourcewise spanners.
Their clustering step is essentially a deterministic version of the one used by Baswana et al.~\cite{baswana2010additive}; this derandomization is orthogonal to the change from all-pairs to pairwise spanners, but we will include it here for completeness.

\subsubsection*{Clustering Phase}

This follows Algorithm \ref{ALG:ProtoClustering} with the following rules.
In the sequel, let $d$ be an integer parameter of the construction that we will choose later.

\textbf{Choice of Clusters:} Unmark all nodes, and then while there is a (possibly marked) node $v \in V$ with $\ge d$ unmarked neighbors, choose a set $C$ of exactly $d$ of its neighbors and add $C$ as a new cluster.
The node $v$ is called its \emph{center} (note that $v \notin C$).
We then mark all nodes in $C$, and repeat until we can do so no longer.

\textbf{Rule 1:} All edges with both endpoints contained in the same cluster are included in the spanner.  All edges between a cluster center and a node in its cluster are included in the spanner.

\textbf{Rule 2:} For each node $v$ not contained in a cluster, include all of its incident edges in the spanner.




The essential properties of this clustering step are as follows.

\begin{lemma}[\cite{cygan13}]
Algorithm \ref{ALG:ProtoClustering}, with the above parameters, provides a clustering $\mathscr{C}$ of $G$ and a subgraph $G'=(V,E')$ with the following properties:
\begin{enumerate}
    \item\label{ITEM:ClusterSize} The size of each cluster $C \in \mathscr{C}$ is exactly $|C| = d$.
    \item \label{ITEM:Clusters} The total number of clusters is $|\mathscr{C}| \le n/d$.
\item \label{ITEM:Neighbor} Any two nodes in the same cluster have a common neighbor in $G'$ and hence the diameter of any cluster is at most 2. 
\item\label{ITEM:SizeE} $|E'| = O(nd)$
\item  \label{ITEM:Absent} If an edge $uv$ is absent in $E'$, then $u$ and $v$ belong to two different clusters (In particular, they cannot be unclustered nodes).
\end{enumerate}
\end{lemma}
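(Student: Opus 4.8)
The plan is to read all five properties directly off the clustering construction, using one structural observation about the state of the process when the \texttt{while} loop halts: at that moment no vertex of $G$ — clustered or not — has $d$ or more unmarked neighbors, for otherwise the loop would still be running. Since a vertex ends up unclustered exactly when it is never marked, the unclustered vertices are precisely the vertices still unmarked at termination; write $U$ for this set. The observation then reads: every vertex of $G$ has fewer than $d$ neighbors in $U$. This single fact is what drives item \ref{ITEM:SizeE}, and it is essentially the only place any real argument is needed.

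Items \ref{ITEM:ClusterSize}, \ref{ITEM:Clusters}, \ref{ITEM:Absent} are essentially definitional. A cluster is created as a set of exactly $d$ neighbors of its center and is never altered afterward, which is \ref{ITEM:ClusterSize}. Each creation marks $d$ vertices that were previously unmarked (we take the required $d$ neighbors among those currently unmarked, as the while-condition makes possible), so distinct clusters are pairwise disjoint; their union has at most $n$ vertices, hence there are at most $n/d$ of them, giving \ref{ITEM:Clusters}. For \ref{ITEM:Absent}: if $uv\notin E'$ then neither endpoint lies in $U$ (Rule 2 would have inserted $uv$), so both are clustered, and they cannot share a cluster (Rule 1 inserts every intra-cluster edge), so they lie in two distinct clusters. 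For \ref{ITEM:Neighbor}, let $u,u'$ lie in a common cluster $C$ with center $w$: then $w$ is adjacent in $G$ to both, and Rule 1 puts all center-to-cluster edges into $E'$, so $w$ is a common $G'$-neighbor of $u$ and $u'$ and $d_{G'}(u,u')\le 2$; hence every cluster has diameter at most $2$ in $G'$.

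The substantive item is \ref{ITEM:SizeE}; I would split $E'$ by which rule inserted each edge. Rule 1 contributes at most $\binom{d}{2}$ intra-cluster edges plus $d$ center edges per cluster, hence $O(d^{2}\cdot n/d)=O(nd)$ in total by \ref{ITEM:Clusters}. For Rule 2 it remains to bound the edges incident to $U$, and here the termination observation is exactly the right tool: since every vertex has fewer than $d$ neighbors in $U$, the number of edges with both endpoints in $U$ is $\tfrac12\sum_{v\in U}|N(v)\cap U|<\tfrac12 d|U|\le\tfrac12 dn$, and the number of edges between $U$ and the clustered vertices is $\sum_{w\notin U}|N(w)\cap U|<d(n-|U|)\le dn$. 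Summing everything, $|E'|=O(nd)$.

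The only obstacle worth flagging is that, a priori, an unclustered vertex could have arbitrarily large degree — most of it to clustered vertices — which would seem to ruin the $O(nd)$ bound for Rule 2 edges; the resolution is to notice that the loop's stopping condition bounds the number of $U$-neighbors of \emph{every} vertex, not merely of the unclustered ones, and that is precisely what is needed.
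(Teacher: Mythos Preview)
Your proof is correct and follows essentially the same approach as the paper's own argument. The paper is somewhat terser—it dispatches items \ref{ITEM:ClusterSize}, \ref{ITEM:Neighbor}, \ref{ITEM:Absent} as ``direct from the algorithm,'' derives \ref{ITEM:Clusters} from \ref{ITEM:ClusterSize} and disjointness, and for \ref{ITEM:SizeE} uses precisely your key observation (every node has fewer than $d$ unmarked neighbors at termination), bounding Rule 2 edges by a union bound over all $n$ nodes rather than splitting into $U$--$U$ and $U$--$(V\setminus U)$ edges as you do—but the substance is identical.
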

\begin{proof}
Items (\ref{ITEM:ClusterSize}), (\ref{ITEM:Neighbor}) and (\ref{ITEM:Absent}) follow directly from the algorithm. (\ref{ITEM:Clusters}) follows directly from (\ref{ITEM:ClusterSize}) since clusters are node-disjoint.
Finally, we argue (\ref{ITEM:SizeE}) as follows.
In Rule 1, we add at most $d \choose 2$ edges per cluster, which is $O(nd)$ edges in total.
For each edge $uv$ added in Rule 2, we note that one (or both) endpoints are unmarked.
Since every node has $<d$ unmarked neighbors, by a union bound over the $n$ nodes we add $O(nd)$ edges in this step as well.
\end{proof}

\subsubsection*{Path Buying Phase}

Cygan et al.~give several variants of the path buying phase to obtain different types of spanners.  All begin by using the above Clustering algorithm, but vary the definition of value of paths and $\alpha$ as in the Proto-path buying algorithm.  The first variant is to define cost as in \eqref{EQ:Cost}, and value as
\begin{multline}\label{EQ:Path12} 
\textnormal{value}(\rho) :=|\{(x,C)\in S\times\mathscr{C}: \rho \textnormal{ intersects } x, C, \textnormal{ and } \textnormal{dist}_{G'}(x,C) \\ \textnormal{ decreases after adding } \rho \textnormal{ to } G'\}|.
\end{multline} 
With these definitions of cost and value of a path, the clustering + path-buying algorithm yields an additive subsetwise spanner with additive stretch 2 by setting $\alpha=2$.

\begin{theorem}[{\cite[Theorem 1.3]{cygan13}}; also \cite{elkincomm, pettie2009low}]\label{add-sourcewise}
 For any $ S \subseteq V $, given the above cluster initialization with parameter $d = \sqrt{|S|}$, cost and value of paths specified by \eqref{EQ:Cost}, \eqref{EQ:Path12}, and $\alpha=2$, the algorithm \textsc{PROTO-PATH BUYING} returns a $(1,2,S\times S)$--spanner of $G$ of size $O(n\sqrt{|S|})$.
\end{theorem}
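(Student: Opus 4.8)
The plan is to verify the two claims separately: that the returned subgraph $G'=(V,E')$ has $|E'| = O(n\sqrt{|S|})$, and that $d_{G'}(s,s') \le d_G(s,s') + 2$ for every $s,s' \in S$. For the stretch bound I would fix once and for all, for each pair $s,s' \in S$, a canonical shortest $s$--$s'$ path $\rho = \rho_{s,s'}$ in $G$, chosen so that $\rho$ meets each cluster of $\mathscr{C}$ in a single contiguous sub-segment (such a choice exists since the clusters are node-disjoint and one can always reroute within the shortest-path metric so as not to re-enter a cluster). If $\rho$ was bought during the path-buying phase then all of its edges lie in $E'$, so $d_{G'}(s,s') \le |\rho| = d_G(s,s')$ and we are done with room to spare. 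The only substantive case is when $\rho$ was not bought.

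So suppose $\rho$ was not bought, i.e.\ $\textnormal{cost}(\rho) > 2\,\textnormal{value}(\rho)$ at the moment $\rho$ was examined. Let $C_1,\dots,C_\ell$ be the clusters met by $\rho$ and let $F = \rho \setminus E'$ be its missing edges. By item (\ref{ITEM:Absent}) of the clustering lemma every edge of $F$ joins vertices of two distinct clusters (no missing edge touches an unclustered vertex, by Rule 2), and since $\rho$ meets each cluster contiguously there are at most $\ell-1$ inter-cluster transitions along $\rho$; hence $\textnormal{cost}(\rho) = |F| \le \ell-1$, so $\textnormal{value}(\rho) < (\ell-1)/2$. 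Call a cluster $C_i$ \emph{$s$-useless} if $\textnormal{dist}_{G'}(s,C_i) > \textnormal{dist}_\rho(s,C_i)$ in the current $G'$; if $C_i$ is $s$-useless then adding $\rho$ would strictly shorten $\textnormal{dist}_{G'}(s,C_i)$, so $(s,C_i)$ is counted in $\textnormal{value}(\rho)$, and likewise for $s'$. As the pairs $(s,\cdot)$ and $(s',\cdot)$ are distinct, $(\#\,s\text{-useless}) + (\#\,s'\text{-useless}) \le \textnormal{value}(\rho) < (\ell-1)/2$, so the number of clusters that are \emph{not} $s$-useless plus the number that are \emph{not} $s'$-useless exceeds $2\ell - (\ell-1)/2 > \ell$. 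By pigeonhole there is a cluster $C^*$ that is neither $s$-useless nor $s'$-useless, i.e.\ $\textnormal{dist}_{G'}(s,C^*) \le \textnormal{dist}_\rho(s,C^*)$ and $\textnormal{dist}_{G'}(s',C^*) \le \textnormal{dist}_\rho(s',C^*)$, and these persist as $G'$ only grows. Using item (\ref{ITEM:Neighbor}) (diameter $\le 2$), $d_{G'}(s,s') \le \textnormal{dist}_{G'}(s,C^*) + 2 + \textnormal{dist}_{G'}(C^*,s') \le \textnormal{dist}_\rho(s,C^*) + \textnormal{dist}_\rho(s',C^*) + 2$; since $C^*$ occupies a contiguous piece of $\rho$, those two path-distances sum to at most $|\rho| = d_G(s,s')$, giving $d_{G'}(s,s') \le d_G(s,s') + 2$.

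For the size bound, the clustering phase contributes $O(nd)$ edges by item (\ref{ITEM:SizeE}). In the path-buying phase, every bought path $\rho$ satisfies $\textnormal{cost}(\rho) \le 2\,\textnormal{value}(\rho)$, so the number of edges added there is at most $2\sum_{\rho\text{ bought}} \textnormal{value}(\rho)$. Each unit of $\textnormal{value}(\rho)$ is a pair $(x,C)\in S\times\mathscr{C}$ whose $G'$-distance strictly drops when $\rho$ is bought; since $G'$-distances are non-increasing integers bounded below (once finite) by the $G$-distance, a charging/telescoping argument over the $|S|\cdot|\mathscr{C}|$ pairs gives $\sum_{\rho\text{ bought}} \textnormal{value}(\rho) = O(|S|\cdot|\mathscr{C}|) = O(|S|\,n/d)$ by item (\ref{ITEM:Clusters}). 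Hence $|E'| = O(nd + |S|\,n/d)$, minimized by $d = \sqrt{|S|}$, giving $|E'| = O(n\sqrt{|S|})$.

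I expect the main obstacle to be the accounting in the size bound: showing the total value over all bought paths is genuinely $O(|S|\cdot|\mathscr{C}|)$ and not larger requires using the order in which pairs are processed together with the monotonicity of $G'$-distances, and requires care for pairs whose initial $G'$-distance is infinite. A secondary technical point is justifying that the canonical shortest paths may be taken to meet each cluster contiguously — this is exactly what makes both $\textnormal{cost}(\rho) \le \ell-1$ and the summation $\textnormal{dist}_\rho(s,C^*)+\textnormal{dist}_\rho(s',C^*)\le |\rho|$ valid.
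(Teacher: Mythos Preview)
Your approach is correct and is precisely the one from Cygan et al.\ (the survey itself states the result without proof, but the argument mirrors its sketch of the $(1,6)$--spanner in Theorem~\ref{THM:16Spanner}). Both of the concerns you flag resolve with one extra observation each.

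For the size bound: when a pair $(x,C)$ contributes to a bought path $\rho$, the subpath of $\rho$ from $x$ to the nearest vertex $c\in C\cap\rho$ is itself a shortest $x$--$c$ path in $G$ (since $\rho$ is a shortest path and $x$ lies on it), so afterwards $\textnormal{dist}_{G'}(x,C)\le d_G(x,c)\le d_G(x,C)+2$ by item~(\ref{ITEM:Neighbor}). As $\textnormal{dist}_{G'}(x,C)$ is an integer bounded below by $d_G(x,C)$, the pair can contribute at most three times in total, giving $\sum_{\rho\text{ bought}}\textnormal{value}(\rho)\le 3|S|\cdot|\mathscr{C}|=O(|S|\,n/d)$. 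This also disposes of the infinite-initial-distance worry: the very first contribution already lands the distance in a bounded window.

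For contiguity: you do not actually need it. Because any cluster has diameter $\le 2$ in $G$, its vertices on a shortest path $\rho$ lie within a window of three consecutive positions; hence each cluster is the right endpoint's cluster for at most two missing edges, giving $\textnormal{cost}(\rho)\le 2\ell$ unconditionally. Plugging this weaker bound into your pigeonhole (value $<\textnormal{cost}(\rho)/2\le\ell$, so $\#\text{not-}s\text{-useless}+\#\text{not-}s'\text{-useless}>2\ell-\ell=\ell$) yields the same cluster $C^*$, and the inequality $\textnormal{dist}_\rho(s,C^*)+\textnormal{dist}_\rho(s',C^*)\le|\rho|$ holds regardless of contiguity since the first and last occurrences of $C^*$ on $\rho$ bound disjoint prefix and suffix segments.
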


We remark that any choice of $d$ in the \textsc{CLUSTER} step will produce a $(1,2,S\times S)$--spanner, but the choice of parameter $d=\sqrt{|S|}$ is needed to minimize the size of the final spanner.
In particular, the clustering phase costs $O(nd)$ edges and the path buying step costs $O(|S| n / d)$ edges, and these balance at the choice $d = \sqrt{|S|}$.

The second variant of the path-buying algorithm uses path-buying as a \emph{preprocess} in a larger spanner construction.
This is more complicated to prove, but the statement is as follows:

\begin{theorem}[{\cite[Theorem 1.1]{cygan13}}]
\label{add-pairwise}
 For any $ \eps > 0 $, positive integer $k$, and any $P \subseteq V\times V$, given the above cluster initialization with parameter
 $$d = n^\frac1k(2k+3)|S|^\frac{k}{2k+1},$$
 cost and value of a path specified in Section \ref{SEC:ClusterExample}, and
 $$\alpha = \frac{12\log n}{\eps},$$
 let the edge set of the output of the algorithm \textsc{PROTO-PATH BUYING} be $E_1$.  Let $E_2$ be the edges of a $(2\log n,0,V\times V)$--spanner of $G$ having size $|E_2|=O(n)$ \cite{Halperin96}.  Then $G'=(V,E_1\cup E_2)$ is a $(1+\eps,4,P)$--spanner of $G$.
\end{theorem}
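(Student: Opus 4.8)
The theorem asserts only a stretch bound, so the plan is to fix an arbitrary pair $(u,v)\in P$, put $\delta:=d_G(u,v)$, choose a shortest $u$-$v$ path $\rho^*$ in $G$, and exhibit a $u$-$v$ walk in $G'=(V,E_1\cup E_2)$ of length at most $(1+\eps)\delta+4$ (here $u,v$ lie in the vertex set $S$ of endpoints of pairs in $P$). I would use two facts about the clustering phase: every cluster has diameter at most $2$ in the subgraph it produces, and every edge of $\rho^*$ missing from $E_1$ joins vertices lying in two \emph{distinct} clusters, so the stretches of $\rho^*$ running through unclustered vertices are already entirely in $E_1$. Let $C_1,\dots,C_\ell$ be the clusters met by $\rho^*$, ordered by first entry, with ties broken so that each cluster is entered and left at most once, as in the proof of Theorem~\ref{THM:16Spanner}; write $p_i$ for the first vertex of $\rho^*$ in $C_i$. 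If $\rho^*$ meets no cluster it lies in $E_1$ and $d_{G'}(u,v)\le\delta$, so assume $\ell\ge 1$.

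First I would extract a dichotomy by applying the path-buying test to $\rho^*$ itself, which is legitimate since $\rho^*\in\PP$ is examined at some iteration. If at that point $\cost(\rho^*)\le\alpha\cdot\val(\rho^*)$ then all of $\rho^*$ is in $E_1$ and $d_{G'}(u,v)\le\delta$. Otherwise $\val(\rho^*)<\cost(\rho^*)/\alpha\le\delta/\alpha=\delta\eps/(12\log n)$; unwinding the definition of value from Section~\ref{SEC:ClusterExample}, fewer than $\delta\eps/(12\log n)$ of the cluster pairs $(C_i,C_j)$ met by $\rho^*$ are \emph{not good}, where I call $(C_i,C_j)$ good when $\dist_{G'}(C_i,C_j)\le\dist_{\rho^*}(C_i,C_j)$ already holds (a property that persists in the final $G'$, since edges are only added afterwards and $E_2$ is appended at the end). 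If $\delta<12\log n/\eps$ this budget is below $1$, hence \emph{every} pair is good; then one jump suffices: go $u\to p_1$ along $\rho^*$, move $\le 2$ edges inside $C_1$ to the vertex realizing $\dist_{G'}(C_1,C_\ell)$, cross to $C_\ell$ in $\le\dist_{\rho^*}(C_1,C_\ell)$ edges, move $\le 2$ inside $C_\ell$ to $p_\ell$, and run $p_\ell\to v$ along $\rho^*$, for a total of at most $\delta+4\le(1+\eps)\delta+4$.

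For $\delta\ge 12\log n/\eps$ I would instead chain good jumps: reach $C_1$ as above ($\le 2$ extra edges), then from the current cluster $C_i$ jump to the farthest $C_j$ ($j>i$ in $\rho^*$-order) with $(C_i,C_j)$ good, paying $\dist_{G'}(C_i,C_j)\le\dist_{\rho^*}(C_i,C_j)$ plus $\le 2$ to realign inside $C_j$, repeat, then finish to $v$ ($\le 2$ extra). Since there are fewer than $\delta\eps/(12\log n)$ non-good pairs altogether, the chain can stall only with fewer than that many clusters remaining, and on that residual suffix of $\rho^*$ the edges already in $E_1$ are free while each of its $<\delta\eps/(12\log n)$ missing inter-cluster edges $xy$ has $d_G(x,y)=1$, hence $d_{G'}(x,y)\le 2\log n$ using the spanner $E_2$ --- this is the one place $E_2$ is used. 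The telescoped good jumps cost $\le|\rho^*|=\delta$, the residual detours cost $\le 2\log n\cdot\delta\eps/(12\log n)=\eps\delta/6$, and the two endpoint realignments cost $4$; the only remaining contribution, $\le 2$ per intermediate cluster on the chain, is the dangerous one.

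That last contribution is the main obstacle: a naive greedy chain may take $\Theta(\ell)$ jumps, and $\Theta(\ell)$ spurious edges need not be $O(\eps\delta)$. I would tame this by replacing the single greedy pass with a \emph{recursive bridging claim} --- for every sub-path $\sigma$ of $\rho^*$ between two of its clusters, $\dist_{G'}$ between the end clusters of $\sigma$ is at most $(1+\tfrac{\eps}{\log n}\,\textnormal{depth})\,|\sigma|+O(1)$ --- proved by induction on $|\sigma|$: apply the path-buying test to the shortest realization of $\sigma$; if it was not bought then its value is tiny, so by a pigeonhole argument in the spirit of Theorem~\ref{THM:16Spanner} there is a \emph{balanced} good intermediate cluster splitting $\sigma$ roughly in half, and one recurses on the two halves, with depth bounded by $\log_2\delta\le\log_2 n$. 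With depth $O(\log n)$ the spurious edges total $O(\log n)\le\eps\delta/2$ (using $\delta\ge 12\log n/\eps$), and the constant $12$ in $\alpha$ is precisely what is needed to make all error contributions sum to at most $\eps\delta$; this is why $\alpha$ must be $\Theta(\log n/\eps)$ rather than $\Theta(1/\eps)$. Finally, although the statement records only stretch, the bound on $|E_1\cup E_2|$ would follow as in Theorem~\ref{add-sourcewise} by balancing the $O(nd)$ clustering edges against the $O(\alpha\cdot(\textnormal{non-good-pair budget}))$ path-buying edges, with $d=n^{1/k}(2k+3)|S|^{k/(2k+1)}$ the minimizer and $|E_2|=O(n)$ negligible.
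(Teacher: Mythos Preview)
The paper does not actually prove this theorem. After the statement it only remarks that ``this is more complicated to prove'' and, a few lines later, ``for full details, see \cite{cygan13}''; no argument, sketch, or proof outline is given in the survey itself. So there is no paper proof to compare your proposal against.

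That said, let me comment on your sketch on its own merits. The opening dichotomy (path bought vs.\ not, in which case $\val(\rho^*)<\delta\eps/(12\log n)$), the monotonicity observation that ``good'' cluster pairs stay good, and the short-distance case $\delta<12\log n/\eps$ are all sound and are indeed the right starting point. The genuine gap is in your recursive bridging claim. You write that a pigeonhole argument yields a \emph{balanced good} intermediate cluster $C_m$ and that one then \emph{recurses on the two halves}; but these two pieces are in tension. If $(C_a,C_m)$ and $(C_m,C_b)$ are both ``good'' in your sense, you are already done in one step with additive error $+2$ and no recursion is needed; conversely, if you merely split at a balanced cluster and recurse on both halves, the $+2$ realignment cost is paid at every internal node of a binary recursion tree, so the total overhead is proportional to the number of leaves (which can be $\Theta(\delta)$), not $O(\log n)$. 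Your stated invariant $(1+\tfrac{\eps}{\log n}\cdot\textnormal{depth})\,|\sigma|+O(1)$ does not follow from either interpretation, and you never specify the base case or where exactly $E_2$ enters the recursion. A second, smaller issue: whether arbitrary subpaths $\sigma$ of $\rho^*$ are even examined by \textsc{Proto-Path Buying} depends on what $\PP$ is; in the original \cite{cygan13} only shortest paths between pairs in $P$ are tested, so you cannot freely ``apply the path-buying test to $\sigma$'' for every subpath. The actual argument in \cite{cygan13} is closer to your first attempt than your second: one carefully bounds the number of hops in the greedy chain (or, equivalently, identifies a short sequence of good cluster pairs covering $\rho^*$) directly from the value bound, rather than via a halving recursion.
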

   
The final variant involves the same clustering step, but a more complicated path-buying phase. For full details, see \cite{cygan13}; details of their other results are in the tables in the Appendix.  

Baswana et al.~\cite{baswana2010additive} also give another variant of their algorithm which computes a $(k,k-1,V\times V)$--spanner.  Rather than force it into our proto-algorithm framework, we reproduce it in full here in Algorithm \ref{alg:kk-1spannerconstruction}.
The clustering used here is related to one in \cite{thorup2006spanners}.

\begin{algorithm}[!ht]
\caption{Randomized ($k,k-1$)--Spanner Construction($G$)} \label{alg:kk-1spannerconstruction}
\begin{algorithmic}
\State Initialize $E' \gets \emptyset$ and $\mathscr{C}_0 \gets \{\{v\}:v \in V\}$
\For{$i$ from 1 to $k$}
\State Let $\mathscr{C}_i$ be sampled from $\mathscr{C}_{i-1}$ with probability $n^{-\frac{1}{k}}$ (If $i=k$, then $\mathscr{C}_k = \emptyset$)
\For{$v$ not belonging to a cluster in $\mathscr{C}_i$}
\State If $v$ is adjacent to some $C \in \mathscr{C}_i$, add $v$ to $C$ and add some edge of $E(v,C)$ to $E'$
\State Otherwise, add to $E'$ some edge from $E(v,C)$, for each $C \in \mathscr{C}_{i-1}$ adjacent to $v$
\EndFor
\EndFor
\State Add to $E'$ one edge from $E(C,C')$ for each adjacent pair $C \in \mathscr{C}_i$ and $C' \in \mathscr{C}_{k-1-i}$ for $i$ from 0 to $k-1$
\State Add to $E'$ one edge from $E(C,C')$ for each adjacent pair $C \in \mathscr{C}_i$ and $C' \in \mathscr{C}_{i-1}$ for $i$ from $\lceil \frac{k}{2} \rceil$ to $k-1$
\State \textbf{return} $E'$
\end{algorithmic}
\end{algorithm}

Kavitha and Varma \cite{kavitha2013small} give yet another clustering and path buying algorithm for computing $(\alpha,\beta,P)$--spanners which utilizes \textit{breadth-first search (BFS) trees}~\cite{clrs} in the path-buying phase.   Recently, Kavitha \cite{kavitha2017new} presents a modified version of the BFS strategy to compute additive pairwise spanners.  The broad idea of both is that BFS trees provide highways along which distances are preserved, and they are used to connect clustered nodes together.

Kavitha~\cite{kavitha2017new} has provided algorithms similar in spirit to those of Cygan et al.~to compute additive pairwise spanners. 


Baswana and Sen \cite{Surender03ALP} use a similar clustering algorithm to Algorithm \ref{alg:kk-1spannerconstruction} but a different path-buying phase than \cite{baswana2010additive} to yield a $(2k-1)$--spanner with $O(n^{1+\frac1k})$ edges in expected linear time $O(kn)$.  This gives a spanner with optimal number of edges (up to a constant and assuming the Girth Conjecture) in optimal time.

Pettie~\cite{pettie2009low} proposes a {\it modular} scheme for constructing $(\alpha,\beta)$--spanners through utilizing {\it connection schemes}, many of which are variants of this clustering + path buying approach. Through assembling connection schemes properly, most existing results can be generated. Also, substantially improved results for almost additive spanners are obtained.
In particular, it is shown that linear size ($|E'|=O(n)$) spanners can be constructed with good stretch including $(5+\epsilon,\text{polylog}(n))$--spanners, and $(1,\widetilde{O}(n^\frac{9}{16}))$--spanners.
{\color{black} The latter bound was later improved to $(1, O(n^{\frac37 + \eps}))$ in \cite{bodwin2016better}.}

\subsection{Further Reading}

\begin{itemize}
\setlength{\itemindent}{.2in}
    \item[\cite{elkin2004}] \bibentry{elkin2004}
    \item[\cite{BaswanaKMP05}] \bibentry{BaswanaKMP05}
    \item[\cite{baswana2007simple}] \bibentry{baswana2007simple}
    \item[\cite{chechik2013new}] \bibentry{chechik2013new}
\end{itemize}

Baswana et al.~\cite{BaswanaKMP05} give two important results on additive spanners: an additive $6$--spanner of size $O(n^\frac{4}{3})$, and a linear time construction of $(k, k-1)$--spanners with size $O(n^{1+\frac{1}{k}})$. In Baswana and Sen~\cite{baswana2007simple}, the first linear time randomized algorithm that computes a $t$--spanner of a given weighted graph was given. Recall that this size/error tradeoff is optimal assuming the Girth Conjecture (see Section \ref{SEC:girth}).

Chechik~\cite{chechik2013new} gives a construction for an 
additive 4--spanner containing $\widetilde{O}(n^\frac{7}{5})$ edges. In addition, a construction is given for additive spanners with $\widetilde{O}(n^{1+\delta} )$ edges and additive stretch $\widetilde{O}(n^{\frac{1}{2}-\frac{3\delta}{2}} )$ for any $\frac{3}{17} \leq \delta < \frac{1}{3}$, improving the stretch of the existing constructions from $O(n^{1-3\delta} )$ to $\widetilde{O}(\sqrt{n^{1-3\delta}})$. Finally, it is shown that this construction can be modified to give a sublinear additive spanner of size $\widetilde{O}(n^{1+\frac{3}{17}})$ with additive stretch $O( \sqrt{ d(u,v)})$ for each pair $u,v\in V$.

Elkin and Peleg~\cite{elkin2004} show that the multiplicative factor can be made arbitrarily close to $1$ while keeping the spanner size arbitrarily close to $O(n)$ at the cost of allowing the additive term to be a sufficiently large constant. In other words, they show that for any constant $\lambda > 0$ there exists a constant $\beta = \beta(\epsilon, \lambda)$ such that for every $n$--vertex graph there is an efficiently constructible $(1 + \epsilon, \beta)$--spanner of size $O(n^{1+\lambda})$.

\subsection{Open Problems}

\begin{enumerate}
    \item Suppose one runs a traditional graph clustering algorithm (in the unsupervised learning sense; e.g. Normalized Cuts, Spectral Clustering, or Hierarchical Clustering) which produces clusters which are disjoint and cover the whole graph. Can one use this clustering to give rise to a good spanner of $G$?
\end{enumerate}


\section{Probabilistic Methods}\label{SEC:Probabilistic}

\begin{itemize}
\setlength{\itemindent}{.2in}
\item[\cite{miller2015improved}] \bibentry{miller2015improved}
\item[\cite{elkin2018efficient}] \bibentry{elkin2018efficient}
\end{itemize}

Miller et al.~\cite{miller2015improved} construct the first randomized algorithm for producing a spanner with high probability (other algorithms using randomization give deterministic guarantees).  The main result therein (Theorem 1.1) yields a multiplicative $O(k)$--spanner with high probability of expected size $O(n^{1+\frac1k}\log k))$ in expected time $O(m)$.

Elkin and Neiman~\cite{elkin2018efficient} improved on the work of Miller et al.~\cite{miller2015improved} by giving a randomized construction which computes a $(2k-1)$--spanner with $O(n^{1+\frac1k}\eps^{-1})$ edges with probability $1-\eps$.  Moreover, they give a runtime and edge count analysis for both PRAM and CONGEST computational models.  To illustrate this recent technique, we give a partial proof of the main theorem in~\cite{elkin2018efficient}.

\subsection{Probabilistic $(2k-1)$--spanners}

To begin, note that given a parameter $\lambda>0$, the \textit{exponential distribution} is defined by the probability density function (pdf)
\[ p_\lambda(x) = \begin{cases} \lambda e^{-\lambda x} & x\geq0\\ 0 & x<0.\end{cases} \]
We denote a random vector $x$ drawn from this distribution by $x\sim Exp(\lambda)$.

\begin{algorithm}[!htb]
\begin{algorithmic}
\State $\lambda \gets \frac{\ln(cn)}{k}$
\For{$u\in V$}
\State Draw $r_u\sim Exp(\lambda)$
\State $u$ broadcasts $r_u$ to all vertices $v$ within distance $k$
\If{$x$ receives a message from $u$}
\State $x$ stores the value $m_u(x)=r_u-d_{G}(u,x)$
\State $x$ also stores a {\color{black}neighboring} vertex $p_u(x)$ along a shortest path from $u$ to $x$ (if there is more than one possibility, one is selected randomly)
\EndIf
\EndFor
\State $m(x)\gets \max_{u\in V}m_u(x)$
\For{$x\in V$}
\State $E'\gets E'\cup C(x)=\{(x,p_u(x)):m_u(x)\geq m(x)-1\}$
\EndFor
\State\Return $E'$
\end{algorithmic}
\caption{\textsc{Rand Exp} $(2k-1)$--\textsc{spanner}($G,k,c$)}\label{ALG:Prob}
\end{algorithm}

The key reason for using the exponential distribution to determine the messages passed from each vertex is its \textit{memoryless property}, which is the fact that if $x\sim Exp(\lambda)$ and $s,t\in\R$, then $\Prob(x>s+t\;|\;x>t) = \Prob(x>s)$, i.e. does not depend on the value of $t$.  The main result of~\cite{elkin2018efficient} is the following.
\begin{theorem}[{\cite[Theorem 1]{elkin2018efficient}}]
 Let $G=(V,E)$, $k\in\N$, $c>3$, and $\delta>0$ be fixed.  Then with probability at least $(1-\frac1c)\frac{\delta}{1+\delta}$, Algorithm \ref{ALG:Prob} computes a $(2k-1)$--spanner of $G$ which has at most
 \[ (1+\delta)\frac{(cn)^{1+\frac1k}}{c-1}-\delta(n-1)\] edges.
\end{theorem}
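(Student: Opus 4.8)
The statement bundles two claims -- that $G'$ is a $(2k-1)$--spanner and that $|E'|$ is small -- so the plan is to isolate a ``good event'' on which the stretch guarantee is deterministic, and then run a Markov-type argument for the size on top of it. Let $A$ be the event that $r_u < k$ for \emph{every} $u\in V$. Since the $r_u$ are independent and $\Prob[r_u\ge k]=e^{-\lambda k}=e^{-\ln(cn)}=\tfrac1{cn}$, we have $\Prob[A]=(1-\tfrac1{cn})^n\ge 1-\tfrac1c$. Moreover, as the $r_u$ come from a continuous distribution, with probability $1$ the values $m_u(x)$ are all distinct, so I may assume the maximizer $c(x):=\arg\max_u m_u(x)$ (the ``center'' of $x$) is unique, with $m(x)$ the corresponding maximum; note $m(x)\ge m_x(x)=r_x\ge 0$, hence $\dist_G(c(x),x)\le r_{c(x)}$, which on $A$ is $<k$, i.e.\ at most $k-1$ ($G$ being unweighted).

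\textbf{Stretch on $A$.} First I would prove the local regularity statement: for adjacent $x,y$, $|m(x)-m(y)|\le 1$. Indeed the center $v$ of $x$ has $\dist_G(v,x)\le r_v<k$, so $\dist_G(v,y)\le k$ and $y$ hears from $v$ too, giving $m(y)\ge m_v(y)\ge m_v(x)-1=m(x)-1$; symmetrically for the reverse. Next, the key lemma: if $u$ reaches $z$ (i.e.\ $\dist_G(u,z)\le k$) and $m_u(z)\ge m(z)-1$, then the path from $z$ to $u$ obtained by iterating the pointer $p_u(\cdot)$ lies entirely in $E'$. This is an induction along the path: at the vertex $z_i$ that is $i$ steps from $z$ toward $u$ one has $m_u(z_i)=m_u(z)+i$ while $m(z_i)\le m(z)+i$ by regularity, so $m_u(z_i)\ge m(z_i)-1$ and hence $(z_i,p_u(z_i))\in C(z_i)\subseteq E'$. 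Taking $u=c(z)$ shows each $z$ is joined in $G'$ to its center by a path of length $\dist_G(c(z),z)\le k-1$. Now fix an edge $xy$ with, WLOG, $m(x)\ge m(y)$, and set $u:=c(x)$; then $\dist_G(u,y)\le \dist_G(u,x)+1\le k$, so $y$ reaches $u$, and $m_u(y)\ge m_u(x)-1=m(x)-1\ge m(y)-1$, so the lemma gives a $p_u$--path from $y$ to $u$ in $E'$ of length $\le k$. Concatenating with the center-path of $x$ yields $\dist_{G'}(x,y)\le (k-1)+k=2k-1$. By Proposition \ref{PROP:tspanner verification}, $G'$ is a $(2k-1)$--spanner on $A$; in particular it is connected, so $|E'|\ge n-1$ there.

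\textbf{Expected size conditioned on $A$.} The heart of the proof is to show $\E[\,|E'|\mid A\,]\le \tfrac{(cn)^{1+1/k}}{c-1}$. Write $|E'|\le\sum_x|C(x)|$ with $|C(x)|=\#\{u:\dist_G(u,x)\le k,\ m_u(x)\ge m(x)-1\}$, and fix $x$. The key point is that, conditioned on $A$, each $e^{-\lambda r_u}$ is uniform on $[\tfrac1{cn},1]$, so the variables $W_u:=e^{\lambda\dist_G(u,x)}\,e^{-\lambda r_u}$ (over $u$ within distance $k$ of $x$) are independent with $W_u$ uniform on $[\tfrac1{cn}e^{\lambda\dist_G(u,x)},\,e^{\lambda\dist_G(u,x)}]$, and $m_u(x)\ge m(x)-1\iff W_u\le e^{\lambda}\min_v W_v$, where $e^\lambda=(cn)^{1/k}$. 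Splitting $\min_v W_v=\min(W_u,\min_{v\ne u}W_v)$ and using $e^\lambda\ge1$ gives $\Prob[W_u\le e^\lambda\min_v W_v]=\Prob[W_u\le e^\lambda\min_{v\ne u}W_v]$; integrating over $W_u$ and bounding $\prod_{v\ne u}\Prob[W_v\ge w]\le (1-\tfrac1{cn})^{-(n-1)}\,e^{-w\sum_{v\ne u}q_v}$ with $q_v:=e^{-\lambda\dist_G(v,x)}$, together with $(1-\tfrac1{cn})^{-(n-1)}\le \tfrac{c}{c-1}$ and an elementary estimate on $\sum_u q_u/\sum_{v\ne u}q_v$, yields $\E[\,|C(x)|\mid A\,]\le \tfrac{c}{c-1}(cn)^{1/k}$. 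Summing over the $n$ vertices gives the claimed expectation bound.

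\textbf{Assembly, and the main obstacle.} On $A$ we have $0\le |E'|-(n-1)$ and $\E[\,|E'|-(n-1)\mid A\,]\le \tfrac{(cn)^{1+1/k}}{c-1}-(n-1)$, so Markov's inequality gives: conditionally on $A$, with probability at least $\tfrac{\delta}{1+\delta}$ one has $|E'|-(n-1)\le (1+\delta)\big(\tfrac{(cn)^{1+1/k}}{c-1}-(n-1)\big)$, i.e.\ $|E'|\le (1+\delta)\tfrac{(cn)^{1+1/k}}{c-1}-\delta(n-1)$. Since the spanner property holds automatically on $A$, multiplying this conditional probability by $\Prob[A]\ge 1-\tfrac1c$ gives the stated $(1-\tfrac1c)\tfrac{\delta}{1+\delta}$. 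The routine parts are the good event, the stretch argument, and the Markov step; the genuinely delicate part is the conditional expected-size computation -- in particular pinning down the exact constant $\tfrac1{c-1}$, since conditioning on $A$ replaces the exponential's memorylessness by the normalization factor $(1-\tfrac1{cn})^{-n}$, and one must check that the ``diagonal'' term $u=x$ and the distance-layer contributions do not push the per-vertex bound past $\tfrac{c}{c-1}(cn)^{1/k}$.
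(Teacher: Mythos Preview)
Your overall architecture matches the paper: isolate the good event $A$ (that all $r_u<k$), show the stretch guarantee holds deterministically on $A$, and control the edge count via Markov applied conditionally on $A$. Your stretch argument is essentially the one in Elkin--Neiman, and your assembly of the pieces at the end is correct.

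Where you diverge from the paper is in the expected-size step, and this is also where your argument has a real gap. You try to compute $\E[\,|E'|\mid A\,]$ \emph{directly}, which means working with truncated exponentials and losing the memoryless property --- as you yourself flag. Your sketch then appeals to an ``elementary estimate on $\sum_u q_u/\sum_{v\ne u}q_v$'', but this sum is not obviously bounded by an absolute constant: the term $u=x$ alone contributes $1/\sum_{v\ne x}q_v$, and since $q_v\le e^{-\lambda}$ for every $v\ne x$, controlling it requires information about the local structure around $x$ that you have not used. It is not clear you can recover the precise constant $\tfrac{c}{c-1}$ along this route.

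The paper (following Elkin--Neiman) sidesteps this entirely. One computes the \emph{unconditional} expectation $\E[|E'|]$ --- this is exactly where memorylessness kicks in and the calculation is clean, giving $\E[|E'|]\le n(cn)^{1/k}$ --- and then uses the one-line bound $\E[\,|E'|\mid A\,]\le \E[|E'|]/\Prob[A]$, valid simply because $|E'|\ge 0$ unconditionally. This immediately yields $\E[\,|E'|\mid A\,]\le \tfrac{(cn)^{1+1/k}}{c-1}$, and hence $\E[\,Y\mid A\,]\le \tfrac{(cn)^{1+1/k}}{c-1}-(n-1)$, with no further work. So: compute the expectation before conditioning, and bring $A$ in only afterward via this trivial inequality; the whole point of using the exponential distribution is to make the unconditional computation tractable.
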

\begin{proof}[Sketch of Proof]
Choose $\lambda = \frac{\ln cn}{k}$, and let $X$ be the event $\{\forall u\in V, r_u<k\}$.  Note that $\Prob(r_u\geq k)=e^{-\lambda k} = \frac{1}{cn}$; hence the union bound implies that $\Prob(\exists u\in V \text{ s.t. } r_u\geq k)\leq\frac{1}{c}$.  Hence $\Prob(X)\geq1-\frac1c$.

The key observation made in \cite{elkin2018efficient} is that if the event $X$ holds, then the subgraph $G'=(V,E')$ output by Algorithm \ref{ALG:Prob} is a spanner with stretch at most $2k-1$.  Additionally, since $G'$ is a spanner, it must have at least $n-1$ edges, and so if $Y$ is the random variable $|E'|-(n-1)$, then $Y$ is positive.  Moreover, $\E(Y) \leq n(cn)^\frac1k-(n-1)$ (Lemmas 1 and 2 of \cite{elkin2018efficient}), and so
\[\E(Y\,\vert\, X)\leq\frac{\E(Y)}{\Prob(X)}\leq \frac{1}{1-\frac1c}\left(n(cn)^\frac1k-(n-1)\right). \]
To turn this into a concrete bound, note that Markov's inequality implies that for the given $\delta$,
\[ \Prob(Y\geq(1+\delta)\E(Y\,\vert\,X)\,\vert\,X)\leq\frac{1}{1+\delta}, \]
whereby we have
\[\Prob((Y<(1+\delta)\E(Y\,\vert\,X))\cap X)\geq\left(1-\frac1c\right)\frac{\delta}{1+\delta}. \]
Consequently, if these events hold, then $|E'| = Y+n-1$ is at most $(1+\delta)\E(Y\,\vert\, X)$, which is at most the quantity in the statement of the theorem by combining the above estimates.
\end{proof}




\subsection{Open Problems}
\begin{enumerate}
    \item Can one optimize the construction of Elkin and Neiman over the pdf to get a better guarantee for the size of a $(2k-1)$--spanner?
    \item Can one design a probability distribution over the edges of a graph from which random sampling yields a spanner with high probability?
\end{enumerate}

\section{Subsetwise, Sourcewise, and Pairwise Spanners}\label{SEC:Wise}
\subsection{Pairwise Distance Preservers}\label{SEC:Preservers}

\begin{itemize}
\setlength{\itemindent}{.2in}
\item[\cite{bollobas2005sparse}] \bibentry{bollobas2005sparse}
    \item[\cite{coppersmith2006sparse}] \bibentry{coppersmith2006sparse}
    \item[\cite{bodwin2016better}] \bibentry{bodwin2016better}
\item[\cite{abboud2016error}] \bibentry{abboud2016error}
\item[\cite{bodwin2017linear}] \bibentry{bodwin2017linear}
\item[\cite{Kshitij2017distance}] \bibentry{Kshitij2017distance}
\item[\cite{bodwin2018shortestpath}] \bibentry{bodwin2018shortestpath}
\end{itemize}

A special case of the spanner problem is that of finding \textit{distance preservers}, which requires that distances are preserved exactly between specified pairs of vertices of $G$.  This variant of the problem is important given that some spanner constructions use distance preservers as a subroutine. 
Distance preservers were first introduced and studied in~\cite{bollobas2005sparse} and~\cite{coppersmith2006sparse}. Coppersmith and Elkin~\cite{coppersmith2006sparse} proved the existence of a linear size pairwise distance preserver for any set $P$ of vertex pairs $|P|=O(\sqrt{n})$.  Further work has been done in \cite{bodwin2017linear, bodwin2016better, Kshitij2017distance}.


Recall we may consider subsetwise distance preservers when $P=S\times S$ and sourcewise distance preservers when $P=S\times V$.  A fundamental question about distance preservers is the following:

\begin{problem}
For a fixed $n$ and $P\subseteq V\times V$, what can be said about the number of edges, $|E'|$, in a distance preserver of $G$ over $P$?
\end{problem}

For the moment, consider the simplest case where $P$ has the form $\{s\}\times V$. Here, the distance preserver is a tree with $O(n)$ edges; this fact is known as the Shortest Path Tree Lemma. Note that without further assumptions on the structure of $G$ or $P$, one cannot improve this construction in general.  Indeed, if $G$ is a single path from $u$ to $v$ and $P=\{(u,v)\}$, the preserver must have $n-1$ edges. On the other hand, if $G$ is a clique, there must be $ |P| $ edges in the preserver, and hence in general, $|E'|=\Omega(n+|P|)$.

Here, we present two theorems on the existence and size of distance preservers along with an outline of one of the proofs.  The original result of~\cite{coppersmith2006sparse} is the following.
\begin{theorem}[\cite{coppersmith2006sparse}] \label{THM:CEDP}
 Given an undirected (weighted or unweighted) graph $G$ and a set of vertex pairs $P$, there exists a distance preserver of size $|E'|=O(n+n^\frac{1}{2}|P|)$.
\end{theorem}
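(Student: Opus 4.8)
The goal is to show that for any undirected graph $G$ and any set $P$ of vertex pairs, there is a distance preserver (a subgraph in which $d_{G'}(u,v) = d_G(u,v)$ for all $(u,v)\in P$) with $O(n + n^{1/2}|P|)$ edges. The natural approach is to fix, for each pair $(u,v)\in P$, a single shortest path $\pi_{uv}$ in $G$, chosen \emph{consistently} so that the union $G' = \bigcup_{(u,v)\in P}\pi_{uv}$ has few edges, and then bound $|E(G')|$ by a branching/degree argument. Consistency here means: one fixes a canonical tie-breaking rule (e.g., perturb edge weights to make all shortest paths unique, or impose a lexicographic order on paths) so that any two chosen shortest paths intersect in a (possibly empty) contiguous subpath — two shortest paths cannot cross, leave, and re-cross each other, since that would contradict uniqueness of shortest paths between the shared endpoints.

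\textbf{Key steps.} First I would set up the consistency condition and record the ``non-crossing'' consequence: if $\pi_{uv}$ and $\pi_{xy}$ share two vertices $a,b$, then they share the entire $a$–$b$ subpath. Second, I would look at \emph{branching vertices} of $G'$, i.e. vertices incident to three or more edges of $G'$, and more precisely count \emph{branching events}: places where some chosen path $\pi_{uv}$ ``turns'' relative to the others, or where two paths split apart. The classical argument (Coppersmith–Elkin) charges each branching event to a pair of paths, or equivalently to a pair in $P\times P$, and shows a path can only participate in $O(\sqrt n)$ such events before it would have to be longer than $n$ vertices — intuitively, each branching uses up ``length budget'' along the path, and a shortest path has at most $n-1$ edges, so the number of distinct branch points it can accumulate against the whole collection is bounded. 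Summing over all pairs gives that the total number of edges lying on ``branchy'' portions is $O(n^{1/2}|P|)$; the remaining edges of $G'$ form vertex-disjoint paths (degree-$\le 2$ segments), and a graph on $n$ vertices that is a union of simple paths has at most $n-1$ such edges, contributing the additive $O(n)$ term. Adding the two bounds yields $|E(G')| = O(n + n^{1/2}|P|)$.

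\textbf{The main obstacle.} The delicate part is the counting argument that bounds branching by $O(\sqrt n)$ per path rather than something weaker. The reason $\sqrt n$ appears is a pigeonhole/quadratic-growth phenomenon: if a path $\pi$ branches many times, the paths it splits from themselves branch, producing a rapidly growing set of distinct vertices that must all fit inside the $n$-vertex graph; making this precise — defining the right notion of branch event, ensuring each is charged exactly once, and extracting the square-root bound from the ``at most $n$ vertices total'' constraint — is where the real work lies. One clean way to organize it is to orient each path and track, for each path, the sequence of other paths it ``agrees with'' locally; a change in this agreement is a branch event, each event is witnessed by a pair of paths diverging at a distinct vertex, and since any fixed path has length $< n$ it can host at most... (here one balances: $b$ branch events along a path of length $n$ force, via the non-crossing structure, roughly $b^2$ distinct vertices, so $b = O(\sqrt n)$). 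I would lean on \cite{coppersmith2006sparse} for the full bookkeeping and present only this skeleton, since the routine case analysis (handling empty intersections, shared endpoints, and the tie-breaking formalities) is standard but tedious.
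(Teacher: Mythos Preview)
Your setup is right and matches the Coppersmith--Elkin framework the paper alludes to: consistent tiebreaking so that any two chosen shortest paths meet in a single contiguous subpath, take the union $G'$ of the chosen paths, and count edges via branching events. (The paper does not prove this theorem directly; it sketches the sibling Theorem~\ref{THM:BodwinDP} using \emph{branching triples}, and remarks that the present theorem uses the analogous \emph{branching events}, i.e.\ pairs of distinct edges entering a common vertex.)

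The gap is in your counting step. You claim that a single path hosts at most $O(\sqrt{n})$ branching events because ``$b$ branch events along a path force roughly $b^2$ distinct vertices.'' This is not true and is not the Coppersmith--Elkin argument. A single path of length $\ell$ can have $\ell$ other paths each branching off at a distinct vertex, giving $\Theta(\ell)$ branching events on that one path without forcing any quadratic blowup in vertex count; nothing in the non-crossing property prevents this. So the per-path $O(\sqrt{n})$ bound fails.

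The correct accounting is per \emph{pair} of paths, not per path. Consistency forces any two oriented paths $\pi_1,\pi_2$ to share a single (possibly empty) contiguous subpath, so they can contribute at most a constant number of branching events (at the ends of their shared segment). Summing over pairs gives $O(|P|^2)$ branching events in total, hence $|E(G')| = O(n + |P|^2)$. This by itself only yields $|E'| = O(n)$ when $|P| = O(\sqrt{n})$; to reach the stated bound one applies a partition trick you did not mention: split $P$ into $\lceil |P|/\sqrt{n}\rceil$ groups of size $O(\sqrt{n})$, build an $O(n)$-edge preserver for each group, and take the union, giving $O(n + n^{1/2}|P|)$. This is exactly parallel to the paper's sketch for Theorem~\ref{THM:BodwinDP}, where $\binom{|P|}{3}$ branching triples and groups of size $n^{1/3}$ yield $O(n + n^{2/3}|P|)$.
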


The next theorem is a slightly worse in size complexity, but is more general in that it holds for a larger class of graphs.
\begin{theorem}[{\cite{bodwin2017linear}}]\label{THM:BodwinDP}
 Given a (directed or undirected, weighted or unweighted) graph $G$ and a set of node pairs $P$, there exists a distance preserver of size $|E'|=O(n+n^\frac{2}{3}|P|)$.
\end{theorem}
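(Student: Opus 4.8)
\emph{Proof plan.} The plan is to realize $E'$ as a union of consistently chosen shortest paths and to bound its size by analyzing how such paths branch apart.

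First I would reduce to the case of \emph{unique} shortest paths. Perturbing the edge weights by infinitesimally small amounts (equivalently, fixing a global lexicographic tie-breaking rule) makes the shortest path between every ordered pair of vertices unique without changing which subgraphs are distance preservers, so this is without loss of generality. For each $(s,t)\in P$ let $\pi_{s,t}$ be this unique shortest path and put $E'=\bigcup_{(s,t)\in P}\pi_{s,t}$. This subgraph is immediately a distance preserver over $P$, so the entire content of the theorem is the size bound $|E'|=O(n+n^{2/3}|P|)$.

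To bound $|E'|$ I would split the demand pairs by the length of their canonical path at a threshold $\ell$ to be optimized. The \emph{short} paths, those with fewer than $\ell$ edges, together contribute at most $\ell|P|$ edges. For the \emph{long} paths the engine is a structural fact about unique shortest paths: if two of them share an edge $e=(x,y)$, then, since the continuation of each path forward from $y$ is a root-to-leaf path in the shortest-path tree rooted at $y$ (and similarly backward from $x$), the two paths must agree on a whole contiguous subpath through $e$, and once they diverge going forward they can never meet again going forward. Iterating this, one shows the long paths cannot overlap ``too little'': their union decomposes into a controlled number of maximal degree-$2$ corridors, each a subpath of a shortest path, and distinct corridors are internally vertex-disjoint, so bounding the number of corridors controls the union of the long paths. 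Balancing this count against $\ell|P|$ and solving for $\ell$ (which works out to $\ell\approx n^{2/3}$) yields $|E'|=O(n+n^{2/3}|P|)$.

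The main obstacle is the directed case, and it is what degrades the exponent to $2/3$ from the $1/2$ of Theorem~\ref{THM:CEDP}. When $G$ is undirected, consistently chosen shortest paths intersect in a single contiguous subpath --- the Coppersmith--Elkin observation --- which keeps the corridor count small and drives their stronger bound. For directed $G$ this fails: the common edges of two unique shortest paths can form several disjoint segments, visited in opposite orders along the two paths via directed cycles, so one is left only with the weaker ``one-sided tree'' structure above and must work with a more delicate notion of a branching event. Making the corridor count precise in the presence of this phenomenon --- in particular showing that long corridors, and pairs of long paths sharing long common segments, are rare enough --- is the technical heart of the proof, and it is exactly the price of this weakening that produces $n^{2/3}$ rather than $n^{1/2}$.
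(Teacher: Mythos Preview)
Your outline diverges from the paper's argument and, more importantly, leaves the decisive step as a black box. The paper does not threshold by path length at all. After fixing a consistent tiebreaking $\pi$ (which you correctly set up), it defines a \emph{branching triple} to be three distinct directed edges of $H=(V,\pi(P))$ entering a common vertex. The key lemma is combinatorial: assigning each edge of $H$ to some demand pair whose path uses it, any three pairs $p_1,p_2,p_3\in P$ can be responsible for at most one branching triple (a short contradiction using consistency of $\pi$). Hence $H$ has at most $\binom{|P|}{3}$ branching triples. Since each in-edge at a vertex beyond the first two creates at least one new branching triple, $|E(H)|\le 2n+\binom{|P|}{3}$, so $|E(H)|=O(n)$ whenever $|P|=O(n^{1/3})$. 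Finally, an arbitrary $P$ is partitioned into $\lceil |P|/n^{1/3}\rceil$ blocks of that size, a preserver is built for each block, and the union gives $O(n+n^{2/3}|P|)$ edges. The exponent $2/3$ thus comes from $\binom{|P|}{3}\le n$, i.e.\ $|P|\lesssim n^{1/3}$, not from any length parameter.

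In your plan, the entire weight falls on the ``long path'' case, and there you never state what bound the corridor decomposition actually yields or how it balances against $\ell|P|$ to force $\ell\approx n^{2/3}$; you simply assert that it does. Saying corridors are internally vertex-disjoint bounds nothing by itself (that only says their total length is $O(n)$ plus the number of corridors, and you have no handle on the latter). Your remark that one is ``left only with the weaker one-sided tree structure'' and needs ``a more delicate notion of a branching event'' is pointing in the right direction, but it is a description of the difficulty, not a lemma. As written there is no computation connecting the two halves, so the proposal does not establish the bound.
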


\subsection{Sketch of Proof of Theorem \ref{THM:BodwinDP}}

\begin{definition}[Tiebreaking Scheme]
A \emph{tiebreaking scheme} is a map $\pi$ that sends each vertex pair $(u,v)$ to a shortest path from $u$ to $v$.
\end{definition}

\begin{definition}
 A \emph{consistent tiebreaking} is a tiebreaking scheme $\pi$ such that for every $w,x,y,z\in V$, if $x,y \in \pi(w,z)$ then $\pi(x,y)$ is a subpath of $\pi(w,z)$.
\end{definition}

\begin{lemma}
Every graph has a consistent tiebreaking.
\end{lemma}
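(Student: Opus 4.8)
The plan is to construct a consistent tiebreaking scheme $\pi$ by perturbing the edge weights so that all shortest paths become unique, and to do so in a way that makes uniqueness automatically consistent. The main idea is a classical one: if every pair of vertices has a \emph{unique} shortest path, then the tiebreaking scheme $\pi(u,v) := $ ``the unique shortest $u$-$v$ path'' is consistent, since any subpath of a shortest path is itself a shortest path, and by uniqueness it must coincide with $\pi(x,y)$.

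First I would reduce to the case of unique shortest paths. Enumerate the edges $e_1, \dots, e_m$, and replace each weight $w_{e_i}$ by a lexicographically perturbed weight $w_{e_i}' = w_{e_i} + \eps_i$, where the $\eps_i > 0$ are chosen to be ``infinitesimally separated'' — formally, one works in the ordered ring $\R[\eps]$ with $0 < \eps \ll 1$ and sets $\eps_i = \eps^i$, or equivalently compares paths first by original weight, then breaks ties by comparing the sorted multiset of edge indices used (say, lexicographically by the sorted index sequence). The key point is that two distinct simple paths between the same pair of endpoints use distinct edge sets, hence distinct multisets of indices, hence receive distinct perturbed lengths; so among all shortest $u$-$v$ paths in $G$ (with respect to the original weights), exactly one minimizes the perturbed length. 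Define $\pi(u,v)$ to be that path. Note every $\pi(u,v)$ is still a genuine shortest path in $G$, because the perturbation only breaks ties \emph{among} original shortest paths and never promotes a non-shortest path ahead of a shortest one (the original-weight comparison dominates).

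Next I would verify consistency. Suppose $x, y \in \pi(w,z)$, and let $Q$ be the subpath of $\pi(w,z)$ from $x$ to $y$. Since $\pi(w,z)$ is a shortest $w$-$z$ path, $Q$ is a shortest $x$-$y$ path in $G$ (with respect to original weights) — the standard subpath-optimality argument. I claim $Q = \pi(x,y)$. If not, then $\pi(x,y)$ is another shortest $x$-$y$ path with strictly smaller perturbed length than $Q$. Splicing $\pi(x,y)$ into $\pi(w,z)$ in place of $Q$ yields a $w$-$z$ walk of the same original length but strictly smaller perturbed length; after removing any cycles this gives a shortest $w$-$z$ path of perturbed length no larger, contradicting the minimality of $\pi(w,z)$. (One must check the cycle-removal step cannot \emph{decrease} original length below the shortest-path distance, which it cannot, and that it does not increase the perturbed length, which holds since deleting edges with positive perturbed weight only decreases perturbed length.) Hence $Q = \pi(x,y)$, i.e. $\pi(x,y)$ is a subpath of $\pi(w,z)$, as required.

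The one subtlety — and the step I expect to require the most care — is the splice-and-cleanup argument in the consistency check, because splicing a path into another can create cycles, and one needs the perturbation to be monotone under edge deletion to guarantee the cleaned-up path still beats $\pi(w,z)$. Using strictly positive perturbations $\eps_i > 0$ handles this cleanly: any walk obtained by deleting edges has strictly smaller perturbed length, and deleting edges never takes a $w$-$z$ walk below the true shortest-path distance in original weights. An alternative, perhaps cleaner, presentation avoids splicing entirely by choosing a \emph{single} global ordering: perturb so that the perturbed length of a path is determined by its sorted edge-index vector compared lexicographically; then among shortest paths, the unique minimizer of this order is easily seen to have the hereditary property directly. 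Either way, the essential content is: generic edge weights $\Rightarrow$ unique shortest paths $\Rightarrow$ the canonical scheme is consistent.
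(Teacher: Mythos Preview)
Your proposal is correct and takes essentially the same approach as the paper: perturb edge weights so that all shortest paths become unique, then define $\pi(u,v)$ as the unique shortest path. The paper's proof is a one-liner (``add a small random number to each edge so that we have a unique shortest path between any pair''), whereas you use a deterministic lexicographic perturbation and spell out the subpath/splicing verification in detail; the core idea is identical.
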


\begin{proof}
  Add a small random number to each edge so that we have a unique shortest path between any pair $(u, v)$; one can see that the output is a consistent tiebreaking.
\end{proof}

\begin{definition}[Branching Triple]
A \emph{branching triple} is a set of three distinct directed edges $(u_1,v)$, $(u_1,v)$, $(u_1,v)$ in $E$ that all enter the same vertex.
\end{definition}

Branching triples are based on the closely related notion of \emph{branching events}, used in the proof of Theorem \ref{THM:CEDP}.
We will prove a bound for the edge size of the distance preserver using a bound for the number of branching triples. 
\begin{lemma}
Let $H=(V,\pi(P))$  where $\pi$ is any consistent tiebreaking. Then $H$ has at most  $\binom{|P|}{3}$ branching triples.
\end{lemma}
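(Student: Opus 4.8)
The plan is to construct an injection from the collection of branching triples of $H$ into the collection of $3$-element subsets of $P$; the bound $\binom{|P|}{3}$ is then immediate.

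First I would record the only structural fact needed: if $e=(u,v)$ is an edge of $H$, then $e$ lies on some path $\pi(p)$ with $p\in P$, and since $e$ enters $v$ it is precisely the edge by which $\pi(p)$ enters $v$. Given a branching triple $T=\{e_1,e_2,e_3\}$ at a vertex $v$, I would therefore choose, for each $i$, a pair $p_i\in P$ whose path $\pi(p_i)$ enters $v$ via $e_i$, and map $T\mapsto\{p_1,p_2,p_3\}$. These three pairs are distinct: if $p_i=p_j$ then $e_i$ and $e_j$ are both the unique edge by which $\pi(p_i)$ enters $v$, contradicting $e_i\neq e_j$. So the image is a genuine $3$-element subset of $P$, and only injectivity remains.

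For injectivity, suppose $\{p_1,p_2,p_3\}$ is also the image of a branching triple $T'=\{e_1',e_2',e_3'\}$ at a vertex $v'$, where after relabeling $e_i'$ is the edge by which $\pi(p_i)$ enters $v'$; assume toward a contradiction that $v\neq v'$. Each path $\pi(p_i)$ contains both $v$ and $v'$, and on $\pi(p_i)$ exactly one of the two precedes the other; by the pigeonhole principle two of the three paths agree on this order, say $v$ precedes $v'$ on both $\pi(p_1)$ and $\pi(p_2)$ (the opposite case is symmetric after swapping the roles of $v$ and $v'$, and of $T$ and $T'$). By consistency, $\pi(v,v')$ is a subpath of $\pi(p_1)$; since $v$ precedes $v'$ on $\pi(p_1)$ and shortest paths are simple, $\pi(v,v')$ is exactly the portion of $\pi(p_1)$ between $v$ and $v'$, and likewise for $\pi(p_2)$. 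As $v\neq v'$, this common path has a well-defined last edge, which must then be both the edge entering $v'$ along $\pi(p_1)$ and the edge entering $v'$ along $\pi(p_2)$. Hence $e_1'=e_2'$, contradicting that $T'$ is a branching triple. Therefore $v=v'$, so $e_i$ and $e_i'$ are both the edge by which $\pi(p_i)$ enters $v=v'$, giving $T=T'$, and the map is injective.

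The main obstacle is this last argument: the key point is that the relative order of the two candidate vertices $v$ and $v'$ along each of the three paths is a binary label, so two of the three labels must coincide, and consistency then collapses two of the entering edges at the later vertex, contradicting the definition of a branching triple. Everything else is routine bookkeeping using only the definition of a consistent tiebreaking and the simplicity of shortest paths.
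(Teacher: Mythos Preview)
Your proof is correct and takes essentially the same approach as the paper: assign each edge of $H$ to a pair in $P$ whose tiebroken path uses it, thereby mapping branching triples to $3$-element subsets of $P$, and then argue injectivity via pigeonhole plus consistency. The paper phrases the pigeonhole step in terms of which of the two branching edges comes first along each path, while you phrase it in terms of which of the two candidate vertices comes first, but these are equivalent formulations of the same argument; your write-up simply fills in the details behind the paper's ``one can check that $e'_1=e'_2$.''
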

\begin{proof}
 Assign each edge $e$ to some pair $p =(v,v') \in P$ such that $e \in \pi(p) $ (i.e. edge $e$ belongs to the shortest path from $v$ to $v'$).
 We show that any three pairs in $P$ cannot share two or more branching triples. By way of contradiction, if three pairs $p_1,p_2,p_3$ share two branching triples $(e_1,e_2,e_3)$ and $(e'_1,e'_2,e'_3)$, then a situation similar to the following will always happen: the edge $e_1$ precedes $e'_1$  in $p_1$ and $e_2$ precedes $e'_2$ in $p_2$. Since $\pi$ is a consistent tiebreaking, one can check that $e'_1=e'_2$ which is a contradiction. That means for every three pairs in $P$, there is at most one corresponding branching triple. Consequently the number of branching triples is at most $\binom{|P|}{3}$.
\end{proof}

The first two edges incident to any vertex $v$ will not create a branching triple; after that, each new edge will contribute at least one new branching triple, so a graph with $O(n)$ branching triples has $O(n)$ edges. From the previous lemma, we conclude that if $|P|=O(n^\frac{1}{3})$ then $|E(H)| = O(n)$. Now, given a graph $G$ and a set of vertex pairs $P$, we can partition $P$ into subsets of size $O(n^\frac{1}{3})$ and apply this fact, which gives the desired conclusion.

\subsection{Further Reading}

Bodwin and Williams~\cite{bodwin2016better} proved an upper bound of $O(n^\frac{2}{3}|P|^\frac{2}{3}+n|P|^\frac{1}{3})$ for undirected and unweighted graphs using a new type of tiebreaking scheme.  They also rely on a clustering technique to prove new upper bounds for additive spanners. 

Abboud and Bodwin~\cite{abboud2016error} consider lower bounds for pairwise spanners.  Most importantly, they prove that lower bounds for pairwise distance preservers imply lower bounds for pairwise spanners.
{\color{black}These connections were significantly expanded in~\cite{abboud2017frac}.} 


The existing trivial lower bounds show that in the worst case, the size of a distance preserver is at least linear in $n$ and $p$. Bodwin~\cite{bodwin2017linear} makes some progress on identifying the cases that these bounds are tight, i.e., when a linear size ($|E'|=O(n+p)$) distance preserver is guaranteed. For example, by the result discussed above, if the number of pairs is $O(n^\frac{1}{3})$ then one can always find a distance preserver with $O(n)$ edges.

Abboud and Bodwin \cite{AB18} introduce and study \emph{reachability preservers}, in which only reachability (typically in a directed graph) rather than distances between demand pairs must be preserved.

\subsection{Multiplicative Pairwise Spanners}
\begin{itemize}
\setlength{\itemindent}{.2in}
\item[\cite{elkin2017terminal}] \bibentry{elkin2017terminal}
\end{itemize}

The paper \cite{elkin2017terminal} consists of various embedding techniques of a metric space (or graph) into a normed space (or a family of graphs), bounding metric distortion for connecting nodes to  terminals. 
As an example of the generic results in the paper, if one can embed any $m$--node graph into $\ell_p$ with distortion $\alpha(m)$ using $\gamma(m)$ dimensions, then there is an algorithm for embedding any graph with $k$ terminals into $\ell_p$, with distortion $2\alpha(k)+1$, using $\gamma(k)+n-1$ dimensions.

A \emph{terminal graph spanner} is another name for a sourcewise spanner; that is, given a graph $G = (V, E)$ and a set of ``terminal nodes'' $K \subseteq V$, the spanner property must hold for all pairs in $K \times V$.
This name is more common when dealing with metric embeddings rather than general graphs.
Another result of \cite{elkin2017terminal} is the following construction of a $(4t-1)$--terminal spanner with $O(n+\sqrt{n} |K|^{1+\frac{1}{t}})$ edges, for a graph $G$ with terminal set $K\subseteq G$:

\begin{itemize}\normalsize
    \item Create a $(2t-1)$ metric spanner $H'$ of $G$, with $P=\mathcal{E}(H')$ and $|P| < O(|K|^{1+\frac{1}{t}})$
    
    \item Applying a distance preserver algorithm to $P$, obtain $G' \subseteq G $ with $O(n+\sqrt{ n} |P|)$ edges, preserving distances in $K$
    
    \item Create $H\supset G'$ by adding shortest path tree in $G$ with $K$ as root. 
\end{itemize}

One reason that pairwise spanners with multiplicative error are perhaps less well studied than others types of error is that they are nontrivial only in a restricted range of parameters.
Suppose we have $|P| = n^{1 + c}$ demand pairs for some $c > 0$, and we want a multiplicative $t$--spanner.
If $c$ is not too large relative to $t$, then at least $|P|$ edges may be needed in the spanner.
On the other hand, $O(|P|)$ edges always suffice by the following construction (which is folklore).
Preprocess the graph with the all-pairs mixed error spanner of \cite{elkin2004}, with parameters chosen such that the spanner has only $O(|P|)$ edges.
One can then argue that all pairs in $P$ have been well spanned, except maybe for pairs at constant (depending only on $c, t$) distance in the original graph.
Hence, we may add an exact shortest path for all remaining pairs, and these cost only $O(|P|)$ edges in total.

\subsection{Open Problems}
\begin{enumerate}
    \item What is the largest $p = p(n)$ so that, for any $|P| = p$ node pairs in an undirected unweighted $n$--node graph, there is a pairwise spanner of $P$ on $O(n)$ edges with $+c$ additive error?  In particular, it is known that any $|P| = O(n^\frac{1}{2})$ pairs have a distance preserver ($+0$ error) on $O(n)$ edges, but it is not clear if one can handle more pairs with (say) a $+2$ error tolerance.
\end{enumerate}

\section{\color{black} Extremal Bounds for Additive, Mixed, and Sublinear Spanners}\label{SEC:LowerBounds}

\subsection{\color{black} Purely Additive Spanners}

{\color{black}

\begin{itemize}
    \setlength{\itemindent}{.2in}
    \item[\cite{Aingworth99fast}] \bibentry{Aingworth99fast}
\item[\cite{woodruff2006lower}] \bibentry{woodruff2006lower} 
\item[\cite{baswana2010additive}] \bibentry{baswana2010additive}
\item[\cite{chechik2013new}] \bibentry{chechik2013new}
\item[\cite{Knudsen14}] \bibentry{Knudsen14}
\item[\cite{abboud2017frac}] \bibentry{abboud2017frac}
\item[\cite{huang2018lower}] \bibentry{huang2018lower}
\end{itemize}
There are three classic upper bounds results for spanners of unweighted graphs with purely additive error:
\begin{enumerate}
\item Implicit in Aingworth et al.~\cite{Aingworth99fast} is the fact that all graphs have additive $+2$--spanners with $\widetilde{O}(n^{\frac32})$ edges; it has been subsequently observed (in folklore) that the bound can be improved to $O(n^{\frac32})$.
A qualitatively different construction of the same result follows from \cite{baswana2010additive, cygan13, elkincomm, Knudsen14}.

\item Chechik~\cite{chechik2013new} proved that all graphs have additive $+4$--spanners with $\widetilde{O}(n^{\frac75})$ edges.
The size was recently improved slightly to $O(n^{\frac75})$ by Bodwin~\cite{bodwin2020some}.

\item Baswana et al.~\cite{baswana2010additive} proved that all graphs have additive $+6$--spanners with $O(n^{\frac43})$ edges.
A simple reframing of the construction was given by Knudsen~\cite{Knudsen14}, and a qualitatively different construction with slightly worse size $\widetilde{O}(n^{\frac43})$ but faster construction time $\widetilde{O}(n^2)$ was given by Woodruff~\cite{woodruff2010additive}.
Knudsen~\cite{knudsen2017additive} later improved the size to $O(n^{\frac43})$ and the construction time to $O(n^2)$, at the cost of increased error of $+8$.
\end{enumerate}
}

These upper bounds on the size of a spanner versus additive error begin to exhibit a tradeoff curve, where one can construct a sparser additive spanner if more additive error is allowed. One natural question is: can we construct even sparser additive spanners while only incurring a small amount of additive error? Specifically, given $0 < \varepsilon < \frac{1}{3}$, is there a constant $k_{\varepsilon}$ such that every graph $G$ has an additive $+k_{\varepsilon}$--spanner containing $O(n^{1 + \varepsilon})$ edges?
{\color{black}Following some progress in~\cite{abboud2016error, woodruff2006lower},} Abboud and Bodwin~\cite{abboud2017frac} give a negative resolution to this question; they show that for $0 < \varepsilon < \frac{1}{3}$, one cannot even compress an input graph into $O(n^{1+\varepsilon})$ bits, so that one can recover distance information for each pair of vertices within $n^{o(1)}$ additive error. This implies that one cannot construct an additive spanner using $O(n^{1+\varepsilon})$ edges {\color{black} in this parameter regime, unless $\text{poly}(n)$ additive error is allowed.
This construction was subsequently optimized by Huang and Pettie~\cite{huang2018lower}.
In particular, they proved that an additive spanner on $O(n)$ edges must have $+\Omega(n^{\frac{1}{13}})$ error, improving over a lower bound of $+\Omega(n^{\frac{1}{22}})$ from \cite{abboud2016error}.

The reason that all constructions of additive spanners have even error is because there is a reduction to the setting of bipartite input graphs, where error must come in increments of $2$.
The reduction is as follows: given an $n$--node graph $G=(V, E)$, make a graph $G'$ in the following two steps:
\begin{enumerate}
    \item Make two identical copies of the nodes in $V$, called $V_L, V_R$.
    \item For each edge $(u, v) \in V$, add two $(u_L, v_R), (u_R, v_L)$ to $G'$, where $u_L, v_L, u_R, v_R$ are the copies of $u, v$ in $V_L, V_R$, respectively.
\end{enumerate}
Since $G'$ is bipartite, any additive spanner of $G'$ will have only an even amount of error; that is, a $+(2c+1)$--spanner on $G'$ is necessarily also a $+2c$--spanner.
Additionally, a $+2c$--spanner of $G'$ can be converted to a $(+2c)$--spanner of $G$ on at most the same number of edges, by merging together $v_L, v_R$ in the natural way.
To state the consequences of this reduction another way, if one can prove that every $n$--node graph has (say) a $+5$ additive spanner on $\le e^*(n)$ edges, then in fact every graph has a $+4$ additive spanner on $O(e^*(n))$ edges.
Thus, generally, it is not interesting to consider purely-additive spanners with odd error.
Note however that this reduction only applies to \emph{extremal} problems where the goal is to determine the worst-case size/error tradeoff.
For \emph{algorithmic} problems, where the goal is just to find a sparse additive spanner of a particular input graph, the cases of $+2k$ and $+2k+1$ error remain non-equivalent.
Indeed, in \cite{chlamtavc2017approximating}, the authors prove that the algorithmic problem of approximating the sparsest $+1$--spanner of an input graph is already hard (whereas a $+0$--spanner is trivial).
}

\subsection{\color{black} Mixed and Sublinear Spanners}

{\color{black}
\begin{itemize}
    \setlength{\itemindent}{.2in}
    \item[\cite{abboud2018hierarchy}] \bibentry{abboud2018hierarchy}
    \item[\cite{elkin2004}] \bibentry{elkin2004}
\end{itemize}

In light of the previously-mentioned lower bounds of \cite{abboud2017frac}, additive spanners cannot generally achieve near-linear size $O(n^{1+\eps})$.
Thus, it is natural to ask how one \emph{can} achieve spanners of this quality, necessarily paying more than purely-additive error, but hopefully less than purely-multiplicative error.
Indeed, this is possible, and it is studied in two closely related paradigms.
These are:
\begin{enumerate}
    \item \emph{Mixed spanners}, which avoid the above lower bounds by allowing a small $(1+\eps)$ multiplicative stretch in addition to $+\beta$ additive stretch.
    That is, a $(1+\eps, \beta)$ mixed spanner $H$ of an input graph $G$ satisfies
    $$\dist_H(u, v) \le (1+\eps)\dist_G(u, v) + \beta$$
    for all node pairs $u, v$.
    The main construction of mixed spanners was given in \cite{elkin2004}; see also \cite{abboud2018hierarchy, ben2020new} for followup work.
    
    \item \emph{Sublinear spanners}, which have additive error that depends sublinearly on the distance in question.
    That is, an $f(d)$--sublinear spanner $H$ of an input graph $G$ satisfies
    $$\dist_H(u, v) \le \dist_G(u,v) + f(\dist_G(u, v))$$
    for all node pairs $u, v$.
    So, $f(d) = O(1)$ (independent of $d$) corresponds to the case of purely additive spanners.
    The main construction of sublinear spanners was given in \cite{thorup2006spanners}, as well as sublinear \emph{emulators}, which satisfy the same distance requirement but are allowed to have new long-range edges $(u, v)$ of weight $\dist_G(u, v)$ (even though the input graph $G$ is still unweighted). (See Section \ref{SEC:emulators} for more on emulators.)
    Follow-up work on sublinear spanners can be found in \cite{chechik2013new, pettie2009low}.
\end{enumerate}

The upper bounds for mixed spanners from \cite{elkin2004} have a hierarchical nature: for any fixed positive integer $k$ and any $\eps>0$, they can construct $(1+\eps, \beta)$ mixed spanners of size
$$O(\text{poly}(1/\eps) n^{1+\frac{1}{2^{k+1}-1}})$$
with $\beta = O(1/\eps)^{k-1}$.
The Thorup--Zwick \emph{emulators} are similarly hierarchical: for any fixed positive integer $k$, they have size
$$O(n^{1+\frac{1}{2^{k+1}-1}}),$$
and sublinear error function $f(d) = O(d^{1 - \frac1k})$.
This sublinear error function is slightly stronger than $(1+\eps, O(1/\eps)^{k-1})$ mixed error; in fact, it can be viewed as obtaining mixed error of the above type for every possible choice of $\eps$ simultaneously.
These emulators can also be converted to $(1+\eps, \beta)$ spanners of the same quality as the above, by replacing each edge of weight $\le \beta$ with a path through the original graph of length $\le \beta$ (folklore).
It is a major open question in the area to determine if one can build sublinear spanners of the same quality as the Thorup--Zwick emulators without taking on this extra $\text{poly}(1/\eps)$ factor.

On the lower bounds side, it is proved in \cite{abboud2018hierarchy} that the Thorup--Zwick emulator hierarchy is essentially optimal (even for all data structures): for any fixed $k$, one cannot improve the error function to $f(d) = o(d^{1-\frac1k})$, \emph{and} one cannot cannot polynomially improve the size (e.g. to $O(n^{1+\frac{1}{2^{k+1}-1} - 0.001})$).
Thus Elkin and Peleg's mixed spanners are similarly optimal, at least up to their dependence on $\text{poly}(1/\eps)$.

To highlight the surprising hierarchical nature of distance compression revealed by these results, \textit{a priori} it seems reasonable to ask: what kind of spanner size can be achieved with sublinear error function $f(d) = O(d^\frac13)$?
The lower bounds of \cite{abboud2018hierarchy} say that one needs $\Omega(n^{\frac43 - o(1)})$ edges -- but if one is willing to pay $\Omega(n^{\frac43})$ edges, then one can even have $f(d) = O(1)$.
Thus not all sublinear error functions are equally interesting to consider; rather, Thorup--Zwick type sublinear error of the form $f(d) = O(d^{1 - \frac1k})$ is of particular interest.
A similar story holds for the hierarchical tradeoff between $\eps$ and $\beta$ for mixed spanners.

}
 
\subsection{\color{black} Additive and Mixed Spanners of Weighted Graphs}

{\color{black}
\begin{itemize}
    \setlength{\itemindent}{.2in}
    \item[\cite{elkin2019almost}] \bibentry{elkin2019almost}
    \item[\cite{ahmed2020weighted}] \bibentry{ahmed2020weighted}
    \item[\cite{huang2019thorup}] \bibentry{huang2019thorup}
    \item[\cite{elkin2019linear}] \bibentry{elkin2019linear}
\end{itemize}

The above results all concern spanners for \emph{unweighted} graphs.
One can also consider spanners for weighted input graphs.
It will not be possible to have a nontrivial construction of (say) an additive spanner with $+2$ error for any weighted input graph, since one can always scale up the edge weights until no edge can be removed without introducing $>2$ additive error between its endpoints.
Thus it is more natural to look for (say) $+2W$ additive spanners, where $W$ is the maximum edge weight of the input graph.
Purely additive spanners of this type were provided in~\cite{ahmed2020weighted}, and mixed spanners whose additive part scales with $W$ were given in~\cite{elkin2019almost}.
In fact, the latter construction has a stronger property: for each node pair $u, v$, the additive error scales with $W = W(u, v)$ the maximum edge weight along the true $u \leadsto v$ shortest path in the input graph, rather than the maximum edge weight globally.

The bounds for mixed and sublinear spanners/emulators are closely related to the bounds for \emph{hopsets}, a related graph-theoretic object where the goal is to add some long-range weighted edges to a (possibly weighted) graph so that an approximate shortest path between any two nodes can be realized in a small number of hops.
Formal connections between these objects were introduced in \cite{elkin2019linear, huang2019thorup}, and expanded in a recent survey by Elkin and Neiman \cite{elkin2020near}.
}

\subsection{\color{black}Open Problems}

{\color{black}
\begin{enumerate}
    \item Do all undirected unweighted graphs have an additive spanner with $+4$ error on $O(n^{\frac43})$ edges?  Currently, the best known bound for $+4$ additive error is $O(n^{\frac75})$ edges~\cite{bodwin2020some, chechik2013new}; any progress beyond this would be interesting.
    
    \item Can Thorup and Zwick's construction of sublinear emulators~\cite{thorup2006spanners} be converted to spanners with similar size/stretch tradeoffs?
    
    \item How much error is needed for an additive spanner on $O(n)$ edges?  Currently there is a construction of additive spanners on $O_{\eps}(n)$ edges with $+O(n^{\frac37 + \eps})$ error~\cite{bodwin2016better}, and the lower bound is $+\Omega(n^{\frac{1}{13}})$ from~\cite{huang2018lower}.
    
    \item Can the $+6$ additive spanner on $O(n^{\frac43})$ edges from \cite{baswana2010additive} be converted to a $+6W$ spanner of the same size for weighted input graphs with max weight $W$?
    (In \cite{ahmed2020weighted, elkin2019almost}, the corresponding question is resolved affirmatively for the $+2$ and $+4$ additive spanners.
    
\end{enumerate}

}

\section{{\color{black}Linear Programming Formulations}}\label{SEC:ILP}

\begin{itemize}
\setlength{\itemindent}{.2in}
\item[\cite{Sigurd04ESA}] \bibentry{Sigurd04ESA} 
\item[\cite{dinitz2011directed}] \bibentry{dinitz2011directed}
\item[\cite{chlamtac2012}]
\bibentry{chlamtac2012}
\item[\cite{BERMAN201393}]
\bibentry{BERMAN201393}
\item[\cite{chlamtac2016}]
\bibentry{chlamtac2016}
\item[\cite{dinitz2016approximating}]
\bibentry{dinitz2016approximating}
\item[\cite{MLGS_proceeding}] \bibentry{MLGS_proceeding}
\item[\cite{dinitz2019lasserre}]
\bibentry{dinitz2019lasserre}
\end{itemize}

{\color{black}
\subsection{Quick Introduction to Linear Programming and Duality}

(For an in-depth reference on Linear Programming and related topics, see \cite{bertsimas1997introduction}.)
A Linear Programming problem (LP) is an optimization problem of the form  
\begin{equation}\label{eqn:ilp-primal}
\begin{aligned}
\max \quad & c^T x\\
\textrm{s.t.} \quad & Ax \geq b\\
  & x \geq 0    \\
\end{aligned}
\end{equation}
when $c, x \in \mathbb{R}^n$ and $b\in \mathbb{R}^m$ are column vectors, and $A$ is an $m\times n$ matrix.   The number of rows and columns of $A$ are equal to the number of constraints and variables respectively.

If we add the additional constraint that the variables $x_i$ (components of $x$) are integers, the corresponding problem is called an Integer Linear Programming problem (ILP). A 0-1 ILP is an ILP where the variables $x_i$ are restricted to be binary (0 or 1). A \emph{mixed} ILP is one where some variables $x_i$ are required to be integers.

Computing an optimal solution to an ILP is NP--hard, as NP--hard problems can be formulated as instances of an ILP. To obtain a reasonable approximation, one usually constructs a \emph{relaxation} of the ILP problem, where the variables are no longer required to be integral. The \emph{integrality gap} is defined as the maximum ratio between the ILP solution and the solution of the corresponding relaxed LP. 

The most obvious way to obtain an integer solution from a non-integer vector optimizing the relaxation is to simply round to the closest feasible integer vector.
However, there are sometimes far better ways to convert to an integer solution, and more sophisticated tools have been developed to address this rounding.
One such a tool is the \emph{lift and project} method, where we introduce new variables and lift the LP problem to higher dimensions, and then project it back to get a better approximation to the ILP solution.

To illustrate, let's assume that $x_1$ and $x_2$ are binary variables corresponding to paths $P_1,P_2$ in $G$,  with the meaning that if $x_i=1$ we add the path to the spanner and if $x_i=0$ we don't add that path. Say we also want to force the solution to have at least one of the two paths. The obvious way of doing this is by adding a constraint
\[
x_1 + x_2 \geq 1
\]
We are looking for a better alternative to the above constraint. Consider the following non-linear constraint
\[
(1-x_1)(1-x_2) = 0
\]
with $0 \leq x_i \leq 1$. This is a better representation if we want to allow variables to have non-integer values, as it forces one of the variables to take the integer value of one. To overcome non-linearity we have to \emph{lift} the problem and introduce a new variable $x'=x_1 x_2$. We then have the corresponding linear constraint:
\[
0 = (1-x_1)(1-x_2) = 1 - x_1 - x_2 + x_1 x_2 = 1-x_1-x_2+x'
\]
we cannot guarantee that $x'=x_1 x_2$ in the final solution, but we know that this relaxation is not worse than the simple LP relaxation because for $x'\geq 0$,  $1-x_1-x_2+x'=0$ implies $x_1+x_2 \geq 1$. This is the basic idea of the lift and projection method. 

In general, when we are dealing with many such constraints the degree of the corresponding non-linear polynomial can get arbitrarily large. Hence, one can consider an upper bound $d$ for the degree of the corresponding non-linear polynomial. This choice will produce a \emph{hierarchy} of LP relaxations to the original ILP problem. An example of such a method with a  hierarchy of LP relaxations is the \emph{Sherali--Adams} lift and projection method. There is also a slightly stronger version known as \emph{Lasserre} hierarchy which has been applied to spanner problems.

\subsection{An LP Relaxation of the Multiplicative Spanner Problem}\label{SEC:lp-more}
Let us look at a well known LP relaxation of the \emph{directed} $t$--spanner problem, given by Dinitz and Krauthgamer \cite{dinitz2011directed}, where the problem is to compute a $t$--spanner of a strongly connected directed graph with the minimum number of edges. For an edge $(u,v)\in E$, let $\mathcal{P}_{uv}$ denote the set of all $u \to v$ paths whose length is at most $t d_G(u,v)$. The indicator variable $x_e$ is 1 if $e$ belongs to the spanner and 0 otherwise. The variable $f_P$ represents the flow along path $P \in \mathcal{P}_{uv}$.

\begin{align}
\min \quad & \sum_{e\in E} x_e \label{ilp:dinitz2011primal}\\
\textrm{s.t.} \quad & \sum_{p\in P_{uv}: e\in p} f_P \geq x_e \quad  \forall (u,v)\in E,\, \forall e \in E \label{ilp:dinitz2011primal_2}\\
  &  \quad \sum_{P \in P_{uv}} f_P \geq 1   \quad \quad  \forall (u,v)\in E \label{ilp:dinitz2011primal_3}\\
  & \quad x_e \geq 0 \quad \quad \forall e \in E \label{ilp:dinitz2011primal_4}\\
  & \quad f_P \geq 0 \quad \quad \forall (u,v)\in E,\, \forall P \in \mathcal{P}_{uv} \label{ilp:dinitz2011primal_5}\\
\end{align}

If the number of paths is polynomial then the LP problem can be solved optimally, as both the number of variables and constraints will be polynomial. In general, this is not true and we may have an exponential number of variables and constraints at the same time. Here, the idea is to write down the dual LP problem: 

\begin{align}
\max \quad & \sum_{(u,v) \in E} z_{u,v}  \label{ilp:dinitz2011dual}\\
\textrm{s.t.} \quad & \sum_{(u,v) \in E} y_{u,v}^e \leq 1 \quad  \forall e \in E  \label{ilp:dinitz2011dual_1}\\
  &  \quad z_{u,v} - \sum_{e\in E} y_{u,v}^e \leq 0   \quad   \forall (u,v)\in E,\, \forall P \in \mathcal{P}_{uv}  \label{ilp:dinitz2011dual_2}\\
  & \quad z_{u,v} \geq 0 \quad \quad \forall (u,v) \in E  \label{ilp:dinitz2011dual_3}\\
  & \quad y_{u,v}^e \geq 0 \quad \quad \forall (u,v)\in E , \, \forall e \in E  \label{ilp:dinitz2011dual_4}\\
\end{align}

The next step is to look at the constraint of the dual problem and observe whether checking a subset of constraints can be done in polynomial time (in more technical terms we need to construct a separation oracle for the dual LP; see \cite{bertsimas1997introduction}). In this case, the only nontrivial set of constraints is of the form  $z_{u,v} - \sum_{e\in E} y_{u,v}^e \leq 0$, and we have an exponential number of constraints of this type. One then shows that a polynomial subset of these constraints gives a good approximation to the dual problem. More specifically, the rest of the constraints are violated only by a factor of $(1-\epsilon)$, i.e. there might be paths with $(1-\epsilon)z_{u,v} \leq \sum_{e\in E} y_{u,v}^e $.

We can, therefore, obtain an optimal solution to the dual problem up to a $(1-\epsilon)$ factor. Using a strong version of the duality theorem, if we remove the variables from the primal problem corresponding to these constraints, we acquire the following result. 

\begin{theorem}[\cite{dinitz2011directed}]
For any $\epsilon \geq 0$, there is an algorithm that in polynomial time computes a $(1+\epsilon)$--approximation to the optimal solution of above primal LP problem \eqref{ilp:dinitz2011primal}--\eqref{ilp:dinitz2011primal_5}.
\end{theorem}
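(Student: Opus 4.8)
The plan is to solve the \emph{dual} LP \eqref{ilp:dinitz2011dual}--\eqref{ilp:dinitz2011dual_4} to within a $(1+\epsilon)$ factor by running the ellipsoid method with an \emph{approximate} separation oracle, and then to read off a near-optimal solution of the primal LP \eqref{ilp:dinitz2011primal}--\eqref{ilp:dinitz2011primal_5} by restricting it to the polynomially many path variables that the ellipsoid run actually touches.

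The first step is to understand the separation problem for the dual. Of the dual constraints, only \eqref{ilp:dinitz2011dual_1} --- one per edge $e\in E$ --- are few and trivially checkable; \eqref{ilp:dinitz2011dual_2} is an exponentially large family, one inequality $z_{u,v}\le\sum_{e\in P}y_{u,v}^{e}$ for every demand pair $(u,v)\in E$ and every path $P\in\mathcal{P}_{uv}$. For a fixed $(u,v)$ and a fixed candidate $(y,z)$, all of these hold iff $z_{u,v}\le\min_{P\in\mathcal{P}_{uv}}\sum_{e\in P}y_{u,v}^{e}$, and computing this minimum is exactly the \emph{restricted shortest path} problem: among all $u\to v$ paths of length at most $t\,d_G(u,v)$ in the original (possibly weighted) metric, find one of minimum cost under the auxiliary weights $y_{u,v}^{e}$. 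This problem is NP--hard, but it has a classical fully polynomial-time approximation scheme: in time $\mathrm{poly}(n,1/\epsilon)$ it returns a path $P^{*}\in\mathcal{P}_{uv}$ with $\sum_{e\in P^{*}}y_{u,v}^{e}\le(1+\epsilon)\min_{P\in\mathcal{P}_{uv}}\sum_{e\in P}y_{u,v}^{e}$. (When this subproblem is itself poly-time solvable --- e.g.\ in unweighted graphs, where $\mathcal{P}_{uv}$ consists of paths with at most $t$ hops and the minimum-cost bounded-hop path is found exactly by a Bellman--Ford-style dynamic program --- one even gets $\epsilon=0$.)

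Next I would assemble the oracle and the two-LP argument. The oracle, on input $(y,z)$, checks \eqref{ilp:dinitz2011dual_1} and returns any violated one; otherwise it runs the FPTAS for each pair $(u,v)$ and, if it obtains $P^{*}$ with $\sum_{e\in P^{*}}y_{u,v}^{e}<z_{u,v}$, returns the (genuine, hence valid) dual constraint indexed by $P^{*}$; if nothing triggers, it declares $(y,z)$ feasible. The oracle only ever cuts with true dual inequalities, so every ellipsoid cut is valid; and when it accepts $(y,z)$ we know $z_{u,v}\le(1+\epsilon)\min_{P\in\mathcal{P}_{uv}}\sum_{e\in P}y_{u,v}^{e}$ for all pairs, i.e.\ $(y,z)$ lies in the relaxed polytope $D_{\epsilon}:=\{(y,z):(y,z/(1+\epsilon))\in D\}$, where $D$ is the true dual polytope. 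Hence $D\subseteq\{\text{accepted points}\}\subseteq D_{\epsilon}$, and since by strong LP duality $\max_{D}\sum_{(u,v)}z_{u,v}=\OPT$ (the primal LP optimum), also $\max_{D_{\epsilon}}\sum_{(u,v)}z_{u,v}=(1+\epsilon)\OPT$. Running the ellipsoid method (binary searching on the objective) with this oracle then yields, in $\mathrm{poly}(n,1/\epsilon)$ iterations, a value $\gamma^{*}\in[\OPT,(1+\epsilon)\OPT]$. Let $\mathcal{Q}$ be the polynomially many paths whose constraints the oracle ever returned; every cut made during the run is a constraint of the reduced dual $\bar D$ (keep only \eqref{ilp:dinitz2011dual_1} and the constraints \eqref{ilp:dinitz2011dual_2} with $P\in\mathcal{Q}$), so the same run certifies $\max_{\bar D}\sum_{(u,v)}z_{u,v}\le\gamma^{*}\le(1+\epsilon)\OPT$. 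By strong LP duality $\max_{\bar D}\sum_{(u,v)}z_{u,v}$ equals the optimum of the primal \eqref{ilp:dinitz2011primal}--\eqref{ilp:dinitz2011primal_5} restricted to the variables $\{x_e\}\cup\{f_P:P\in\mathcal{Q}\}$ (all other $f_P$ set to $0$); this restricted primal has polynomial size and optimum in $[\OPT,(1+\epsilon)\OPT]$, so I would solve it exactly and output its optimal solution, padded with zeros, choosing the FPTAS accuracy equal to the target $\epsilon$.

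The main obstacle is exactly this separation oracle: restricted shortest path is NP--hard, so the dual constraints \eqref{ilp:dinitz2011dual_2} cannot be checked exactly in polynomial time, an exact LP solution is unattainable, and a $(1+\epsilon)$ loss is forced. The two delicate points are (i) using a genuine FPTAS that returns a path \emph{within} the length budget (so its constraint is a real dual inequality) and approximates only the cost, and (ii) confirming that an approximate oracle still gives a \emph{provable} $(1+\epsilon)$ guarantee --- which it does because its cuts are always valid dual inequalities while its acceptances are sandwiched between $D$ and $D_{\epsilon}$. The remaining bookkeeping (well-boundedness and bit-complexity of the polytopes, counting ellipsoid iterations) is standard.
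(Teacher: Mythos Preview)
Your proposal is correct and follows essentially the same approach as the paper: pass to the dual, use an approximate separation oracle for the exponentially many path constraints (which reduces to the NP--hard restricted shortest path problem and hence requires an FPTAS, accounting for the $(1+\epsilon)$ loss), run the ellipsoid method, and then invoke strong duality on the polynomially many paths actually generated to obtain a near-optimal primal solution. You have supplied more of the technical mechanics (the $D\subseteq\{\text{accepted}\}\subseteq D_\epsilon$ sandwich, the reduced primal/dual pair) than the paper's sketch, but the underlying argument is the same.
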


Dinitz and Krauthgamer also prove an integrality gap for this problem:

\begin{theorem}[\cite{dinitz2011directed}]\label{THM:DinitzIntegralityGap}
The integrality gap of the primal LP~\eqref{ilp:dinitz2011primal}--\eqref{ilp:dinitz2011primal_5} for the directed $t$--spanner problem is $\Omega(\frac{1}{t} n^{\frac{1}{3} - \epsilon})$, for any constant $\epsilon>0$.
\end{theorem}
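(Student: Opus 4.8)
The plan is to exhibit, for each large $n$, a single directed graph $G=(V,E)$ (with demand set all of $E$) that is cheap fractionally but expensive integrally: the flow LP \eqref{ilp:dinitz2011primal}--\eqref{ilp:dinitz2011primal_5} should admit a feasible solution of value $\tilde{O}(n)$, whereas the sparsest integral $t$--spanner of $G$ should require $\Omega\!\left(\frac{1}{t}\,n^{4/3-\epsilon}\right)$ edges; the quotient of these two quantities is the claimed gap. The work then splits into three parts: (i) designing $G$, (ii) upper-bounding the LP optimum, and (iii) lower-bounding $\OPT$.

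For (i) I would build $G$ from a ``robust gadget'' with roughly $m=\Theta(n^{4/3})$ edges in which the two endpoints of every edge $(u,v)$ are joined by many — about $n^{1/3}$ — internally short ($\le t$-hop) alternative $u$-$v$ paths, while the gadget still has large directed girth along its ``spine'' so that these alternative routes are forced to overlap heavily (no large edge-disjoint family of them exists). Suitable gadgets come either from incidence/set-system constructions or from a carefully tuned random digraph, layered $O(t)$ times so that $d_G(u,v)$ takes the prescribed value for every demand pair. For (ii) I would then use the fully symmetric fractional solution $x_e\equiv c\,n^{-1/3}$ for every $e$: for each demand $(u,v)$, route $\Theta(n^{-1/3})$ flow along each of its $\Theta(n^{1/3})$ short paths. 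Since, by design, each edge lies on at most one short path of any given demand, the capacities $x_e$ are respected and the routed flow has value $1$, so this is feasible, with objective $\sum_e x_e = O(m\,n^{-1/3}) = O(n)$.

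The heart of the argument is (iii): showing that no integral $t$--spanner $H\subseteq G$ can have $o(m)$ edges. I would argue by contradiction — if $H$ omits more than a constant fraction of $E$, then a counting (or first-moment, if $G$ is random) argument over the few, heavily overlapping short $u$-$v$ walks, exploiting the girth property of the gadget, produces a demand edge $(u,v)$ all of whose $\le t$-hop routes have been destroyed, so $d_H(u,v)>t$. Combining (ii) and (iii) yields integrality gap $\ge \Omega(n^{4/3-\epsilon})/O(n) = \Omega(n^{1/3-\epsilon})$; the $\frac{1}{t}$ factor and the $\epsilon$ loss are traced back to the gadget, since the girth-versus-density tradeoff only permits $m = n^{4/3-\epsilon}$, and the stretch budget $t$ enters when bounding how many short walks a single deleted edge can simultaneously serve.

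I expect step (iii), together with pinning down the exact gadget in (i), to be the main obstacle. The two demands placed on $G$ pull in opposite directions — ``many short alternative paths for every edge,'' needed for a cheap fractional solution, versus ``no sparse subgraph preserves all short distances,'' needed for the integral lower bound — and reconciling them is precisely what forces the delicate construction and the $n^{1/3-\epsilon}/t$ bound rather than a clean $n^{1/3}$. If a random construction is used, the extra technical burden is a union bound over all sufficiently sparse candidate subgraphs and all short-walk patterns, which is what constrains the achievable density in the first place.
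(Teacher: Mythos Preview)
The survey does not prove this theorem; it simply quotes the result from Dinitz and Krauthgamer~\cite{dinitz2011directed} and moves on. So there is no ``paper's own proof'' to compare your proposal against.

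That said, a brief remark on the proposal itself. Your high-level arithmetic is the right one --- one wants an $n$-vertex instance with fractional optimum $\tilde O(n)$ and integral optimum $\Omega(n^{4/3-\epsilon}/t)$ --- and the idea of a uniform fractional assignment $x_e \approx n^{-1/3}$ routed along $\Theta(n^{1/3})$ alternative short paths per demand is exactly how the cheap LP solution arises in the actual construction. Where your sketch is thin is step~(i)/(iii): you describe the gadget only by the properties you need from it (many short alternatives per edge, yet no sparse subgraph $t$-spans), and you acknowledge these pull in opposite directions, but you do not actually exhibit such an object. In Dinitz--Krauthgamer the instance is a concrete layered random construction (closely tied to Min-Rep/Densest-$k$-Subgraph type instances), and the integral lower bound is not a girth argument of the kind you suggest but rather a direct analysis of that random structure: with high probability, any subgraph that $t$-spans all demand pairs must retain $\Omega(n^{4/3-\epsilon}/t)$ edges. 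Your ``girth along the spine'' heuristic and the proposed union bound over sparse subgraphs are plausible-sounding but would need to be replaced by (or shown equivalent to) that explicit analysis; as written, step~(iii) is a statement of what must be shown rather than an argument for it.
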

}

\subsection{ILPs for Pairwise Spanners}

Now we turn to considering ILPs which exactly solve the spanner problem for a given input graph. ILP solvers typically require significantly more computation than LP relaxations, but are practical in some instances.

Let $G=(V,E)$ be a graph (weighted or unweighted), let $P\subseteq V\times V$ be the desired pairs for which the spanner condition will hold, and let $t\geq1$ be a prescribed stretch factor. Recall that $w_e$ is the weight of an edge $e\in E$ (with $w_e = 1$ in the unweighted case).  Let $x_e = 1$ if $e$ is selected in the spanner, and $x_e=0$ otherwise. Further, let $\mathcal{P}_{uv}$ denote the set of all $u$--$v$ paths of length at most $t \cdot d_G(u,v)$, let $\mathcal{P}$ denote the union of all such paths ($\mathcal{P} = \bigcup_{(u,v) \in P} \mathcal{P}_{uv}$), and let $y_\rho = 1$ if the path $\rho \in \mathcal{P}$ is selected in the spanner, and 0 otherwise. Let $\delta_\rho^e = 1$ if edge $e$ is on path $\rho$, and 0 otherwise. The ILP formulation by Sigurd and Zachariasen for the pairwise multiplicative spanner problem is as follows:

\begin{align}
    \allowdisplaybreaks
    \text{min} \sum_{e \in E} w_ex_e \text{ subject to}\label{eqn:ilp-sigurd-obj}\\
    \sum_{\rho \in \mathcal{P}_{uv}} y_\rho \delta_\rho^e &\le x_e & \forall e \in E; \forall (u,v) \in P \label{eqn:ilp-sigurd-1}\\
    \sum_{\rho \in \PP_{uv}} y_\rho &\ge 1 & \forall (u,v) \in P \label{eqn:ilp-sigurd-2}\\
    x_e &\in \{0,1\} & \forall e \in E \label{eqn:ilp-sigurd-3} \\
    y_\rho &\in \{0,1\} & \forall \rho \in \PP \label{eqn:ilp-sigurd-4}
\end{align}

Constraint \eqref{eqn:ilp-sigurd-1} enforces that if a path $\rho \in \PP_{uv}$ is selected in the spanner, then all edges on that path are selected as well. Constraint \eqref{eqn:ilp-sigurd-2} enforces that every pair $(u,v) \in P$ has at least one selected $t$--spanner path.  


This ILP has $O(|\PP|)$ constraints which can be exponential in the size of the graph; however, the authors apply column generation to solve a restricted master problem (i.e. using only a subset of $\PP$ for the constraints), and show that this procedure yields an optimal solution.  The authors then use the ILP to show that the greedy algorithm of Alth\"{o}fer et al.~(see Section \ref{SEC:Greedy}) performs optimally in many cases.

{\color{black}Sigurd and Zachariasen left as an open question whether the pairwise spanner problem admits a practical polynomial size ILP. Recently, Ahmed et al.~\cite{MLGS_proceeding,MLGS2} proposed an affirmative resolution to this question, giving a polynomial-size flow-based ILP formulation for the pairwise spanner problem which can be used to compute minimum-weight graph spanners with arbitrary distortion function, even if the input graph is directed, and experiments were provided in support of its feasibility.}

The ILP of Ahmed et al.~\cite{MLGS_proceeding,MLGS2} is set up as follows.  Again let $P\subseteq V\times V$, and let $f:\R_+\to\R_+$ be a distortion function satisfying $f(x)\geq x$ for all $x$ (note the function need not be continuous). Replacing each edge $(u,v)$ with two directed edges $(u,v)$ and $(v,u)$ of equal weight. Let $\bar{E}$ be the set of all directed edges (so $|\bar{E}| = 2|E|$). Given $(i,j) \in \bar{E}$ and $(u,v) \in P$, the variable $x_{(i,j)}^{uv}$ is $1$ if edge $(i,j)$ is included in the selected $u$--$v$ path in the spanner $G'$, or 0 otherwise.

The formulation is then as follows; by $\In(v), \Out(v)$ we mean the set of incoming and outgoing edges to $v$ (respectively), and the vertices are arbitrarily ordered so that constraints (\ref{eqn:ilp-2})--(\ref{eqn:ilp-4}) are well-defined.

\begin{align}
    \allowdisplaybreaks
    \text{Minimize} \sum_{e \in E} w_ex_e \text{ subject to} \label{eqn:ilp-obj}\\
    \sum_{(i,j) \in \bar{E}} x_{(i,j)}^{uv} w_e &\le f(d_G(u,v)) & \hspace{-10pt}\forall (u,v) \in P, u<v; e = (i,j)\in E \label{eqn:ilp-1}\\
    \sum_{(i,j) \in \Out(i)} x_{(i,j)}^{uv} - \sum_{(j,i) \in \In(i)} x_{(j,i)}^{uv} &= \begin{cases}
        1 & i = u \\
        -1 & i = v \\
        0 & \text{else}
    \end{cases} & \hspace{-10pt} \forall (u,v) \in P, u<v; \forall i \in V \label{eqn:ilp-2} \\
    \sum_{(i,j) \in \Out(i)} x_{(i,j)}^{uv} & \le 1 &\forall (u,v) \in P, u<v; \forall i \in V \label{eqn:ilp-3}\\
    x_{(i,j)}^{uv} + x_{(j,i)}^{uv} &\le x_e & \hspace{-30pt}\forall (u,v) \in P, u<v; \forall e = (i,j) \in E \label{eqn:ilp-4}\\
    x_e, x_{(i,j)}^{uv} &\in \{0,1\} \label{eqn:ilp-5}
\end{align}
This ILP has $2|E||P|$ binary variables, or $2|E|\binom{|V|}{2} = |E||V|(|V|-1)$ variables in the full spanner problem where $P=V\times V$.
In the multiplicative spanner case, the number of ILP variables can be significantly reduced (see \cite{MLGS_proceeding} for more details).  These reductions are somewhat specific to multiplicative spanners, and so it would be interesting to determine if other simplifications are possible for more general distortion.
We note that the distortion $f$ can have any form, which allows for considerable flexibility in the types of pairwise spanners represented by the above ILP.


{\color{black}
Dinitz, Nazari and Zhang \cite{dinitz2019lasserre} prove a stronger integrality gap than Theorem \ref{THM:DinitzIntegralityGap} by considering Lasserre lifts of the flow LP. They essentially show that even the strongest lift and project methods cannot help significantly in the approximation of spanners.

Chlamt\'{a}\v{c}, Dinitz, and Krauthgamer  \cite{chlamtac2012} give an LP relaxation for the lowest-degree 2--spanner (LD--2SP) problem using the Sherali--Adams lift method. They establish a polynomial time algorithm with approximation ratio $O(\Delta^{3-2\sqrt{2}})$, where $\Delta$ is the maximum degree of the graph.

Chlamt\'{a}\v{c} and Dinitz \cite{chlamtac2016} formulate an (I)LP for the lowest-degree $k$--spanner (LD--$k$SP) problem. They also provide a hardness of approximation result based on the Min-REP problem. The basic idea is that small REP-covers correspond to small spanners, when constructing a yes/no decision oracle. We refer the reader to \cite{chlamtac2016} for the full details of REP.

\begin{theorem}
For any integer $k \geq 3$, there is no polynomial-time algorithm that can approximate
lowest degree $k$-spanner better than  $\Delta^{\Omega(1/k)}$ unless $NP \subset BPTIME(2^{polylog(n)})$, where $\Delta$ is the maximum degree of the input graph.
\end{theorem}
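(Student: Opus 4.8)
The plan is to reduce from Min-REP, the minimization variant of Label Cover, exploiting the high-girth hardness recorded earlier in this survey (Dinitz et al.~\cite{dinitz2016label}). Recall that a Min-REP instance is a biregular bipartite graph together with a partition of each side into super-nodes and, for each super-edge, a relation on the associated label pairs; a \emph{REP-cover} assigns one or more labels to each super-node so that every super-edge is covered, and one wishes to minimize the total number of assigned labels. The starting fact is that, unless $\mathrm{NP} \subseteq \mathrm{BPTIME}(2^{\mathrm{polylog}(n)})$, one cannot distinguish Min-REP instances admitting a cover of size $s$ from those in which every cover has size at least $s \cdot 2^{\Omega((\log^{1-\varepsilon} N)/k)}$, where $N$ is the instance size, and moreover this holds when the underlying bipartite graph has girth $> k$, i.e. in the regime $3 \le k \le \log^{1-2\varepsilon} N$.

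From such an instance I would build, following \cite{chlamtac2016}, a graph $G$: create a vertex for each (super-node, label) pair, attach "connector" gadgets for each super-edge, and wire everything so that (i) each super-edge is associated with a designated pair of vertices at distance exactly $k$ in $G$; (ii) every $G$-path of length $\le k$ between that pair must pass through a label-vertex of each of the two super-nodes, with the two labels related under that super-edge's relation; and (iii) retaining such a path in a spanner charges degree to the label-vertices it traverses and to a designated hub vertex that is joined to all "activated" label-vertices. Property (ii) is where girth $>k$ enters: any length-$\le k$ shortcut not respecting a relation would close a cycle of length $\le k$ in the underlying Label Cover incidence structure, a contradiction. Given (i)--(iii), any $k$-spanner $H$ of $G$ must retain, for every super-edge, a valid related label pair, so the activated label-vertices form a REP-cover whose size is controlled by $\deg_H(\text{hub})$; conversely a REP-cover of size $s$ yields a $k$-spanner in which the hub has degree $O(s)$ and all other degrees are bounded by the (constant) gadget parameters.

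Assembling the two directions follows the template of the other hardness proofs in the survey. One arranges the construction so that the maximum degree of an optimal $k$-spanner of $G$ equals, up to known factors, the optimal REP-cover size. In the completeness case a cover of size $s$ gives a spanner of maximum degree $O(s)$; in the soundness case, a spanner of maximum degree $o\!\left(s \cdot 2^{\Omega((\log^{1-\varepsilon} N)/k)}\right)$ would, by reading off its activated label-vertices, produce a REP-cover beating the Min-REP bound. Choosing the gadget so that the resulting graph has maximum degree $\Delta$ with $\log \Delta = \Theta(\log^{1-\varepsilon} N)$ makes the multiplicative gap $2^{\Omega((\log^{1-\varepsilon} N)/k)}$ equal to $\Delta^{\Omega(1/k)}$, and since the whole reduction is computable in randomized polynomial time, any polynomial-time approximation for LD--$k$SP with ratio better than $\Delta^{\Omega(1/k)}$ would place $\mathrm{NP}$ in $\mathrm{BPTIME}(2^{\mathrm{polylog}(n)})$.

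The step I expect to be the main obstacle is the design and verification of the per-super-edge gadget so that (i)--(iii) hold simultaneously: one must certify that introducing the connector gadgets creates \emph{no} length-$\le k$ path between a designated terminal pair except through a related label pair, which requires invoking girth $>k$ with care since the gadgets themselves introduce many new short paths; and one must track degree consumption precisely enough that the yes/no maximum degrees differ by the full $\Delta^{\Omega(1/k)}$ factor rather than a polynomially weaker one. Keeping the gadget diameter linear in $k$ (so the exponent is $\Theta(1/k)$ and not $\Theta(1/k^2)$) while reusing connector vertices efficiently across super-edges to hold $\Delta$ at the target value is the technical heart of the argument; full details are in \cite{chlamtac2016}.
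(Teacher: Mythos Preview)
The survey does not actually prove this theorem: immediately before the statement it says only that Chlamt\'{a}\v{c} and Dinitz ``provide a hardness of approximation result based on the Min-REP problem. The basic idea is that small REP-covers correspond to small spanners,'' and then refers the reader to \cite{chlamtac2016} for details. So there is no proof in the paper to compare against beyond that one-line hint.

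Your sketch is consistent with that hint and with the actual approach in \cite{chlamtac2016}: start from the high-girth Min-REP hardness of \cite{dinitz2016label}, build a graph in which length-$\le k$ paths between designated terminal pairs are forced (by the girth condition) to pass through related label pairs, and read off a REP-cover from any low-degree $k$--spanner. That is the right template.

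One point to be careful about: your ``single designated hub joined to all activated label-vertices'' is not quite how the reduction is set up, and as written it creates tension with your own parameter choice. In the input graph the hub would have to be adjacent to \emph{every} label-vertex (you cannot know in advance which are activated), so its degree---and hence $\Delta$---would be the total number of (super-node, label) pairs, which is polynomial in $N$; that forces $\log \Delta = \Theta(\log N)$, not $\Theta(\log^{1-\varepsilon} N)$, and the gap $2^{\Omega((\log^{1-\varepsilon} N)/k)}$ then falls short of $\Delta^{\Omega(1/k)}$. The construction in \cite{chlamtac2016} instead distributes the degree accounting across per-super-node vertices (each adjacent only to its own alphabet), so the input max degree is governed by the alphabet size, which can be tuned to hit the target $\Delta$. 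You correctly flag ``holding $\Delta$ at the target value'' as the technical heart, but the single-hub picture should be replaced by this per-super-node structure for the arithmetic to close.
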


Berman et al.~\cite{BERMAN201393} formulate an instance of anti-spanner LP problems. An anti-spanner is a subset of edges that don't form a $t$--spanner and is maximal with respect to this property.

Dinitz and Zhang \cite{dinitz2016approximating} prove that the approximation ratio for the $t$--spanner problem is at most $O(n^{\frac{1}{3}})$. They also prove an integrality gap for a flow-based LP. 
}

\section{Distributed and Streaming Algorithms}\label{SEC:Distributed}
\begin{itemize}
\setlength{\itemindent}{.2in}
\item[\cite{elkin2005computing}] \bibentry{elkin2005computing}
{\color{black}\item[\cite{elkin2006efficient}] \bibentry{elkin2006efficient}}
\item[\cite{derbel2007deterministic}] \bibentry{derbel2007deterministic}
\item[\cite{derbel2008locality}] \bibentry{derbel2008locality}
\item[\cite{baswana2008streaming}] \bibentry{baswana2008streaming}
\item[\cite{pettie2008distributed}] \bibentry{pettie2008distributed}
{\color{black} \item[\cite{elkin2011streaming}] \bibentry{elkin2011streaming} }
\item[\cite{lenzen2013efficient}] \bibentry{lenzen2013efficient}
\item[\cite{kapralov2014spanners}] \bibentry{kapralov2014spanners}
\item[\cite{censor2016distributed}] \bibentry{censor2016distributed}
\item[\cite{elkin2016fast}] \bibentry{elkin2016fast}
\item[\cite{chechik2018near}] \bibentry{chechik2018near}
{\color{black} \item [\cite{elkin2018efficient}] \bibentry{elkin2018efficient}}
\item[\cite{censor2018sparsest}] \bibentry{censor2018sparsest}
\item[\cite{elkin2019near}] \bibentry{elkin2019near}
{\color{black}\item[\cite{elkin2019distributed}] \bibentry{elkin2019distributed}}
\end{itemize}

{\color{black}In distributed algorithms, computations happen at each vertex of the graph and processors communicate to collectively accomplish a graph-level task. In Peleg's terminology \cite{Peleg2000Distributed}, two models of distributed computing are LOCAL and CONGEST, where respectively, unbounded and short messages are communicated between processors.
Distributed algorithms for computing spanners are considered particularly important because a common use of spanners is to preprocess a distributed network, sparsifying the network without changing its communication latency too much.
Indeed, the historically first appearances of spanners were in applications to \emph{synchronizers} \cite{awerbuch1985complexity, peleg1989optimal, peleg1989graph}, which are protocols that convert between the asynchronous distributed model (where nodes send messages to their neighbors at unpredictable times) and the synchronous distributed model (where nodes send messages at each tick of a centralized clock).
It was only later that applications to centralized computation were developed, and that spanners were considered an interesting object of study in their own right.}

Elkin~\cite{elkin2005computing} gives a randomized distributed algorithm which produces a $(1+\eps,\beta)$--spanner with $\beta = \left(\frac{k}{\eps}\right)^{O(\log k)}\rho^{-\frac1\rho}$ of size $\widetilde{O}(\beta n^{1+\frac1k})$ in time $O(n^{1+\frac{1}{2k}})$.  Elkin and Matar~\cite{elkin2019near} study distributed algorithms computed in the CONGEST model for finding near-additive spanners, i.e. $(1+\eps,\beta)$--spanners.  The difference of their approach from previous works is that the algorithm is deterministic.  The spanners constructed have three parameters $\eps, k,\rho$ and require $\beta = \left(\frac{O(\log k\rho+\rho^{-1})}{\rho\eps}\right)^{\log k\rho+\rho^{-1}+O(1)}$, and yield $(1+\eps,\beta)$--spanners of size $O(\beta n^{1+\frac1k})$ with time complexity $O(\beta n^\rho \rho^{-1})$.

Pettie~\cite{pettie2008distributed} describes algorithms for computing sparse low distortion spanners in distributed networks and provides some non-trivial lower bounds on the tradeoff between time, sparseness, and distortion. The algorithms assume a synchronized distributed network, where relatively short messages may be communicated in each time step. The first result is a fast distributed algorithm for finding an $O(2^{\log^* n} \log n)$--spanner with size $O(n)$.\footnote{Recall that $\log^* n$ is the inverse tower function; essentially, it is the least height of a tower $2^{2^{2^{\dots}}}$ whose value is $>n$.}
The second result is a new class of efficiently constructible $(\alpha, \beta)$--spanners called Fibonacci spanners whose distortion improves with the distance being approximated. At their sparsest, Fibonacci spanners can have nearly linear size, namely {\color{black} $O(n(\log \log n)^\phi)$},
where $\phi = \frac{1 + \sqrt{5}}{2}$ is the golden ratio. 

Lenzen and Peleg \cite{lenzen2013efficient} show that additive $2$--spanners can be computed in the CONGEST model in $O(n^\frac12\log n+\textnormal{diam}(G))$ rounds.  In \cite{censor2018sparsest}, the authors give a sequential algorithm for computing an additive $6$--spanner for unweighted graphs which yields near-optimal sparsity $O(n^\frac43\log^\frac43n)$ edges, but allows for a distributed construction algorithm (using the CONGEST model of computation) which runs via an efficient construction of weighted BFS trees in $O(n^\frac23 \log^{-\frac13}n +\textnormal{diam}(G))$ rounds. 

In Derbel et al.~\cite{derbel2007deterministic} several deterministic distributed algorithms have been provided to compute different kinds of spanners. They give an algorithm to construct an $O(k)$--spanner of an unweighted graph with $O(kn^{1+\frac{1}{k}})$ edges in $O(\log^{k-1} n)$ time.  Also, they have shown that in $O(\frac{\log n}{\epsilon})$ time one can construct a spanner with $O(n^\frac{3}{2})$ edges that is both a $3$--spanner and a $(1 + \epsilon, 8 \log n)$--spanner. Furthermore, they have shown that in $n^{O\left(\frac{1}{\sqrt{\log n}}\right)} + O\left(\frac{1}{\epsilon}\right)$ time one can construct a spanner with $O(n^\frac{3}{2})$ edges which is both a $3$--spanner and a $(1+\eps, 4)$--spanner.  In \cite{derbel2008locality}, the authors propose a deterministic, distributed algorithm that computes a $(2k-1,0)$--spanner in $k$ rounds of computation (or $3k-2$ rounds if $n$ is not known) with $O(kn^{1+\frac{1}{k}})$ edges.
Also, it is further shown that no (randomized) algorithm producing $O(n^{1+\frac{1}{k}})$ size spanners (possibly with additive stretch) can run distributively in sub-polynomial rounds of computation.
{\color{black} Recently, Elkin, Filtser, and Neiman \cite{elkin2019distributed} gave distributed constructions of spanners with good controls on lightness.}

Censor-Hillel et al.~\cite{censor2016distributed} study the complexity of distributed constructions of purely additive spanners. The provided spanner algorithms have three general steps: first, each node tosses a coin to be a cluster center; second, each cluster center tosses another coin to be a BFS tree; third, add to the current graph edges that are part of certain short paths.

{\color{black}
Elkin \cite{elkin2011streaming} gave a streaming algorithm for building multiplicative spanners with near-optimal size/stretch tradeoff and $O(1)$ expected processing time per edge.
A close variant of this algorithm also gives a fully dynamic algorithm, with similar update time, for maintaining spanners with near-optimal size/stretch tradeoff.
(A similar result was proved simultaneously by Baswana~\cite{baswana2008streaming}.)
Streaming algorithms for mixed spanners were given in \cite{elkin2006efficient, elkin2018efficient}.}

Feigenbaum et al.~\cite{feigenbaum2005graph} consider the semi-streaming model where at most a logarithmic number of passes over the graph stream are allowed. It was shown that a single pass is enough to compute a $(1+\eps)\log n$--spanner for a weighted undirected graph under mild conditions for weight values.

{\color{black}A somewhat distinct, yet relevant, body of works pertains to dynamic stream models where both insertion and deletion are allowed~\cite{ahn2012graph}. Kapralov and Woodruff ~\cite{kapralov2014spanners} construct linear sketches of graphs that (approximately) preserve the spectral information of the graph in a few passes over the stream of graph data. They use a multi-layer clustering approach to construct a multiplicative $2^k$--spanner using $O(n^{1+\frac{1}{k}})$ bits of space from a dynamic stream of inputs.}


\section{Special Types of Spanners}\label{SEC:TypesOfSpanners}

In this section, we describe some of the many variants of spanners that have been considered in the literature. We emphasize primarily the definitions and basic results for brevity, and readers are encouraged to consult the listed papers for more information.

\subsection{\label{SEC:emulators}Emulators}
\begin{itemize}
\setlength{\itemindent}{.2in}
\item[\cite{dor2000all}] \bibentry{dor2000all}
\item[\cite{thorup2006spanners}] \bibentry{thorup2006spanners}
\item[\cite{Hsien2018nearoptimal}] \bibentry{Hsien2018nearoptimal}
\end{itemize}

The word \emph{spanner} generally implies a \emph{subgraph} that approximates the distance metric of an input graph.
When the approximating graph can instead be arbitrary (not necessarily a subgraph), it is called an \emph{emulator}.
Spanners are a special case of emulators.
Emulators are typically allowed to be weighted and sometimes directed, even if the input graph is not.

Emulators hold an interesting place in the literature because optimal bounds are essentially known for the size/error tradeoff of emulators, whereas the optimal bounds for spanners remain open.
The first example of this is for emulators with purely additive error.
In \cite{dor2000all}, the authors prove that every $n$--node undirected unweighted input graph has an emulator on $O(n^\frac{4}{3})$ edges with $4$--additive error.
This is optimal in the sense that neither the edge bound nor the additive error in this result can be unilaterally improved (this follows from the Girth Conjecture which is confirmed in the relevant setting $k=3$; see Section \ref{SEC:girth}).
However, it is only known that every graph has a \emph{spanner} on $O(n^\frac{4}{3})$ edges and $6$--additive error \cite{baswana2010additive}, and it is still open whether one can unilaterally improve the additive error to $4$.

Essentially tight bounds for emulators with sublinear error are known as well.
An important result in \cite{thorup2006spanners} is a simple construction of emulators with $O(n^{1 + \frac{1}{2^k - 1}})$ edges and error function $f(d) = d + O(d^{1 - \frac{1}{k}})$, for any fixed positive integer $k$.
The main result of \cite{abboud2018hierarchy} is that this is essentially optimal; 
{\color{black}see Section \ref{SEC:LowerBounds} for more details.}

Emulators have also been central to a line of research seeking fast approximation algorithms for all-pairs shortest paths (APSP).
In \cite{dor2000all}, the authors give approximate APSP algorithms essentially by constructing a series of emulators and then running standard shortest path algorithms on these emulators, which are fast due to their sparsity.
A related approach was taken in \cite{Hsien2018nearoptimal}: the authors prove that, for any undirected unweighted planar input graph $G = (V, E)$ and set of terminals $T \subseteq V$, there is an emulator on $\widetilde{O}(\min\{|T|^2, \sqrt{n|T|}\})$ that \emph{exactly} preserves the distances between node pairs in $T$ (called a \emph{distance emulator} in analogy with distance preservers).
With a similar technique to the above, the authors then show that one can compute distances between any $|T| = \widetilde{O}(n^\frac{1}{3})$ terminals in an $n$--node planar graph in $\widetilde{O}(n)$ time.

\subsection{Approximate Distance Oracles}
\label{sec:oracles}
\begin{itemize}
\setlength{\itemindent}{.2in}
\item[\cite{thorup2001oracle}] \bibentry{thorup2001oracle}
\item[\cite{thorup2005det}] \bibentry{thorup2005det}
\item[\cite{baswana2006approximate}] \bibentry{baswana2006approximate}
\item[\cite{baswana2008distance}] \bibentry{baswana2008distance}
\item[\cite{sommer2014shortest}] \bibentry{sommer2014shortest}
\end{itemize}

Related to emulators are \textit{approximate distance oracles}.  Given a graph $G$, a {\it distance oracle} is a data structure that can answer exact distance queries between any pair of vertices $u,v \in V$. By preprocessing the graph using algorithms such as All-Pairs Shortest Path (APSP), these queries can be answered in constant time. However, algorithms like APSP are inefficient for this purpose as they have a runtime of $O(n^3)$ and space requirement of $O(n^2)$. If approximate distances are acceptable to the user, then the runtime and space bounds can be much improved. An \textit{approximate distance oracle} is a data structure which approximately answers distance queries, and similar to a spanner, the answer of a query is often allowed to be at most $t$ times the actual distance between the two vertices. In this case, spanners are special cases of approximate distance oracles, but are more restrictive given that they need to be subgraphs.  Related objects are \textit{emulators} which are graphs $G^*=(V,E^*)$ on the same vertices as the original graph, but with different edge sets, but so that distances in $G^*$ approximate distance in $G$~\cite{thorup2006spanners}.  

Thorup and Zwick~\cite{thorup2001oracle} propose a randomized algorithm 
to compute an approximate distance oracle with a stretch $(2k-1)$ such that for all $u,v \in V,$ the oracle returns a distance $d_k(u,v)$ with $d_k(u,v) \le (2k-1)d_G(u,v)$.  Note that the algorithm will not return a subgraph of $G$, and hence is not a spanner algorithm, but is rather returning an approximation to the length of the shortest paths in $G$. This algorithm has an improved $O(kmn^\frac{1}{k})$ runtime and $O(kn^{1+\frac{1}{k}})$ space requirement. 

As a follow-up to~\cite{thorup2001oracle}, Roditty et al.~\cite{thorup2005det} propose two extensions: restricting the approximation to a set of sources $S \subseteq V$, and derandomizing the algorithm. 
The former results in an algorithm with $O(km|S|^\frac{1}{k})$ runtime and $O(kn|S|^\frac{1}{k})$ space requirement, which becomes very useful when $|S|\ll n$. 
The deterministic variant of the algorithm leads to an increase in runtime by a logarithmic factor but retains the same space requirements as the randomized algorithm.

Finally, Roditty et al.~\cite{thorup2005det} extend the derandomization technique to the linear time algorithm by Baswana and Sen~\cite{Surender03ALP} to compute a $(2k-1)$--spanner for a given graph $G$. The randomization in this algorithm involves the initial step of choosing cluster centers randomly with probability $|V|^{-\frac{1}{2}}$. The authors propose doing this deterministically by building a bipartite graph of $B=(V_1,V_2,E)$ and applying the linear time algorithm for computing a closed dominating set on this graph to find the cluster centers.

Baswana and Sen~\cite{baswana2006approximate} provide an algorithm to construct approximate distance oracles in expected $O(n^2)$ time if the graph is unweighted. One of the new ideas used in the algorithm also leads to the first expected linear time algorithm for computing an optimal size $(2,1)$--spanner of an unweighted graph, whereas existing algorithms had $\Theta(n^2)$ running time. In \cite{baswana2008distance}, they break this quadratic barrier at the expense of introducing a (small) constant additive error for unweighted graphs. In achieving this goal, they have been able to preserve the optimal size--stretch tradeoffs of the oracles. One of their algorithms can be extended to weighted graphs, where the additive error becomes $2w_{max}(u, v)$, where $w_{max}(u, v)$ is the heaviest edge in the shortest path between vertices $u$ and $v$.

In~\cite{dor2000all}, a $\widetilde{O}(\min\{n^\frac{3}{2}m^\frac{1}{2}, n^\frac{7}{3}\})$--time algorithm for computing all distances in a graph with an additive error of at most 2 has been described. For every even $k > 2$, they describe an $\widetilde{O}(\min\{$ $n^{2-\frac{2}{(k+2)}}m^{\frac{2}{(k+2)}}, n^{2+\frac{2}{(3k-2)}}\})$--time algorithm for computing all distances with an additive one-sided error of at most $k$. Then, they show that every unweighted graph on $n$ vertices has an additive $2$--spanner with $\widetilde{O}(n^\frac{3}{2})$ edges and a $4$--spanner with $\widetilde{O}(n^\frac{4}{3})$ edges. Finally, they show that any weighted undirected graph on $n$ vertices has a multiplicative $3$--spanner with $\widetilde{O}(n^\frac{3}{2})$ edges and that such a $3$--spanner can be built in $\widetilde{O}(mn^\frac{1}{2})$ time. 

Sommer~\cite{sommer2014shortest} gives a survey on the shortest path query problem, with both theoretical results and practical applications, and should be consulted for further reading. This problem has numerous applications including Geographic Information Systems (GIS), proximity search in a database, packet routing, and metabolic networks. The shortest path query problem is different than the classical single source shortest path problem and APSP problem in that there are two steps: the first is a preprocessing algorithm, and the second is to process queries efficiently. The preprocessing algorithm provides a data structure which can be used to process multiple queries without changing the structure. Unlike the APSP problem, here the data structure is computed in a space efficient manner.

\subsection{Diameter and Eccentricity Spanners}
\begin{itemize}
\setlength{\itemindent}{.2in}
\item[\cite{backurs2018diameter}] \bibentry{backurs2018diameter}


\item[\cite{choudhary2020extremal}]{\color{black}\bibentry{choudhary2020extremal}}
\end{itemize}

Choudhary and Gold~\cite{choudhary2020extremal} consider the following problem: given a directed graph $G = (V,E)$, compute a subgraph $H$ with the property that the graph diameter is preserved up to a multiplicative factor $t$, which the authors term a $t$--\emph{diameter spanner}. The authors show that, given an unweighted directed graph $G$, there is a polynomial time algorithm that computes a 1.5--diameter spanner containing {\color{black} $O(n^\frac{3}{2} \sqrt{\log n})$ }edges, in time $\widetilde{O}(m\sqrt{n})$ deterministically. They define the $t$-\emph{eccentricity spanner} similarly, where the eccentricity\footnote{The \emph{eccentricity} of a vertex $v \in V$ is the maximum graph distance between $v$ and any other vertex.} of each vertex $v$ in $H$ should be no more than $t$ times the eccentricity of $v$ in $G$. The authors show that given a weighted directed graph, there is a Las Vegas algorithm which computes a 2--eccentricity spanner containing $O(n \log^2 n)$ edges in $\widetilde{O}(m)$ expected time.

\subsubsection*{Hardness}
{\color{black}Backurs et al.~\cite{backurs2018diameter} show that unless the Strong Exponential Time Hypothesis\footnote{Let $s_k := \inf\{\delta: \text{$k$--SAT has an } O(2^{n\delta})\text{--time algorithm}\}$. The Strong Exponential Time Hypothesis (SETH) asserts that $\lim_{k \to \infty} s_k = 1$.} (SETH) fails, there is no $O(n^{\frac{3}{2} - \delta})$ time algorithm ($\delta > 0$) which can approximate the diameter of a sparse weighted graph with ratio better than $\frac{5}{3}$. Unless SETH fails, no $O(n^{\frac32 - \delta})$--time algorithm can approximate the eccentricities of an undirected, unweighted sparse graph with ratio better than $\frac{9}{5}$.}
\subsection{Ramsey Spanning Trees}

\begin{itemize}
\setlength{\itemindent}{.2in}
\item[\cite{abraham2018ramsey}] \bibentry{abraham2018ramsey}
\end{itemize}

Abraham et al.~\cite{abraham2018ramsey} consider a generalization of the metric Ramsey problem to graphs.  The problem is to find a subset $S\subseteq V$ of sources and a spanning tree $T$ which gives a $t$--spanner for small $t$; i.e. one wants to simultaneously find the sources and sourcewise spanner for a given graph.  Their main result is to find (in polynomial time) a subset $S$ of size at least $n^{1-\frac1k}$ and spanning tree $T$ which is an $(O(k\log\log n),0,S\times V)$--spanner of $G$.

\subsection{Slack Spanners}
\begin{itemize}
\setlength{\itemindent}{.2in}
    \item[\cite{chan2006spanners}] \bibentry{chan2006spanners}
\end{itemize}
In~\cite{chan2006spanners}, the authors consider the problem of computing a subgraph $H$ that preserves shortest distances to all but an $\epsilon$ fraction of vertices. Specifically, $H = (V,E_H)$ is an $\epsilon$--\emph{slack spanner} with distortion $D$ if for all $v \in V$, the distances in $H$ are preserved for the furthest $(1-\epsilon)n$ vertices from $v$ up to distortion $D$. The authors show that given a graph $G$, one can find a subgraph $H$ which is a $O(\log \frac{1}{\epsilon})$--distortion $\epsilon$--slack spanner containing $O(n)$ edges, for every $\epsilon$. {\color{black}Furthermore, they showed that gracefully degrading existing spanners also have $O(1)$ average stretch with size $O(n)$.}

\subsection{Spanners in Sparse Graphs}
\begin{itemize}
\setlength{\itemindent}{.2in}
\item[\cite{dragan2011spanners}] \bibentry{dragan2011spanners}
\end{itemize}

Dragan et al.~\cite{dragan2011spanners} show that the problem of computing a $t$--spanner of a planar graph $G$ of treewidth\footnote{The \emph{treewidth} of a graph $G$ is the minimum width over all tree decompositions of $G$, where the width of a tree decomposition equals 1 less than the maximum size of a vertex set in the decomposition.} at most $k$ is fixed parameter tractable parametrized by $k$ and $t$. They then show that the problem is also fixed parameter tractable on graphs that do not contain a fixed apex graph\footnote{An \emph{apex graph} is a graph that is planar if one vertex is removed.} as a minor. 




\subsection{Steiner $t$-Spanners}
\begin{itemize}
\setlength{\itemindent}{.2in}
\item[\cite{althofer1993sparse}] \bibentry{althofer1993sparse}
\item[\cite{handke2000tree}] \bibentry{handke2000tree}

\end{itemize}

Given an unweighted graph $G = (R \cup S, E)$ where $R$ is the set of terminals and the induced graph of $R$ is connected, a subgraph $H$ of $G$ is a \emph{Steiner $t$--spanner} of $G$ if $d_H(u,v) \leq t d_G(u,v),$ for all $ u,v \in R$. Alth\"{o}fer et al.~\cite{althofer1993sparse} deal with absolute lower bounds on the number of edges that an Steiner $t$--spanner can have.

\subsubsection*{Hardness}
{\color{black}Handke and Kortsarz~\cite{handke2000tree} show that the minimum Steiner $t$--spanner cannot be approximated with ratio $(1-\varepsilon)\log |S|$ for any $\varepsilon > 0$, unless $\text{NP} \subseteq \text{DTIME}(n^{\log \log n})$ where $S = V \setminus R$ is the set of Steiner vertices.}

\subsection{Tree $t$--spanners} \label{SEC:tree-t-spanner}

\begin{itemize}
\setlength{\itemindent}{.2in}
\item[\cite{cai1995tree}] \bibentry{cai1995tree}
\item[\cite{emek2008approximating}] \bibentry{emek2008approximating}
    \item [\cite{alvarez2018mixed}] \bibentry{alvarez2018mixed}
\end{itemize}
A \emph{tree $t$-spanner} of a graph $G$ is a spanning tree of $G$ which is also a $t$-spanner.  Cai and Corneil~\cite{cai1995tree} show that a tree 1-spanner, if it exists, can be found in $O(m \log \beta(m,n))$ time, where $\beta(m,n) = \min\{i : \log^i n \le m/n\}$. Further, a tree 2-spanner (if it exists) can be found in linear $O(m+n)$ time. For $t \ge 4$, determining whether a tree $t$-spanner exists is NP-hard via a reduction from 3-SAT.

Emek and Peleg~\cite{emek2008approximating} consider the optimization version, referred to as \emph{minimum max-stretch spanning tree}: given a graph $G$, find a spanning tree minimizing the stretch factor $t$. Emek and Peleg give a divide-and-conquer $O(\log n)$ approximation algorithm for MMST.  \'{A}lvarez-Miranda and Sinnl~\cite{alvarez2018mixed} give a compact, polynomial-size mixed integer program (MIP) for this problem. They obtain a formulation with fewer variables by using a cut-based formulation to impose a spanning tree topology, then using Benders decomposition to get rid of path variables.  For other approximation algorithms, see~\cite{singh2018artifical,sundar2019steady}.

\subsubsection*{Hardness}
{\color{black}Emek and Peleg~\cite{emek2008approximating} show that it is NP--hard to approximate the MMST problem with ratio better than $\frac{5}{4}$, and that MMST cannot be approximated additively by any $o(n)$ term unless P = NP.}

\subsection{Lowest-degree $k$-Spanners} \label{SEC:LDkSP}
\begin{itemize}
\setlength{\itemindent}{.2in}
    \item[\cite{Kortsarz1994GeneratingL2}] \bibentry{Kortsarz1994GeneratingL2}
    \item[\cite{chlamtac2012}] \bibentry{chlamtac2012}
    \item[\cite{chlamtac2016}] \bibentry{chlamtac2016}
\end{itemize}
Given integer $k \ge 2$, the \emph{lowest-degree $k$-spanner} (LD-$k$SP) problem is to compute a $k$-spanner of a given undirected graph $G$, with the objective of minimizing the maximum degree of any vertex in the spanner. Let $\Delta$ denote the maximum vertex degree of the input graph $G$.

For $k=2$, Kortsarz and Peleg~\cite{Kortsarz1994GeneratingL2} give an LP-based $\tilde{O}(\Delta^{1/4})$-approximation for LD-2SP. This was improved more recently by Chlamt\'{a}\v{c} et al.~\cite{chlamtac2012}, who give an $\tilde{O}(\Delta^{3-2\sqrt{2}-\varepsilon}) \approx \tilde{O}(\Delta^{0.172})$-approximation for LD-2SP. For $k \ge 3$, Chlamt\'{a}\v{c} et al.~\cite{chlamtac2016} give an LP-based $\tilde{O}(\Delta^{(1-(1/k))^2})$-approximation for LD-$k$SP (see Section~\ref{SEC:lp-more}).

\subsubsection*{Hardness}
{\color{black}For $k=2$, Kortsarz and Peleg show that LD--2SP cannot be approximated with ratio better than $(\ln n)/5$ unless $\text{NP} \subseteq \text{DTIME}(n^{\log \log n})$ via a reduction from the set cover\footnote{The \emph{set cover} problem is as follows: given a universe $U=\{1,2,\ldots,n\}$, a collection of sets $S_1, \ldots, S_m$ such that $\bigcup_{i=1}^m S_i = U$, and an integer $t \ge 0$, determine if there is a collection of $t$ sets whose union is $U$.} problem.

For $k \ge 3$, Chlamt\'{a}\v{c} and Dinitz~\cite{chlamtac2016} show that LD--$k$SP admits no polynomial time approximation with ratio $\Delta^{\Omega\left(\frac{1}{k}\right)}$ unless $\text{NP} \subseteq \text{BPTIME}(2^{\text{polylog}(n)})$. Further, if the input graph is directed (in which case the ``degree'' of a spanner vertex is the sum of its in-degree and out-degree), then directed LD--$k$SP cannot be approximated with ratio $\Delta^{O(1)}$ unless $\text{NP} \subseteq \text{DTIME}(2^{\text{polylog}(n)})$.}

\subsection{{\color{black} Spanners for Doubling Metrics}}
\begin{itemize}
\setlength{\itemindent}{.2in}
\item[\cite{arya1995euclidean}] \bibentry{arya1995euclidean}
\item[\cite{elkin2015optimal}] \bibentry{elkin2015optimal}
\item[\cite{chan2015new}] \bibentry{chan2015new}
\item[\cite{borradaile2019greedy}] \bibentry{borradaile2019greedy}
\item[\cite{Gottlieb2015}] \bibentry{Gottlieb2015}
\end{itemize}

{\color{black}
For a metric space $M = (V, d)$, its \emph{doubling dimension} $\mathcal{D}(M)$ is the quantity 
\begin{align*}
\mathcal{D}(M) := \inf \{d \in \mathbb{R} \ : \ &\text{for all } v \in V, r>0, \text{ the ball } B(v, r) \text{ can be covered by }\\
&2^d \text{ balls of radius } r/2.\}
\end{align*}

There is a body of work that builds spanners for graphs assuming bounds on the doubling dimension of their associated metric space.
This is motivated in part as a natural stepping stone between general graph spanners and \emph{geometric spanners}, where the input graphs are essentially finite submetric spaces of Euclidean space (the book \cite{Narasimhan:2007:GSN:1208237} covers these extensively, but they are out of scope for this survey).
In particular, a basic fact is that $\mathbb{R}^d$ and any submetric space thereof has doubling dimension $\Theta(d)$, and some results for geometric spanners in $\mathbb{R}^d$ generalize readily to any graph with similar doubling dimension.
For example, there was a famous conjecture by Arya et al.~\cite{arya1995euclidean} that, given an $n$-point submetric space of $\mathbb{R}^{O(1)}$, in $O(n \log n)$ time one can construct a geometric spanner with constant maximum degree, $O(\log n)$ hop-diameter (i.e., maximum number of edges in a shortest path), and $O(\log n)$ lightness (i.e., total weight $O(\log n) \cdot W(\MST(G))$).
This conjecture was proved by Elkin and Solomon~\cite{elkin2015optimal} (and later simplified by Chan et al.~\cite{chan2015new}), more generally in any metric space of constant doubling dimension.

An important fact about graphs of low doubling dimension is that the greedy construction algorithm (presented in Section \ref{SEC:Greedy}) has an improved lightness analysis: following \cite{Gottlieb2015}, Borradaile et al.~\cite{borradaile2019greedy} proved that a greedy $(1+\eps)$--spanner of a graph of doubling dimension $d$ has total weight $\eps^{-O(d)} W(\MST(G))$.
(The corresponding bound for geometric graphs, which is tight, is in \cite{Narasimhan:2007:GSN:1208237}.)
}


\section{Spanners For Changing Graphs}\label{SEC:ChangingGraphs}

In recent years, there have been variants of spanners studied in the literature which typically involve changes allowed in the initial graph $G$, for example edge addition or deletion.

\subsection{Fault Tolerant Spanners}

\begin{itemize}
\setlength{\itemindent}{.2in}
\item[\cite{chechik2010fault}] \bibentry{chechik2010fault}
\item[\cite{ausiello2010computing}] \bibentry{ausiello2010computing}
\item[\cite{dinitz2011fault}] \bibentry{dinitz2011fault}
\item[\cite{braunschvig2015fault}] \bibentry{braunschvig2015fault}
\item[\cite{bilo2015improved}] \bibentry{bilo2015improved}
\item[\cite{parter2017fault}] \bibentry{parter2017fault}
\item[\cite{bodwin2018optimal}] \bibentry{bodwin2018optimal}
\item[\cite{bodwin2018trivial}] \bibentry{bodwin2018trivial}
\end{itemize}

Chechik et al.~\cite{chechik2010fault} study \textit{fault tolerant spanners} which were introduced originally for geometric graphs by Levcopoulos et al.~\cite{Levcopoulos1998}. Given a stretch parameter $t\geq1$ and a fault parameter $f\in\N$, an $f$--vertex (resp. $f$--edge) fault tolerant multiplicative $t$--spanner is a subgraph $G'=(V,E')$ such that for any set $F\subseteq V$ (resp. $F\subseteq E$) with $|F|\leq f$, we have
\[ d_{G'\setminus F}(u,v)\leq t\cdot d_{{\color{black}G\setminus F}}(u,v),\quad u,v\in V\setminus F\quad  (\textnormal{ resp. } u,v\in V).\]
The main result of~\cite{chechik2010fault} is to provide a randomized algorithm which computes an $f$--vertex fault tolerant $(2k-1)$--spanner with high probability containing $O(f^2k^{f+1}n^{1+\frac1k}\log^{1-\frac1k} n)$ edges in running time $O(f^2k^{f+2}n^{3+\frac1k}\log^{1-\frac1k}n)$.  For the edge fault tolerant case, their algorithm produces a $(2k-1)$--spanner with high probability containing $O(fn^{1+\frac1k})$ edges.

Ausiello et al.~\cite{ausiello2010computing} give improved constructions of fault tolerant spanners both in terms of time and number of edges compared to \cite{chechik2010fault}.  They construct an $f$--vertex fault tolerant $(3,2)$--spanner with $O(f^\frac43 n^\frac43)$ edges in $\widetilde{O}(f^2m)$ time, and an $f$--vertex fault tolerant $(2,1)$--spanner with $O(fn^\frac32)$ edges in $\widetilde{O}(fm)$ time.

Dinitz and Krauthgamer~\cite{dinitz2011fault} improve on the size of a fault tolerant spanner compared to Chechik et al.~\cite{chechik2010fault} by replacing the factor $k^{f+2}$ with $f^2$, thus giving a spanner polynomial in the fault size rather than exponential.  The main results of~\cite{dinitz2011fault} give a way to transform a spanner into a fault tolerant spanner. Specifically, if a $t$--spanner has $g(n)$ edges, then their algorithm produces an $f$--vertex fault tolerant $t$--spanner with $O(f^2\log n)g(\frac{2n}{f})$ edges.  Applying this to the greedy $(2k-1)$--spanner construction of~\cite{althofer1993sparse} yields an $f$--vertex fault tolerant $(2k-1)$--spanner with $\widetilde{O}(f^2n^{1+\frac{2}{k+1}})$ edges.  Additionally, in the unweighted case, they provide a $O(\log n)$--approximation algorithm for finding the minimal cost $f$ vertex fault $2$--spanner problem.

Bodwin et al.~\cite{bodwin2018optimal} give improved bounds on the size of fault tolerant spanners created by the direct analogue of the greedy algorithm of Alth\"{o}fer et al.~for fault tolerant spanners.
They show that for any positive integer $f$, there is an $f$--vertex (resp. $f$--edge) fault tolerant $(2k-1)$--spanner with $O(f^{1-\frac1k}n^{1+\frac1k} \exp(k))$ edges.
They also demonstrate that this bound is tight, up to the $\exp(k)$ factor, in the setting of vertex faults assuming Erd\H{o}s' Girth Conjecture.  Subsequently, Bodwin and Patel~\cite{bodwin2018trivial} greatly simplify the analysis of the so-called Fault Tolerant Greedy Algorithm, removing the $\exp(k)$ factor in the upper bound of~\cite{bodwin2018optimal}, and proving existential optimality (up to constant factors) even if the Girth Conjecture fails.

Braunschvig et al.~\cite{braunschvig2015fault} study additive and linear fault tolerant spanners.  They use a novel technique to combine a $(1,\beta)$--spanner construction with an $(\alpha,0)$--spanner construction to obtain a vertex and edge fault tolerant $(\alpha,\beta$)--spanner.  In particular, they prove that given an algorithm which produces an $(\alpha,\beta)$--spanner of size $O(n^{1+\delta})$, then for any $\eps>0$, one can obtain an $f_e$--edge, $f_v$--vertex fault tolerant $(\alpha+\eps,\beta)$--spanner of size $O((f_e+f_v)(\frac{\beta}{\eps})^{f_e+f_v}n^{1+\delta}\log n)$ provided $\max\{f_e,f_v\}<\lceil \frac{\beta}{\eps}\rceil +1$.  As a corollary, they use existing spanner constructions of Elkin and Peleg~\cite{elkin2004} to give a randomized construction of an $f$--edge, $f$--vertex fault tolerant $(1+\eps,\beta)$--spanner of size $O(f\beta(\frac{\beta}{\eps})^{2f}n^{1+\frac1k}\log n))$ with high probability.  Additionally, applying their result to the construction of Baswana et al.~\cite{baswana2010additive} gives an $f$--edge, $f$--vertex fault tolerant $(k+\eps,k-1)$--spanner with $O(f(\frac{k-1}{\eps})^{2f}n^{1+\frac1k}\log n)$ edges with high probability.

Bil\`{o} et al.~\cite{bilo2015improved} improve the constructions of other additive fault tolerant spanners in certain cases.  For $f=1$, they produce edge fault tolerant additive $2$--spanners of size $O(n^\frac53)$, $4$--spanners of size $O(n^\frac32)$, $10$--spanners of size $\widetilde{O}(n^\frac75)$ (with high probability), and $14$--spanners of size $O(n^\frac43)$.

Parter~\cite{parter2017fault} studies clustering + path buying algorithms for producing vertex fault tolerant additive spanners and sourcewise spanners for a single fault vertex.  The main results are to produce an additive $2$--spanner of size $O(n^\frac53)$, a $4$--additive sourcewise spanner of size $O(n|S|+(\frac{n}{|S|})^3)$, an additive $6$--spanner of size $O(n^\frac32)$, and an $8$--additive sourcewise spanner of size $O(n^\frac43)$ provided $|S|=O(n^\frac13)$.

Just as (non-faulty) BFS trees are frequently useful towards building non-faulty spanners, a corresponding notion of \emph{fault tolerant BFS structures (FTBFS)} is useful in many of the above constructions of fault tolerant spanners.
An $f$--edge or --vertex FTBFS is defined as an $S \times V$ distance preserver resilient to $f$ edge or vertex faults, in the same sense as the above.
These were introduced by Parter and Peleg \cite{PP13}, who showed that $O(|S|^{\frac12} n^{\frac32})$ edges are needed when $f=1$, for edge or vertex faults, and this is tight in either setting.
Subsequently, Parter \cite{Parter15} and Gupta and Khan \cite{GK17} proved tight bounds of $O(|S|^{\frac13} n^{\frac53})$ for either setting with $f=2$.
For general $f$, however, there remains a gap: Parter \cite{Parter15} proved a general lower bound of
$$\Omega(|S|^\frac{1}{f+1} n^{1 - \frac{1}{f+1}})$$
for $f$ edge or vertex faults, but the current best upper bound for general $f$ is
$$O(|S|^{\frac{1}{2^f}} n^{1 - \frac{1}{2^f}})$$
by \cite{BGPV17}.
It is a significant open question in the area to close this gap.
There is also a related area of fault tolerant \emph{reachability} trees, which must preserve reachability between all pairs in $S \times V$, not distance.

\subsection{Resilient Spanners}
\label{sec:ResilientSpanners}
\begin{itemize}
\setlength{\itemindent}{.2in}
\item[\cite{AusielloFIR16}] \bibentry{AusielloFIR16}
\end{itemize}

Ausiello et al.~\cite{AusielloFIR16} introduce the notion of \textit{resilience} in graph spanners.  Informally, a spanner is said to be \textit{resilient} if the stretch factor is not increased much by deleting an edge from a spanner.  Formally, if $G$ is any graph, then the \textit{fragility} of an edge $e$ is defined by
\[ \frag_G(e):=\max_{u,v\in V}\frac{d_{G\setminus e}(u,v)}{d_G(x,y)}.\]
Given a graph $G$, $\sigma\geq1$, $t\geq1$, and a $t$--spanner $G'$, an edge $e$ is $\sigma$--\textit{fragile} in $G'$ if 
$ \frag_{G'}(e)>\max\{\sigma,\frag_G(e)\},$
and the $t$--spanner $G'$ is $\sigma$--\textit{resilient} if 
every edge is \textit{not} $\sigma$--fragile.  That is, $G'$ is $\sigma$--resilient provided
\[\frag_{G'}(e) \leq \max\{\sigma, \frag_{G}(e)\},\quad e \in G'.\]

Resilience is a strong property, as the authors show that there is an infinite family of dense graphs which do not admit any $2$--resilient spanners other than the graphs themselves.  Moreover, resilience is a stronger notion than fault tolerance, and the authors show that there are $1$--edge fault tolerant $t$--spanners containing edges with fragility at least $\frac{t^2}{2}$ for any $t\geq3$.  Additionally, a polynomial time algorithm is given for producing a $\sigma$--resilient $(2k-1)$--spanner (for $\sigma\geq 2k-1$ and $k\geq2$) with $O(Wn^\frac32)$ edges, where $W=\frac{w_{\max}}{w_{\min}}$ with $w_{\max}$ being the largest weighted edge of $G$ and $w_{\min}$ being the smallest weighted edge. The runtime for this algorithm is $O(mn+n^2\log n)$. The authors also demonstrate that $(\alpha,\beta)$--spanners can be turned into $\sigma$--resilient spanners for any $\sigma\geq\alpha+\beta$.


\subsection{Dynamic Algorithms}
\label{sec:DynamicAlgorithms}

\begin{itemize}
\setlength{\itemindent}{.2in}
\item[\cite{AusielloFI06}] \bibentry{AusielloFI06}
\item[\cite{Baswana06ESA}] \bibentry{Baswana06ESA}
\item[\cite{baswana2008fully}] \bibentry{baswana2008fully}
\item[\cite{bodwin2016fully}] \bibentry{bodwin2016fully}
\item[\cite{BFH19}] \bibentry{BFH19}
\end{itemize}

Given a spanner for a graph, dynamic algorithms attempt to maintain the properties of the spanner while edges are being added to or deleted from the initial graph.  Thus, edges may need to be added or deleted in the spanner to maintain the distortion property.
Dynamic spanners have so far been studied with multiplicative error; it is unclear at present whether this is coincidental or if there is a hardness barrier to obtaining other types of error.

The initial work on the problem was in \cite{AusielloFI06}, where the authors present algorithms for maintaining a $3$-- or $5$--spanner of an input graph with essentially optimal size and update time proportional to the maximum degree.
This update time is \emph{amortized}, meaning that it holds on average over a sequence of insertions and deletions, but individual updates could take much longer.
In~\cite{baswana2008fully}, the authors present two algorithms for maintaining a sparse (multiplicative) $t$--spanner of an unweighted graph, again with optimal size (assuming the Girth Conjecture). The first algorithm achieves $O(7^\frac{t}{4})$ amortized update time (independent of the size of the graph $n$), and the second achieves $O(\textnormal{polylog} n)$ amortized update time (independent of the stretch factor $t$).

Bodwin and Krinninger~\cite{bodwin2016fully} address the problem of improving from \emph{amortized} to \emph{worst-case} update time, meaning that every individual update runs within the stated time with high probability.
They provide randomized algorithms to maintain a 3--spanner with $\widetilde{O}(n^{1+\frac{1}{2}})$ edges with worst-case update time $\widetilde{O}(n^\frac{3}{4})$, or a 5--spanner with $\widetilde{O}(n^{1+\frac{1}{3}})$ edges with worst-case update time $\widetilde{O}(n^\frac{5}{9})$.
Subsequently, \cite{BFH19} improved on these results by essentially converting the amortized construction of \cite{baswana2008fully} to worst-case update time, with only minor changes to the construction parameters.

A notable open question in the area is to progress from \emph{oblivious} to \emph{non-oblivious} update time.
That is: the update times in \cite{bodwin2016fully} and \cite{BFH19} hold with high probability against the randomness used in the algorithms, but only if the adversary choosing the graph updates is not allowed to see the random bits chosen by the algorithm.
A \emph{non-oblivious} construction would hold with high probability even if the adversary can base their updates on the random choices made by the construction.
This is an important property because, if dynamic spanners are used as a subroutine in other graph algorithms, the next graph update may depend on the current state of the spanner, which thus requires non-obliviousness to keep the guarantees.
The next step beyond non-obliviousness, of course, would be to obtain fully deterministic algorithms that maintain these spanners.

\section{Spanners for Special Classes of Graphs}\label{SEC:RestrictedGraphs}

While the sparse/light spanner problem is typically stated for generic graphs, it has also been studied when the class of input graphs is restricted. In many cases, much stronger guarantees can be made for spanners.  Here we highlight some of the literature in this vein, but we highlight the results rather than the techniques except where appropriate.

\subsection{Geometric Spanners}
\begin{itemize}
\setlength{\itemindent}{.2in}
\item[\cite{Narasimhan:2007:GSN:1208237}] \bibentry{Narasimhan:2007:GSN:1208237}
\end{itemize}

The book~\cite{Narasimhan:2007:GSN:1208237} provides an extensive treatment of geometric spanners, so we do not cover them in this survey.

\subsection{Directed Graphs}
\begin{itemize}
\setlength{\itemindent}{.2in}
\item[\cite{roditty2002roundtrip}] \bibentry{roditty2002roundtrip}
\item[\cite{dinitz2011directed}] \bibentry{dinitz2011directed}
\item[\cite{berman2011improved}] \bibentry{berman2011improved}
\item[\cite{dinitz2016approximating}] \bibentry{dinitz2016approximating}
\item[\cite{zhu2017source}] \bibentry{zhu2017source}
\item[\cite{zhu2018deterministic}] \bibentry{zhu2018deterministic}
\end{itemize}

Dinitz and Krauthgamer study multiplicative spanners for directed graphs, which is more subtle than the undirected case as one must maintain connectivity of the graph when computing a spanner. {\color{black}Dinitz et al.~\cite{dinitz2011directed}} propose a flow based linear program formulation of the directed $t$--spanner problem, and give an approximation algorithm to find sparsest $t$--spanner for a given directed graph. {\color{black}Berman et al.~\cite{berman2011improved} give a non-flow based formulation (an anti-spanner formulation) and improved the approximation of~\cite{dinitz2011directed}. Dinitz et al.~\cite{dinitz2016approximating} provide a $\tilde{O}(n^{1/3})$--approximation for directed $4$--spanner using a new rounding algorithm for the standard flow-based linear program.} Roditty et al.~\cite{roditty2002roundtrip} study roundtrip spanners, and Zhu and Lam \cite{zhu2017source,zhu2018deterministic} introduce the notion of sourcewise roundtrip spanners for directed graphs.

\subsection{Further Reading}

\begin{itemize}
\setlength{\itemindent}{.2in}
\item[\cite{Cohen00}] \bibentry{Cohen00}
\item[\cite{borradaile2017minor}] \bibentry{borradaile2017minor}
\end{itemize}


\section{Applications of Graph Spanners}\label{SEC:Applications}

Lastly, we survey some of the theoretical and practical applications of spanners in various fields.




\noindent \paragraph{Distributed Computing} A central challenge in distributed computing is message passing between vertices or nodes in an efficient manner.
Naively, nodes could constantly broadcast everything they know to all their neighbors -- thus propagating information around the system fairly quickly -- but this requires high information throughput and processing.
Often, a better idea is to build a sparse spanner of the network, greatly reducing the demands of sharing information at the price of only minor latency in propagation.
This paradigm appears, for example, in communication networks in parallel computing~\cite{Bhatt1986Optimal}, synchronizers~\cite{awerbuch1985complexity}, broadcasting~\cite{Peleg2000Distributed}, arrow distributed queuing protocols~\cite{Busch:2010:CCH:1866465.1866565}, wireless sensor networks~\cite{5061918}, online load balancing~\cite{awerbuch1992online}, and motion planning in robotics control optimization~\cite{DBLP:journals/ijcga/CaiK97}.
In all of these applications, the quality of the spanner that can be built controls the above tradeoff between latency and communication complexity.

\noindent \paragraph{Network Routing} Another related class of applications arises in the task of passing messages throughout a network (this is related to distributed computing in some ways, but often generalized or abstracted differently in the literature).
A classic example is where one wishes to pass packets or messages around the internet in a timely manner, but these ``messages'' can generally be construed quite broadly, e.g., as cars in a road network.
The routing challenge is more involved than simply computing a spanner, as solutions must balance the efficiency of the chosen paths with the amount of information stored at each node and in the ``packet header'' being passed around the network.
However, many modern routing algorithms exploit spanners as a useful step along the way.
See ~\cite{abraham2018ramsey, roditty2002roundtrip, TZ01, censor2018sparsest,elkin2014light} and references within for further information.

\noindent \paragraph{Computational Biology} To understand and model the history of organisms, Biologists have developed various ways to measure similarity between species, based either on their DNA sequence or on their level of interaction in an environment.
But given a matrix of pairwise similarities, how can these be arranged into a graph that succinctly captures biological history?
It turns out a good approach is to treat the matrix as a weighted graph (with all possible edges), and then build a spanner of the graph to determine its ``most important'' connections.
This application has been used, for example, to measure genetic distance between contemporary species~\cite{BANDELT1986309} and to visualize interactions between various proteins.~\cite{russel2005exploring}

\noindent \paragraph{Theoretical Applications} Since many graph algorithms have a runtime dependence on the number of edges in the input graph, one can often preprocess an input graph into a spanner in order to improve speed at the cost of a little accuracy.
One area where this has been applied is in polynomial time approximation schemes (PTAS) for the traveling salesman problem~\cite{borradaile2019greedy}.
Since improved spanners are often available on special graph classes, so too the PTAS can be improved.
Some of these graph classes include planar graphs, bounded-genus graphs, unit disk graphs, and bounded path width graphs~\cite{elkin2014light}.

Another class of theoretical applications show that spanners efficiently capture the ``backbone'' of the network and thus often implicitly represent other important network properties besides distances~\cite{alvarez2018mixed}.
This frequently includes the network spectrum, and accordingly spanners have been used to construct spectral sparsifiers~\cite{kapralov2014spanners}.

Other miscellaneous applications of spanners include computing distances and shortest paths between points embedding in a geometric space~\cite{bollobas2005sparse, chandra1992new}, testing graph properties (approximately) in sublinear time~\cite{chlamtac2016}, the facility location problem~\cite{choudhary2018diameter}, and key management in access control hierarchies~\cite{berman2011improved}.
Closer applications of spanners themselves include cycle covers of graphs~\cite{cai1995tree}, for which the extremal instances can often be decomposed into a union of tree spanners, and \emph{labelling schemes} in which the vertices of a graph are labelled in such a way that one can (approximately) recover the distance between nodes by inspecting only their labels~\cite{baswana2010additive}.



\section{Conclusion}\label{SEC:Conclusion}
Since their advent, graph spanners have become an important object of study and have found a broad range of applications as well as a rich theory.  The goal of this survey was to introduce readers to the overarching techniques that have been employed to compute various types of spanners, and to tabulate the state-of-the-art algorithmic bounds in an accessible way.   Additionally, we have posed several open problems along the way which may be of interest to experts and non-experts alike. The literature on spanners continues to grow at a rapid pace, and is unlikely to stop in the near future.  Nonetheless, it is hoped that this survey will provide a guiding reference for the state of the field for some time. 

{\color{black}
\section{Acknowledgments}
We would like to thank the anonymous reviewers for their valuable feedback which substantially improved the presentation of the material in this survey.  Additionally, we take pleasure in thanking Michael Elkin for helpful comments, especially related to Sections \ref{SEC:LowerBounds} and \ref{SEC:Distributed}.}

\section{References}
\bibliography{references}

\newpage\appendix
\section{Tables of Spanner Guarantees}

\begin{table}[H]
     \centering
     \begin{tabular}{|c|c|c|c|c|}
     \hline
         Stretch ($t$)  & Size: $O(|E'|)$ & Weight: $O(W(E'))$ & Time $O(\cdot)$ & Reference  \\
         \hline 
         \hline
         $3$ & $O(n^{\frac23})$ & ** & $\log n$ & \cite{derbel2007deterministic} \\ \hline
          $2k-1$  & $n^{1+\frac1k}$ & $(1+\frac{n}{2k})$ & $m(n^{1+\frac1k}+n\log n)$ &~\cite{althofer1993sparse}\\ 
         \hline
          $2k-1$  & $n^{1+\frac1k}$ & $kn^{1+\frac1k}$ & $km$ & \cite{Surender03ALP, baswana2007simple} \\ 
         \hline
         $2k-1$ & $O(kn^{1+\frac1k})$ & ** & ** & \cite{derbel2008locality} \\ 
         \hline
         $2k-1$ & $O(\min(m,kn^{1+\frac1k}))$  & ** & m & \cite{baswana2008streaming} \\ 
         \hline
         $(2k-1)(1+\eps)$  & $n^{1+\frac1k}$ & $kn^\frac1k \left(\frac1\eps\right)^{1+\frac1k}$ & $m(n^{1+\frac1k}+n\log n)$ &~\cite{chandra1992new}\\
         \hline
          $(2k-1)(1+\eps)$  & $n^{1+\frac1k}$ & $kn^\frac1k \left(\frac1\eps\right)^{1+\frac1k}$ & $n^\frac1k\left(1+\frac{k}{\eps^{1+\frac1k}\log k}\right)$ &~\cite{elkin2014light}\\
          \hline
          $(2k-1)(1+\eps)$  & $n^{1+\frac1k}\left(\frac1\eps\right)^{2+\frac1k}$ & $kn^\frac1k\left(\frac1\eps\right)^{2+\frac1k}$ & 
          $kn^{2+\frac1k}$ &~\cite{elkin2016fast}\\ 
          \hline
           $(2k-1)(1+\eps)$  & $n^{1+\frac1k}\left(k+\frac1\eps\right)^{2+\frac1k}$ & $kn^\frac1k\left(\frac1\eps\right)^{2+\frac1k}$ & 
          $km+\min\{n\log n, m\alpha(n)\}$ &~\cite{elkin2016fast}\\
         \hline
          $(2k-1)(1+\eps)$  & $n^{1+\frac1k}$ & $n^{1+\frac1k}\left(\frac1\eps\right)^{2+\frac1k}$ & ** &~\cite{chechik2018near}\\
          \hline
     \end{tabular}
     \caption{Guarantees for Multiplicative Spanners. Here, $n=|V|$, $m=|E|$, $\alpha(n)$ is the inverse Ackermann function, and all weight bounds are multiplied by $W(\textnormal{MST}(G))$.  Complexities not given in the paper are denoted by **. }
     \label{TAB:MultiplicativeSpanners}
 \end{table}
 
 \begin{table}[H]
     \centering
     \begin{tabular}{|c|c|c|c|c|}
     \hline
         Additive Error ($\beta$)  & Size: $O(|E'|)$ &  Time $O(\cdot)$ & Reference  \\
         \hline\hline
          2 & $\widetilde{O}(n^{\frac{3}{2}})$ &  $O(n^\frac52\sqrt{ \log n})$ & \cite{Aingworth99fast}\\
          \hline
          2 & $n^\frac{3}{2}$ &  $n^\frac52$ & \cite{Aingworth99fast}\\
          \hline
          2 & $n^\frac{3}{2}\log^\frac12n$ &  $n^2\log^2n$ & \cite{dor2000all}\\
          \hline
          2 & $n^\frac{3}{2}$ &  $n^2$ & \cite{knudsen2017additive}\\ 
         \hline
         4 & $O(n^{\frac75} \log^\frac{4}{5}(n)  )$  & $n^\frac{3}{5}\log^\frac{1}{5}n+D$ & \cite{censor2016distributed} \\ 
          \hline
          4 & $\widetilde{O}(n^\frac{7}{5})$ &  ** & \cite{chechik2013new}\\
          \hline
          6 & $n^\frac{4}{3}$ &  $mn^\frac{2}{3}$ & \cite{baswana2010additive}\\
         \hline
         6 & $n^\frac{4}{3}\log^3 n$ &  $n^2\log^2n$ & \cite{woodruff2010additive}\\
         \hline
         8 & $n^\frac{4}{3}$ &  $n^2$ & \cite{woodruff2010additive}\\
         \hline
         $(1,2k+4\ell)$ & $\Gamma_k(G)+n^{1+\frac{1}{k+\ell+1}}$ & $mn^{1-\frac{l}{k+l+1}}$ & \cite{baswana2010additive}\\ 
          \hline
         $(1,O( \sqrt{ d(u,v)}))$ & $\widetilde{O}(n^{1+\frac{3}{17}})$  & ** & \cite{chechik2013new} \\
         \hline
     \end{tabular}
     \caption{Guarantees for Additive Spanners. Here, $n=|V|$, $m=|E|$.  Complexities not given in the paper are denoted by **.  The term $D$ is the diameter of the graph. }
     \label{TAB:AdditiveSpanners}
 \end{table}
 
 \begin{table}[H]
     \centering
     \begin{tabular}{|c|c|c|c|}
     \hline
         ($\alpha,\beta$)  & Size: $O(|E'|)$ & Time $O(\cdot)$ & Reference  \\
         \hline\hline
          $(k - 1, 2k - O(1))$ & $kn^{1+\frac{1}{k}}$ & $mn^{1-\frac{1}{k}}$ & \cite{elkin2004}\\
         \hline
          $(k, k - 1)$ & $kn^{1+\frac{1}{k}}$ & $km$ & \cite{baswana2010additive} \\
         \hline
          $(1 + \epsilon, 4)$ & $\epsilon^{-1}n^\frac{4}{3}$ & $mn^\frac{2}{3}$ & \cite{elkin2004} \\
         \hline
         $(1 + \epsilon, \beta)$ & $\beta n^{1+\frac{1}{k}}$ & $n^{
2+\frac{1}{t}}$ & \cite{elkin2004}, $\beta = \beta(k, \epsilon, t)$ \\
         \hline
          $(1 + \epsilon, \beta')$ & $\beta'n^{1+\frac{1}{k}}$ & $mn^{\rho}$ & \cite{elkin2005computing}, $\beta' = \beta'(k, \epsilon, \rho)$ \\
         \hline
          $(1 + \epsilon, \beta'')$ & $kn^{1+\frac{1}{k}}$ & $kmn^{\frac{1}{k}}$ & \cite{thorup2006spanners}, $\beta'' = \beta''(k, \epsilon)$ \\
         \hline
     \end{tabular}
     \caption{Guarantees for $(\alpha,\beta)$--spanners. Here, $n=|V|$, $m=|E|$. In \cite{elkin2005computing}, $\rho > \frac{1}{2k}$ is required. }
     \label{TAB:AlphaBetaSpanners}
 \end{table}
 
 \begin{table}[H]
     \centering
     \begin{tabular}{|c|c|c|c|c|}
     \hline
         Spanner & ($\alpha,\beta$)  & Size: $O(|E'|)$ & Reference  \\
         \hline\hline
         Pairwise & $(1+\varepsilon,4)$ & $n|P|^\frac14\sqrt{\log\frac{n}{\eps}}$ &  \cite{cygan13} \\ \hline
         Pairwise & $(1,4k)$ & $n^{1+\frac{1}{2k+1}}((4k+5)|P|)^\frac{k}{4k+2}$ & \cite{cygan13}\\ \hline
         Pairwise & $(1,2)$ & $n|P|^\frac14$ &  \cite{kavitha2017new} \\
         \hline
          Pairwise & $(1, 2)$ & $O(n|P|^\frac{1}{3})$ & \cite{abboud2016error}\\
         \hline
         Pairwise & $(1,2)$ & $\widetilde{O}(n|P|^\frac13)$ & \cite{kavitha2013small} \\ 
         \hline
         Pairwise  & $(1,2)$ & $O(n |P|^{\frac13} \log^{\frac{2}{3}}(n)  )$  & \cite{censor2016distributed} \\
         \hline
          Pairwise & $(1, 4)$ & $\widetilde{O}(n|P|^\frac{2}{7})$ & \cite{kavitha2017new} \\
         \hline
          Pairwise & $(1, 6)$ & $n|P|^\frac{1}{4}$ & \cite{kavitha2017new} \\
         \hline
          Sourcewise & $(1, 2)$ & $\widetilde{O}(n(n|S|)^\frac{1}{4})$ & \cite{kavitha2013small} \\
         \hline
         Sourcewise  & $(1,2)$ & $O(n^{\frac54} |S|^{\frac14} \log^{\frac{3}{4}}(n)  )$  & \cite{censor2016distributed} \\
         \hline
         Sourcewise & $(1,2)$ & $\widetilde{O}(n(n|S|)^\frac14)$ & \cite{kavitha2013small} \\
         \hline
          Sourcewise & $(1, 4)$ & $\widetilde{O}(n(n|S|)^\frac{2}{9})$ & \cite{kavitha2017new} \\
         \hline
          Sourcewise & $(1, 6)$ & $\widetilde{O}(n(n|S|)^\frac{1}{5})$ & \cite{kavitha2017new} \\
         \hline
          Sourcewise & $(1, 2k), \forall k \geq 1$ & $\widetilde{O}(n(n|S|^k)^{1/(2k+2)})$ & \cite{Parter14} \\
         \hline
         Sourcewise $(S\times V)$ & $(1,2k)$ & $n^{1+\frac{1}{2k+1}}(k|S|)^\frac{k}{2k+1}$ &  \cite{cygan13}\\
         \hline
          Subsetwise & $(1, 2)$ & $n\sqrt{|S|}$ & \cite{cygan13,elkincomm,pettie2009low} \\
         \hline 
         Subsetwise  & $(1,2)$ & $O(n |S|^{\frac23} \log^{\frac{2}{3}}(n)  )$  & \cite{censor2016distributed} \\
         \hline
          Source-Target & $(1, 4)$ & $n(|S||T|)^\frac{1}{4}$ & \cite{kavitha2017new} \\
         \hline
          Source-Target & $(1, 2k), \forall k \geq 1$ & $n(|S|^{k+1}|T|)^{1/(2k+3)}$ & \cite{kavitha2017new} \\
         \hline
     \end{tabular}
     \caption{Guarantees for pairwise and subsetwise spanners. In~\cite{Klein06}, an algorithm has been provided to compute multiplicative subsetwise spanner with constant approximation ratio for planar graphs.}
     \label{TAB:Pairwise}
 \end{table}
 
  \begin{table}[H]
     \centering
     \begin{tabular}{|c|c|c|c|c|c|}
     \hline
         Fault ($f$, E/V) & $(\alpha,\beta)$ & Size: $O(|E'|)$ & Time $O(\cdot)$ & Reference  \\
         \hline\hline
           $f$, V & $(2k-1,0)$  & $f^2k^{f+1}n^{1+\frac1k}\log^{1-\frac1k}n$ &  $f^2k^{f+2}n^{3+\frac1k}\log^{1-\frac1k}n$ &~\cite{chechik2010fault}$^*$\\
         \hline
         $f$, E & $(2k-1,0)$  & $fn^{1+\frac1k}$ &  $f^2k^{f+2}n^{3+\frac1k}\log^{1-\frac1k}n$ &~\cite{chechik2010fault}$^*$\\
         \hline
         $f$, V & $(2k-1,0)$  & $f^{2-\frac1k}n^{1+\frac1k}\log n$ &  ** &~\cite{dinitz2011fault}\\
         \hline
         $f$, V & $(2k-1,0)$  & $f^{1-\frac1k}n^{1+\frac1k}2^{O(k)}$ &  ** &~\cite{bodwin2018optimal}\\
          \hline
          $f$, V & $(2k-1,0)$  & $f^{1-\frac1k}n^{1+\frac1k}$ &  ** &~\cite{bodwin2018trivial}\\
          \hline
          $f$, V & $(3,2)$  & $f^\frac43 n^\frac43 $ &  $\widetilde{O}(f^2m)$ &~\cite{ausiello2010computing}\\
          \hline
          $f$, V & $(2,1)$  & $fn^\frac32 $ &  $\widetilde{O}(fm)$ &~\cite{ausiello2010computing}\\
          \hline
          $f$, E, V & $(1+\eps,\beta)$  & $f\beta(\frac\beta\eps)^{2f}n^{1+\frac1k}\log n $ &  ** &~\cite{braunschvig2015fault}$^*$\\
          \hline
          $f$, E, V & $(k+\eps,k-1)$  & $f(\frac{k-1}{\eps})^{2f}n^{1+\frac1k}\log n $ &  ** &~\cite{braunschvig2015fault}$^*$\\
          \hline
           $1$, E & $(1,2)$  & $n^\frac53 $ &  ** &~\cite{bilo2015improved}\\
          \hline
          $1$, E & $(1,4)$  & $n^\frac32 $ &  ** &~\cite{bilo2015improved}\\
          $1$, E & $(1,10)$  & $\widetilde{O}(n^\frac75) $ &  ** &~\cite{bilo2015improved}$^*$\\
          \hline
          $1$, E & $(1,14)$  & $n^\frac43 $ &  ** &~\cite{bilo2015improved}\\
          \hline
          $1$, V & $(1,2)$  & $n^\frac53 $ &  ** &~\cite{parter2017fault}\\
          \hline
          $1$, V & $(1,6)$  & $n^\frac32 $ &  ** &~\cite{parter2017fault}\\
          \hline
          $f$, E, V & $(2k-1,0)$  & $f^{1-\frac1k}n^{1+\frac1k}2^{O(k)} $ &  ** &~\cite{bodwin2018optimal}\\
          \hline
          $f$, E, V & $(2k-1,0)$  & $f^{1-\frac1k}n^{1+\frac1k} $ &  ** &~\cite{bodwin2018trivial}\\
          \hline
     \end{tabular}
     \caption{Guarantees for Fault Tolerant Spanners. Here, $n=|V|$, $m=|E|$, $f$ is the number of faults allowed, and E or V in the first column denotes if the guarantee is for edge-fault or vertex-fault tolerance. Complexities not given in the paper are denoted by **.  Spanners constructed with high probability are denoted by $^*$ next to the citation.}
     \label{TAB:FaultTolerant}
 \end{table}

\end{document}